\newtheorem{theorem}{Theorem}[section]
\newtheorem{lemma}[theorem]{Lemma}
\newtheorem{proposition}[theorem]{Proposition}
\newtheorem{corollary}[theorem]{Corollary}
\newenvironment{proof}[1][Proof]{\begin{trivlist}
\item[\hskip \labelsep {\bfseries #1}]}{\end{trivlist}}
\newenvironment{example}[1][Example]{\begin{trivlist}
\item[\hskip \labelsep {\bfseries #1}]}{\end{trivlist}}
\newenvironment{remark}[1][Remark]{\begin{trivlist}
\item[\hskip \labelsep {\bfseries #1}]}{\end{trivlist}}
\newcommand{\Rmnum}[1]{\expandafter\@slowromancap\romannumeral #1@}
\begin{document}
%
% paper title
% can use linebreaks \\ within to get better formatting as desired
\title{Linear codes with two or three weights   from weakly regular bent functions}

\author{Chunming~Tang, Nian Li, Yanfeng~Qi,
Zhengchun Zhou, Tor Helleseth
% <-this % stops a space
\thanks{C. Tang is with School of Mathematics and Information, China West Normal University, Sichuan Nanchong, 637002, China. e-mail: tangchunmingmath@163.com
}
\thanks{
N. Li and T. Helleseth are with the Department of Informatics, University of Bergen, N-5020 Bergen, Norway. e-mail: nian.li@ii.uib.no,
tor.helleseth@ii.uib.no.
}
\thanks{Y. Qi is with School of Science, Hangzhou Dianzi University, Hangzhou, Zhejiang, 310018, China.
e-mail: qiyanfeng07@163.com.
}
\thanks{
Z. Zhou is with the School of Mathematics, Southwest Jiaotong University, Chengdu, 610031, China. e-mail:
zzc@swjtu.edu.cn.}
}

% <-this % stops a space

\maketitle

\begin{abstract}
Linear codes with few weights
have applications in consumer electronics, communication, data storage system, secret sharing, authentication codes, association schemes, and strongly regular graphs.
This paper  first generalizes the method of constructing two-weight and three-weight linear codes
of Ding et al. \cite{DD2015} and Zhou
 et al. \cite{ZLFH2015} to general weakly regular bent functions and determines the weight distributions of these linear codes. It solves the open problem of Ding et al.  \cite{DD2015}.
Further, this paper constructs new linear
codes with two or three weights and presents the weight distributions of these codes.
They  contains some optimal codes meeting certain  bound on linear codes.
\end{abstract}
% IEEEtran.cls defaults to using nonbold math in the Abstract.
% This preserves the distinction between vectors and scalars. However,
% if the journal you are submitting to favors bold math in the abstract,
% then you can use LaTeX's standard command \boldmath at the very start
% of the abstract to achieve this. Many IEEE journals frown on math
% in the abstract anyway.

% Note that keywords are not normally used for peerreview papers.
\begin{IEEEkeywords}
Linear codes, weight distribution, weakly regular bent functions, cyclotomic fields, secret sharing schemes
\end{IEEEkeywords}

% For peer review papers, you can put extra information on the cover
% page as needed:
% \ifCLASSOPTIONpeerreview
% \begin{center} \bfseries EDICS Category: 3-BBND \end{center}
% \fi
%
% For peerreview papers, this IEEEtran command inserts a page break and
% creates the second title. It will be ignored for other modes.
\IEEEpeerreviewmaketitle

\section{Introduction}
Throughout this paper, let
$p$ be an odd prime and $q=p^m$, where $m$ is a positive integer.  An $[n,k,d]$ code
$\mathcal{C}$ is a $k$-dimension subspace of
$\mathbb{F}_p^n$ with minimum Hamming distance $d$.  Let $A_i$ be the number of codewords with Hamming weight $i$ in $\mathcal{C}$. The polynomial
$1+A_1z+\cdots+A_nz^n$ is called the weight enumerator of $\mathcal{C}$ and $(1,A_1,\cdots,A_n)$ called the weight distribution of $\mathcal{C}$.
The minimum distance $d$
determines the error correcting capability of $\mathcal{C}$. The weight distribution contains
important information for estimating the probability of error detection and correction.  Hence,
 the weight distribution attracts much attention in coding theory and much work focus on the determination of the weight distributions of linear  codes.
 Let $t$ be the number of
nonzero $A_i$ in the weight distribution. Then
the code $\mathcal{C}$ is called a $t$-weight code.
Linear codes can be applied in  consumer electronics, communication and data storage system.
Linear codes with few weights are of important in secret sharing \cite{CDY2005,YD2006}, authentication codes \cite{DW2005}, association schemes \cite{CG1984} and strongly regular graphs
\cite{CK1986}.

Let $F(x)\in \mathbb{F}_q[x]$ and $f(x)=
\mathrm{Tr}_1^m(F(x))$, where $\mathrm{Tr}_1^m$
is the trace function from $\mathbb{F}_q$ to
$\mathbb{F}_p$. Let
$D=\{x\in \mathbb{F}_q^{\times}: f(x)=0\}$.
Denote $n=\#(D)$ and $D=
\{d_1,d_2,\cdots,d_n\}$. Then  a linear code of
length $n$ defined over $\mathbb{F}_p$ is
$$
\mathcal{C}_D=
\{(\mathrm{Tr}_1^m(\beta d_1),
\mathrm{Tr}_1^m(\beta d_2),\cdots, \mathrm{Tr}_1^m(\beta d_n)): \beta
\in \mathbb{F}_q \},
$$
where $D$ is called the defining set of $\mathcal{C}_D$.

Note that by the choice of $D$ many linear codes can be constructed \cite{DD2014,DLN2008, DN2007}.
Ding et al.  \cite{Ding2015,DD2015} and Zhou et al.
\cite{ZLFH2015} constructed some classes
of two-weight and three-weight linear codes.
Ding et al. \cite{DD2015} presented the
weight distribution of
$\mathcal{C}_D$ for the case $F(x)=
x^2$ and proposed  an open problem
on how to determine the weight distribution of  $\mathcal{C}_{D}$ for general planar functions
$F(x)$.
Zhou et al. \cite{ZLFH2015} gave the weight distribution of
$\mathcal{C}_D$ for quadratic planar functions
$F(x)$.

In this paper, we consider linear codes
with two 
 or three weights from
weakly regular bent functions. First,
we generalize the method of constructing
two-weight and three-weight linear codes
of Ding et al. \cite{DD2015} and Zhou
 et al. \cite{ZLFH2015} to general weakly regular bent functions. The weight distributions of these linear codes are determined by the theory of cyclotomic fields. And we solve the open problem of
Ding et al. \cite{DD2015}.
Further, by choosing the defining sets different from that of
Ding et al. \cite{DD2015} and Zhou
 et al. \cite{ZLFH2015}, we construct new linear
codes with two  or three weights and present the weight distributions of these codes.
The weight distributions of  linear codes constructed  in this paper are completely  determined by the sign of the Walsh transform of  weakly regular bent functions.

This paper is organized as follows: Section
\Rmnum{2}
introduces cyclotomic fields, weakly regular bent
functions and exponential sums. Section
\Rmnum{3}
generalizes the method
of Ding et al. \cite{DD2015} and Zhou
 et al. \cite{ZLFH2015} to general weakly regular bent functions and
determines the weight distributions of these linear codes. Section \Rmnum{4}
constructs
new classes of two-weight and three-weight
linear codes. Section \Rmnum{5} determines the sign of the Walsh transform of some weakly regular bent functions.
Section \Rmnum{6} makes a conclusion.

\section{Preliminaries}

In this section, we state some basic facts on
cyclotomic fields, weakly regular bent functions and exponential sums. These results will be used in the rest of the paper for linear codes with few weights. First some notations are given.  Let $\mathbb{Z}$ be the rational integer ring and $Q$  the rational field.
Let $\eta$ be the quadratic character of
$\mathbb{F}_q^{\times}$ such that
$\eta(a)=
\left\{
  \begin{array}{ll}
    1, & \hbox{$a^{(q-1)/2}=1$;} \\
    -1, & \hbox{$a^{(q-1)/2}=-1$.}
  \end{array}
\right. ~(a\in \mathbb{F}_q^{\times})$.
Let $ a \overwithdelims () p $  be the Legendre symbol for
$1\leq a\leq p-1$.
Let $p^*={-1 \overwithdelims () p } p
=(-1)^{(p-1)/2}p$ and  $\zeta_p=e^{\frac{2\pi \sqrt{-1}}{p}}$ be  the primitive $p$-th
root of unity.

\subsection{Cyclotomic field $Q(\zeta_p)$}
Some results on
cyclotomic field $Q(\zeta_p)$ \cite{IR1990} are given in the following lemma.
\begin{lemma}\label{2A1}
{\rm (i)} The ring of integers
in $K=Q(\zeta_p)$ is $\mathcal{O}_K=
\mathbb{Z}(\zeta_p)$ and $\{
\zeta_p^i: 1\leq i\leq p-1\}$
is an integral basis of $\mathcal{O}_K$,
where $\zeta_p=e^{\frac{2\pi \sqrt{-1}}{p}}$ is the primitive $p$-th root of unity.

{\rm (ii)} The field extension $K/Q$
is Galois of degree $p-1$ and the Galois
group $Gal(K/Q)=\{\sigma_a:
a\in (\mathbb{Z}/p\mathbb{Z})^{\times}\}$, where
the automorphism $\sigma_a$ of $K$ is defined by
$\sigma_a(\zeta_p)=\zeta_p^a$.

{\rm (iii)} The field $K$ has a unique
quadratic subfield $L=Q(\sqrt{p^*})$ where
$p^*={-1 \overwithdelims () p}p = (-1)^{(p-1)/2}p$, where
${a \overwithdelims () p}$ is the Legendre symbol for
$1\leq a\leq p-1$. For $1\leq a\leq p-1$,
$\sigma_a(\sqrt{p^*}) =
{a \overwithdelims () p}\sqrt{p^*}$. Therefore, the Galois group
$Gal(L/Q)$ is $\{1,\sigma_{\gamma}\}$, where
$\gamma$ is any quadratic nonresidue in
$\mathbb{F}_p$.
\end{lemma}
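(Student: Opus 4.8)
The plan is to treat the three parts in order, leaning on the standard theory of cyclotomic extensions. For part (i), I would first identify the minimal polynomial of $\zeta_p$ over $Q$ as the $p$-th cyclotomic polynomial $\Phi_p(x) = (x^p-1)/(x-1) = 1 + x + \cdots + x^{p-1}$. Its irreducibility follows by applying Eisenstein's criterion at $p$ to the shifted polynomial $\Phi_p(x+1)$, whose constant term is $\Phi_p(1) = p$ and whose remaining low-order coefficients are the binomial coefficients $\binom{p}{k}$, all divisible by $p$. This immediately gives $[K:Q] = p-1$ and shows that $\{1, \zeta_p, \ldots, \zeta_p^{p-2}\}$, equivalently $\{\zeta_p^i : 1 \le i \le p-1\}$, is a $Q$-basis of $K$.

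The substantive content of (i) is that $\mathcal{O}_K$ equals $\mathbb{Z}[\zeta_p]$ and not merely a larger order. I would compute the discriminant of this basis as $\pm N_{K/Q}(\Phi_p'(\zeta_p)) = (-1)^{(p-1)/2}p^{p-2}$, using $\mathrm{Tr}_{K/Q}(\zeta_p^j) = -1$ for $p \nmid j$. Since $\mathrm{disc}(\mathbb{Z}[\zeta_p]) = [\mathcal{O}_K : \mathbb{Z}[\zeta_p]]^2\, d_K$, the index is a power of $p$. To rule out $p$ dividing it, I set $\lambda = 1 - \zeta_p$: the factorization $p = \Phi_p(1) = \prod_{a=1}^{p-1}(1 - \zeta_p^a)$, together with the fact that each ratio $(1-\zeta_p^a)/(1-\zeta_p) = 1 + \zeta_p + \cdots + \zeta_p^{a-1}$ is a unit in $\mathbb{Z}[\zeta_p]$, gives $p = u\lambda^{p-1}$ for a unit $u$, so $p$ is totally ramified with $(p) = (\lambda)^{p-1}$. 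Because $\lambda$ is a root of $\Phi_p(1-x)$, which is Eisenstein at $p$, and $\mathbb{Z}[\lambda] = \mathbb{Z}[\zeta_p]$, the prime $p$ cannot divide the index; combined with the discriminant computation this forces the index to be $1$, i.e. $\mathcal{O}_K = \mathbb{Z}[\zeta_p]$. I expect this integral-closure step to be the main obstacle, as it is the only place requiring the ramification/Eisenstein machinery rather than elementary field theory.

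For part (ii), I would observe that $K$ is the splitting field over $Q$ of the separable polynomial $x^p - 1$, hence Galois of degree $p-1$ by (i). Any $\sigma \in \mathrm{Gal}(K/Q)$ must permute the primitive $p$-th roots of unity, so $\sigma(\zeta_p) = \zeta_p^a$ for a unique $a \in (\mathbb{Z}/p\mathbb{Z})^{\times}$; the assignment $\sigma \mapsto a$ is an injective group homomorphism into $(\mathbb{Z}/p\mathbb{Z})^{\times}$, and comparing orders ($p-1$ on both sides) shows it is an isomorphism. Thus every $\sigma_a$ determined by $\sigma_a(\zeta_p) = \zeta_p^a$ indeed occurs.

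For part (iii), the group $(\mathbb{Z}/p\mathbb{Z})^{\times}$ is cyclic of even order $p-1$, so it has a unique subgroup of index $2$, namely the quadratic residues; by the Galois correspondence from (ii), $K$ therefore has a unique quadratic subfield $L$, the fixed field of that subgroup. To identify $L$, I would introduce the quadratic Gauss sum $g = \sum_{a=1}^{p-1}\left(\frac{a}{p}\right)\zeta_p^a$, which satisfies $g^2 = (-1)^{(p-1)/2}p = p^*$, so that $\sqrt{p^*} = g \in K$ and $L = Q(\sqrt{p^*})$. Reindexing the sum by $c \equiv ab \pmod p$ and using $\left(\frac{a^{-1}}{p}\right) = \left(\frac{a}{p}\right)$ yields $\sigma_a(g) = \sum_b \left(\frac{b}{p}\right)\zeta_p^{ab} = \left(\frac{a}{p}\right)\sum_c \left(\frac{c}{p}\right)\zeta_p^c = \left(\frac{a}{p}\right)g$, whence $\sigma_a(\sqrt{p^*}) = \left(\frac{a}{p}\right)\sqrt{p^*}$. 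Finally, $\mathrm{Gal}(L/Q)$ has order $2$, and its nontrivial element is the restriction of any $\sigma_\gamma$ with $\gamma$ a quadratic nonresidue, since then $\left(\frac{\gamma}{p}\right) = -1$ makes $\sigma_\gamma$ act nontrivially on $\sqrt{p^*}$; this gives $\mathrm{Gal}(L/Q) = \{1, \sigma_\gamma\}$.
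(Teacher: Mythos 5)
Your proposal is correct, but there is nothing in the paper to compare it against: the paper states this lemma purely as background and gives no proof, simply citing Ireland and Rosen \cite{IR1990}. What you have written is, in effect, the standard proof contained in such references: Eisenstein's criterion applied to $\Phi_p(x+1)$ for irreducibility, the discriminant computation combined with total ramification of $p$ (via $p=u\lambda^{p-1}$, $\lambda=1-\zeta_p$, and the Eisenstein property of $\Phi_p(1-x)$) to pin down $\mathcal{O}_K=\mathbb{Z}[\zeta_p]$, the splitting-field argument for (ii), and the quadratic Gauss sum for (iii). Two small points you assert without proof are both standard and easily filled: the identity $g^2=(-1)^{(p-1)/2}p$ (which follows, e.g., from $g\,\sigma_{-1}(g)=p$ together with $\sigma_{-1}(g)=\left(\frac{-1}{p}\right)g$ — and note that for the lemma only $g^2=p^*$ is needed, not the sign of $g$, since $Q(g)=Q(\sqrt{p^*})$ and the action formula $\sigma_a(\sqrt{p^*})=\left(\frac{a}{p}\right)\sqrt{p^*}$ holds for either choice of square root); and the passage from the power basis $\{1,\zeta_p,\dots,\zeta_p^{p-2}\}$ to the basis $\{\zeta_p^i:1\leq i\leq p-1\}$ actually stated in the lemma, which is a unimodular change of basis over $\mathbb{Z}$ coming from $1=-\sum_{i=1}^{p-1}\zeta_p^i$, so both are integral bases. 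With those remarks your argument is complete and self-contained, which is more than the paper provides; the cost is length, while the benefit is that the reader need not consult \cite{IR1990}.
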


\subsection{Weakly regular bent functions}

Let $f(x)$ be a function from $\mathbb{F}_{q}$
to $\mathbb{F}_p$~$(q=p^m)$, the  Walsh transform of $f$ is defined by
$$
\mathcal{W}_{f}(\beta):=
\sum_{x\in\mathbb{F}_{q}}\zeta_p^{f(x)
+\mathrm{Tr}_{1}^{m}(\beta x)},
$$
where $\zeta_p=e^{2\pi \sqrt{-1}/p}$ is the primitive
$p$-th root of unity, $\mathrm{Tr}_1^m(x)=\sum_{i=0}^{m-1}x^{p^i}$ is the trace function from $\mathbb{F}_{q}$ to $\mathbb{F}_p$, and
$\beta\in\mathbb{F}_{q}$.
 The inverse Walsh transform of such $f(x)$
gives
\begin{equation}\label{equ1}
\zeta_p^{f(x)}=\frac{1}{p^m}
\sum_{\beta\in \mathbb{F}_q}
\mathcal{W}_f(\beta)\zeta_p^{-\mathrm{Tr}_1^m(\beta x)}.
\end{equation}
The function $f(x)$ is a $p$-ary bent functions, if $|\mathcal{W}_f(\beta)|=p^{\frac{m}{2}}$ for any $\beta\in \mathbb{F}_q$.
A  bent function $f(x)$ is regular if there exists some p-ary function $f^*(x)$ satisfying $\mathcal{W}_f(\beta)=p^{\frac{m}{2}}\zeta_p^{f^*(\beta)}$
for any $\beta \in \mathbb{F}_{q}$.
A  bent function $f(x)$ is weakly regular if
there exists a complex $u$ with unit magnitude
satisfying that
$\mathcal{W}_f(\beta)=up^{\frac{m}{2}}\zeta_p^{f^*(\beta)}$
for some function $f^*(x)$.
Such function $f^*(x)$ is called the dual of
$f(x)$. From \cite{HK2006,HK2010}, a weakly regular bent function
$f(x)$ satisfies that
\begin{equation}\label{equ2}
\mathcal{W}_f(\beta)=\varepsilon \sqrt{p^*}^{m} \zeta_p^{f^*(\beta)},
\end{equation}
where $\varepsilon =\pm 1$ is called the sign of
the Walsh Transform of $f(x)$ and $p^*={-1 \overwithdelims () p}p$.
From Equation (\ref{equ1}), for the weakly regular bent function $f(x)$,
we have
$\sum_{\beta\in \mathbb{F}_q}\zeta_p^{f^*(\beta)-
\mathrm{Tr}_{1}^{m}(\beta x)}
=\varepsilon p^m \zeta_p^{f(x)}/\sqrt{p^*}^{m}$.
Note that $p^m={-1 \overwithdelims () p}^m\sqrt{p^*}^{2m}$, we have
\begin{equation}\label{equ3}
\mathcal{W}_{f^*}(-x)={-1 \overwithdelims () p}^m\varepsilon \sqrt{p^*}^{m}
\zeta_p^{f(x)}, x\in \mathbb{F}_q.
\end{equation}
The dual of a weakly regular bent function is also
weakly regular bent and $f^{**}(x)=f(-x)$.
The sign of the Walsh transform of $f^*$ is
${-1 \overwithdelims () p}^m\varepsilon$.
Some results on weakly regular bent functions can be found in \cite{FL2007,HHKWX2009,HK2006,HK2007, HK2010,KSW1985}.

The construction of bent functions is an interesting
and hot research topic. A class of bent functions is derived from planar functions. A function mapping
from $\mathbb{F}_q\longmapsto \mathbb{F}_p$ is a planar function, if for any $ a\in \mathbb{F}_q^{\times}$ and $b \in
\mathbb{F}_q$, $\#\{x: f(x+a)-f(x)=b\}=1$.
A simple example of planar functions is
the square function $F(x)=x^2$. Almost known
planar functions are  quadratic functions, i.e.,
$F(x)=\sum_{0\leq i\leq j\leq m-1}a_{ij}x^{p^i+p^j}$, which are
corresponding to  semi-fields. Coulter and Matthews \cite{CM1997} introduced a class of non-quadratic
planar functions
$F(x)=x^{(3^k+1)/2}$,
where $p=3$, $k$ is odd, and $(m,k)=1$.
The derived $p$-ary functions
$f(x)=\mathrm{Tr}_1^m(\beta F(x))$ for any $\beta\in
\mathbb{F}_q^{\times}$ from these known planar functions
are all weakly regular bent functions.
It is still an open problem
whether the derived $p$-ary function
from any a planar function is  a weakly
regular bent function.
It is often difficult to determine the sign of
the Walsh transform of weakly regular bent functions.

Let $\mathcal{RF}$ be a set of
$p$-ary weakly regular bent functions with the following conditions:

${\rm (i)}$ $f(0)=0$,

${\rm (ii)}$  There exists an integer
$h$ such that $(h-1,p-1)=1$ and
$f(ax)=a^hf(x)$ for any $a\in \mathbb{F}_p^{\times}$ and $x\in \mathbb{F}_q$.

Note that $\mathcal{RF}$ contains almost known weakly regular bent functions.
We will discuss properties of functions in $\mathcal{RF}$.
\begin{lemma}\label{2blem2}
Let $a_i, b_i\in \mathbb{Z}~(0\leq i\leq p-1)$ such that
$\sum_{i=0}^{p-1} a_i
\equiv \sum_{i=0}^{p-1} b_i \mod 2$
and $\sum_{i=0}^{p-1} a_i
\zeta_p^i=\sum_{i=0}^{p-1}b_i\zeta_p^i$.
Then $a_i \equiv b_i \mod 2$.
\end{lemma}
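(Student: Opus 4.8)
The plan is to pass to the differences $c_i = a_i - b_i \in \mathbb{Z}$ for $0 \le i \le p-1$. Subtracting the two hypotheses, the assumptions become $\sum_{i=0}^{p-1} c_i \equiv 0 \pmod 2$ and $\sum_{i=0}^{p-1} c_i \zeta_p^i = 0$, while the desired conclusion $a_i \equiv b_i \pmod 2$ is exactly the assertion that every $c_i$ is even. So it suffices to show that a vanishing integer combination $\sum_{i=0}^{p-1} c_i \zeta_p^i = 0$, together with the parity of $\sum_i c_i$, forces all the $c_i$ to be even.

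The key step is to use the minimal polynomial of $\zeta_p$ over $Q$. I would consider $g(x) = \sum_{i=0}^{p-1} c_i x^i \in \mathbb{Z}[x]$, which has $\zeta_p$ as a root by the second assumption. Since the minimal polynomial of $\zeta_p$ is the $p$-th cyclotomic polynomial $\Phi_p(x) = 1 + x + \cdots + x^{p-1}$, of degree $p-1 = \deg g$, it must divide $g$ in $Q[x]$; hence $g(x) = \lambda\, \Phi_p(x)$ for some $\lambda \in Q$. Comparing coefficients, where every coefficient of $\Phi_p$ equals $1$ and every coefficient of $g$ is an integer, gives $\lambda \in \mathbb{Z}$ and $c_0 = c_1 = \cdots = c_{p-1} = \lambda$. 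Equivalently, one may argue directly from Lemma~\ref{2A1}(i): substituting $1 = -(\zeta_p + \cdots + \zeta_p^{p-1})$ and invoking the linear independence over $Q$ of the integral basis $\{\zeta_p^i : 1 \le i \le p-1\}$ yields $c_i = c_0$ for every $i$.

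Finally I would invoke the parity hypothesis. With all $c_i$ equal to the common value $\lambda$, the first assumption reads $\sum_{i=0}^{p-1} c_i = p\lambda \equiv 0 \pmod 2$; since $p$ is odd this forces $\lambda \equiv 0 \pmod 2$, so each $c_i = \lambda$ is even and $a_i \equiv b_i \pmod 2$, as claimed. I do not expect a genuine obstacle here, but the one point that must not be overlooked is that the parity condition is essential: without it, $\lambda$ could be any common odd integer, producing a nontrivial relation $\sum_i c_i \zeta_p^i = 0$ with all $c_i$ odd. The essential content is just that a vanishing $\mathbb{Z}$-linear combination of $1, \zeta_p, \ldots, \zeta_p^{p-1}$ must be a multiple of the all-ones coefficient vector, after which the oddness of $p$ transfers the parity of the sum to the common coefficient.
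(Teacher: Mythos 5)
Your proposal is correct and follows essentially the same route as the paper's own proof: both use the fact that the minimal polynomial of $\zeta_p$ over $Q$ is $1+x+\cdots+x^{p-1}$ to conclude that the differences $a_i-b_i$ all equal a common integer $\lambda$, and then use the parity hypothesis together with the oddness of $p$ (via $p\lambda=\sum_i a_i-\sum_i b_i$) to force $\lambda$ even. Your remark that the parity condition is genuinely needed (otherwise $\lambda$ could be any odd integer) is a nice sanity check but does not change the argument.
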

\begin{proof}
From $\sum_{i=0}^{p-1} a_i
\zeta_p^i=\sum_{i=0}^{p-1}b_i\zeta_p^i$, we have
$$\sum_{i=0}^{p-1}(a_i-b_i)\zeta_p^i=0.$$
The minimal polynomial of $\zeta_p$ over
$Q$ is
$1+x+\cdots + x^{p-1}$. Then
we have
$$a_0-b_0
=a_1-b_1=\cdots =a_{p-1} -b_{p-1}
=\lambda,$$
where $\lambda\in \mathbb{Z}$.
Hence,
$$p\lambda=(\sum_{i=0}^{p-1}a_i)
-(\sum_{i=0}^{p-1}b_i),
\lambda=\frac{1}{p} (\sum_{i=0}^{p-1}a_i
-\sum_{i=0}^{p-1}b_i)$$
From $\sum_{i=0}^{p-1} a_i \equiv
\sum_{i=0}^{p-1}b_i \mod 2$, we have
$$\lambda\equiv 0 \mod 2.
$$
Hence, $a_i\equiv b_i \mod 2$.
\end{proof}

\begin{lemma}\label{2lem2}
Let $p$ be an odd prime,  $p^*={-1 \overwithdelims () p} p$,
and $a\in \mathbb{F}_q^{\times}$.
Then $\sum_{x\in \mathbb{F}_q}
\zeta_p^{\mathrm{Tr}_1^m(ax^2)}=
(-1)^{m-1}\eta(a)\sqrt{p^*}^m$.
\end{lemma}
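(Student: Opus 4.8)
The plan is to recognize the exponential sum as a quadratic Gauss sum over $\mathbb{F}_q$ and then evaluate it, the sign being the only delicate point. Write $\chi(y)=\zeta_p^{\mathrm{Tr}_1^m(y)}$ for the canonical additive character, so that the quantity in question is $S_a:=\sum_{x\in\mathbb{F}_q}\chi(ax^2)$. First I would reduce $S_a$ to the standard quadratic Gauss sum $g(\eta):=\sum_{y\in\mathbb{F}_q^{\times}}\eta(y)\chi(y)$. Grouping the terms according to $y=ax^2$ and using that the equation $x^2=t$ has exactly $1+\eta(t)$ solutions for $t\in\mathbb{F}_q^{\times}$ (and one solution for $t=0$), together with $\eta(y/a)=\eta(a)\eta(y)$, one obtains $S_a=\chi(0)+\sum_{y\in\mathbb{F}_q^{\times}}\bigl(1+\eta(a)\eta(y)\bigr)\chi(y)$. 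Since $\chi$ is nontrivial, $\sum_{y\in\mathbb{F}_q}\chi(y)=0$, so the constant part cancels and we are left with $S_a=\eta(a)\,g(\eta)$. Thus everything is reduced to evaluating $g(\eta)$, with its exact sign.

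Next I would evaluate $g(\eta)$ by passing from $\mathbb{F}_p$ to $\mathbb{F}_q$. For $m=1$ this is the classical quadratic Gauss sum $\sum_{y\in\mathbb{F}_p^{\times}}\left(\frac{y}{p}\right)\zeta_p^{y}=\sqrt{p^*}$, which is Gauss's celebrated sign determination (here $\left(\frac{\cdot}{p}\right)$ is the Legendre symbol and $p^*=\left(\frac{-1}{p}\right)p$). For general $m$ I would invoke the Davenport--Hasse lifting relation. The key observation is that $\eta=\left(\frac{N(\cdot)}{p}\right)$ is the lift of the Legendre symbol along the norm $N:\mathbb{F}_q^{\times}\to\mathbb{F}_p^{\times}$ (it is a nontrivial order-two character because $N$ is surjective, hence it must equal the quadratic character $\eta$), and $\chi$ is the lift of $\zeta_p^{(\cdot)}$ along the trace. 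Davenport--Hasse then gives $-g(\eta)=\bigl(-\sqrt{p^*}\bigr)^{m}$, that is, $g(\eta)=(-1)^{m-1}\sqrt{p^*}^{\,m}$.

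Combining the two steps yields $S_a=\eta(a)(-1)^{m-1}\sqrt{p^*}^{\,m}=(-1)^{m-1}\eta(a)\sqrt{p^*}^{\,m}$, which is the asserted identity. I expect the sign to be the entire difficulty. The modulus $|g(\eta)|=\sqrt{q}$, and even the sharper relation $g(\eta)^2=\eta(-1)q=(p^*)^{m}$ (forcing $g(\eta)=\pm\sqrt{p^*}^{\,m}$), follow from routine orthogonality computations; what they do \emph{not} reveal is the factor $(-1)^{m-1}$. Pinning down this sign is precisely what the classical Gauss evaluation over $\mathbb{F}_p$ together with the Davenport--Hasse lift supply. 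An alternative to citing Davenport--Hasse by name would be to run an induction on $m$ that propagates the sign explicitly, but this amounts to the same sign-bookkeeping, which is the heart of the matter.
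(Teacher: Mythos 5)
Your proof is correct and follows essentially the same route as the paper, which proves this lemma purely by citing Theorems 5.33 and 5.15 of Lidl--Niederreiter: the former is exactly your reduction of $\sum_{x}\zeta_p^{\mathrm{Tr}_1^m(ax^2)}$ to $\eta(a)g(\eta)$ via counting solutions of $x^2=t$, and the latter is exactly your evaluation $g(\eta)=(-1)^{m-1}\sqrt{p^*}^{\,m}$ via the classical Gauss sign over $\mathbb{F}_p$ combined with the Davenport--Hasse lifting relation. In effect you have supplied, correctly, the details that the paper outsources to the citation, including the one subtle point (the sign) that mere orthogonality cannot determine.
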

\begin{proof}
This lemma can be found in
\cite[Theorem 5.15 and Theorem 5.33]{LN1983}.
\end{proof}

\begin{proposition}\label{2pro5}
If $f(x)\in \mathcal{RF}$, then
$f^*(0)=0$.
\end{proposition}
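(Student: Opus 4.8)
The plan is to evaluate the defining Walsh identity (\ref{equ2}) at $\beta=0$ and then compare two integral representations of the resulting cyclotomic integer modulo $2$, using the parity Lemma \ref{2blem2} together with the quadratic Gauss sum evaluation of Lemma \ref{2lem2}. Setting $\beta=0$ in (\ref{equ2}) gives
$$\mathcal{W}_f(0)=\sum_{x\in\mathbb{F}_q}\zeta_p^{f(x)}=\varepsilon\sqrt{p^*}^{m}\zeta_p^{f^*(0)}.$$
Writing $N_i=\#\{x\in\mathbb{F}_q:f(x)=i\}$ for $i\in\{0,1,\dots,p-1\}$, the left-hand side is $\sum_{i=0}^{p-1}N_i\zeta_p^i$. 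The key structural input is the parity of these counts. Since $f\in\mathcal{RF}$ satisfies $f(0)=0$ and, taking $a=-1$ in condition (ii), $f(-x)=(-1)^hf(x)$, the level set $\{x:f(x)=0\}$ is stable under the involution $x\mapsto-x$; as $p$ is odd this involution fixes only $x=0$ and is otherwise fixed-point-free, so $N_0$ is odd. Also $\sum_i N_i=p^m$ is odd.

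Next I would give the right-hand side the same kind of representation. Applying Lemma \ref{2lem2} with $a=1$ (so $\eta(1)=1$) yields $\sqrt{p^*}^{m}=(-1)^{m-1}\sum_{x\in\mathbb{F}_q}\zeta_p^{\mathrm{Tr}_1^m(x^2)}=(-1)^{m-1}\sum_{i=0}^{p-1}M_i\zeta_p^i$, where $M_i=\#\{x:\mathrm{Tr}_1^m(x^2)=i\}$. Because $x\mapsto x^2$ is even, the same involution argument shows $M_0$ is odd while $M_i$ is even for $i\neq0$. Hence $\sqrt{p^*}^{m}=\sum_i e_i\zeta_p^i$ with $e_0$ odd, $e_i$ even for $i\neq0$, and $\sum_i e_i=(-1)^{m-1}p^m$ odd. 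Multiplying by $\varepsilon\zeta_p^{s}$ with $s=f^*(0)$ and reindexing (indices mod $p$) gives $\varepsilon\sqrt{p^*}^{m}\zeta_p^{s}=\sum_{j=0}^{p-1}(\varepsilon e_{j-s})\zeta_p^j$, whose $j$-th coefficient $\varepsilon e_{j-s}$ is odd exactly when $j=s$.

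Finally I would invoke Lemma \ref{2blem2}. Both $\sum_i N_i\zeta_p^i$ and $\sum_j(\varepsilon e_{j-s})\zeta_p^j$ represent $\mathcal{W}_f(0)$, and their coefficient sums, $p^m$ and $\varepsilon(-1)^{m-1}p^m$, are both odd and hence congruent modulo $2$. The lemma therefore forces $N_j\equiv\varepsilon e_{j-s}\pmod 2$ for every $j$. Reading this at $j=0$: $N_0$ is odd, so $\varepsilon e_{-s}$ is odd, which occurs only for $s=0$; thus $f^*(0)=0$.

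The only genuinely delicate point is the bookkeeping needed to apply Lemma \ref{2blem2} correctly, namely writing both representations as integer combinations of $\zeta_p^0,\dots,\zeta_p^{p-1}$ and checking that their coefficient sums share the same parity. Once this is in place, the negation involution supplies the single fact ($N_0$ odd) that pins down $s$, and notably only the $a=-1$ case of the homogeneity condition (ii) is needed for this proposition.
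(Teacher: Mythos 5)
Your proof is correct, and it runs on the same engine as the paper's --- both compare two integer-coefficient representations of $\mathcal{W}_f(0)$ in the powers of $\zeta_p$ modulo $2$ via Lemma \ref{2blem2}, with the involution $x\mapsto -x$ supplying the single odd coefficient --- but your decomposition of the right-hand side is genuinely different. The paper splits into cases: for even $m$ it uses that $\varepsilon\sqrt{p^*}^{m}\in\mathbb{Z}$ directly, while for odd $m$ it factors $\sqrt{p^*}^{m}=(p^*)^{(m-1)/2}\sqrt{p^*}$ and expands the leftover $\sqrt{p^*}$ by the base-field Gauss sum $\sqrt{p^*}=1+2\sum_{s\in\mathbb{F}_p^{\times 2}}\zeta_p^{s}$; it also uses the full evenness $f(-x)=f(x)$ (from $h$ even) to write $\sum_{x\in\mathbb{F}_q}\zeta_p^{f(x)}=1+2\sum_{i}C_i\zeta_p^{i}$ via the partition $\{P_+,P_-\}$. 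You instead expand all of $\sqrt{p^*}^{m}$ at once as $(-1)^{m-1}\sum_{x\in\mathbb{F}_q}\zeta_p^{\mathrm{Tr}_1^m(x^2)}$ via Lemma \ref{2lem2}, and obtain the coefficient parities ($M_0$ odd, $M_i$ even for $i\neq 0$) from the same negation involution applied on $\mathbb{F}_q$, which removes the even/odd-$m$ case split entirely. Your route costs a bit more bookkeeping (the shift to $\varepsilon e_{j-s}$ and the check that the two coefficient sums $p^m$ and $\varepsilon(-1)^{m-1}p^m$ agree modulo $2$, both of which you handle correctly), but it buys uniformity, and it isolates a refinement the paper's argument obscures: you only need $N_0$ odd, i.e.\ stability of the zero level set under $x\mapsto -x$, which follows from $f(-x)=(-1)^hf(x)$ without invoking that $h$ is even, whereas the paper's partition argument requires $f(-x)=f(x)$ on all of $\mathbb{F}_q^{\times}$.
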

\begin{proof}
From $f(x)\in \mathcal{RF}$, there
exists an integer $h$ satisfying
$(h-1,p-1)=1$ and
$f(ax)=a^hf(x)$.
Since $p-1$ is even, $h$ is obviously even.
Then we have $f(-x)=f(x)$.
Let $\{P_+,P_{-}\}$ be a partition of
$\mathbb{F}_q^{\times}$ such that
$P_{-}=\{ -x : x\in P_+\}$.
Let $C_i=\#\{x\in P_+: f(x)=i\}$, then
$$
\sum_{x\in \mathbb{F}_q}\zeta_p^{f(x)}
=1+ 2\sum_{i=0}^{p-1} C_i \zeta_p^i
=\varepsilon \sqrt{p^*}^{m}
\zeta_p^{f^*(0)}.
$$

If $m$ is even, then $\varepsilon \sqrt{p^*}^{m}
\in \mathbb{Z}$ and
$$
1+2\sum_{i=0}^{p-1}C_i\equiv
\varepsilon \sqrt{p^*}^{m}\equiv 1 \mod 2.
$$
From Lemma \ref{2blem2}, we have $f^*(0)=0$.

If $m$ is odd, from Lemma \ref{2lem2},
$$1+2\sum_{i=0}^{p-1}C_i\zeta_p^i
=\varepsilon \sqrt{p^*}^{m-1}
\zeta_p^{f^*(0)}(1+2\sum_{s\in \mathbb{F}_p^{\times 2}}\zeta_p^s).
$$
Hence, we have
$$
1+2\sum_{i=0}^{p-1}C_i \equiv
\varepsilon \sqrt{p^*}^{m-1}(1+
2\frac{p-1}{2})\equiv 1 \mod 2.
$$
From Lemma \ref{2blem2}, we also have $f^*(0)=0$.

Hence, this proposition follows.
\end{proof}

\begin{proposition}\label{2pro6}
If $f(x)\in \mathcal{RF}$, then $f^*(x)\in
\mathcal{RF}$.
\end{proposition}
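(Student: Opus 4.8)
The membership $f^* \in \mathcal{RF}$ requires three things: that $f^*$ be weakly regular bent, that $f^*(0)=0$, and that $f^*$ satisfy condition (ii) for some integer $h'$ with $(h'-1,p-1)=1$. The first is already recorded after equation (\ref{equ3}) in the excerpt (the dual of a weakly regular bent function is weakly regular bent, with sign ${-1 \overwithdelims () p}^m\varepsilon$), and the second is precisely Proposition \ref{2pro5}. So the real content is to produce the exponent $h'$ and verify $(h'-1,p-1)=1$. Throughout I will use that $h$ is even: since $p-1$ is even and $(h-1,p-1)=1$, the integer $h-1$ must be odd.

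The plan is to compute the Walsh transform of the scaled function $cf$ (for $c\in\mathbb{F}_p^\times$) in two different ways. First, starting from $\mathcal{W}_f(a\gamma)=\sum_x \zeta_p^{f(x)+\mathrm{Tr}_1^m(a\gamma x)}$ and substituting $x=a^{-1}u$, the homogeneity $f(a^{-1}u)=a^{-h}f(u)$ gives $\mathcal{W}_f(a\gamma)=\mathcal{W}_{a^{-h}f}(\gamma)$; that is, scaling the argument of $f^*$ by $a$ is the same as scaling the function $f$ by $a^{-h}$. Second, applying the Galois automorphism $\sigma_c$ of Lemma \ref{2A1} to the defining sum yields $\sigma_c(\mathcal{W}_f(\beta))=\mathcal{W}_{cf}(c\beta)$, while applying $\sigma_c$ to (\ref{equ2}) and using $\sigma_c(\sqrt{p^*})={c \overwithdelims () p}\sqrt{p^*}$ from Lemma \ref{2A1}(iii) gives $\sigma_c(\mathcal{W}_f(\beta))={c \overwithdelims () p}^m\varepsilon\sqrt{p^*}^m\zeta_p^{cf^*(\beta)}$. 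Replacing $\beta$ by $c^{-1}\gamma$ produces the clean formula $\mathcal{W}_{cf}(\gamma)={c \overwithdelims () p}^m\varepsilon\sqrt{p^*}^m\zeta_p^{cf^*(c^{-1}\gamma)}$.

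Taking $c=a^{-h}$ and combining the two computations, I equate $\varepsilon\sqrt{p^*}^m\zeta_p^{f^*(a\gamma)}=\mathcal{W}_f(a\gamma)=\mathcal{W}_{a^{-h}f}(\gamma)={a^{-h} \overwithdelims () p}^m\varepsilon\sqrt{p^*}^m\zeta_p^{a^{-h}f^*(a^h\gamma)}$. Because $h$ is even, ${a^{-h} \overwithdelims () p}^m={a \overwithdelims () p}^{hm}=1$, and since $p$ is odd a $p$-th root of unity can never equal its own negative, so the two $\zeta_p$-powers must agree: $f^*(a\gamma)\equiv a^{-h}f^*(a^h\gamma)\pmod p$ for all $a\in\mathbb{F}_p^\times$ and $\gamma\in\mathbb{F}_q$. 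Replacing $\gamma$ by $a^{-h}\gamma$ turns this into $f^*(a^{1-h}\gamma)=a^{-h}f^*(\gamma)$. Now $(1-h,p-1)=(h-1,p-1)=1$, so $a\mapsto a^{1-h}$ is a bijection of $\mathbb{F}_p^\times$; writing an arbitrary $b\in\mathbb{F}_p^\times$ as $b=a^{1-h}$ expresses $a^{-h}$ as $b^{h'}$ with $h'\equiv h(h-1)^{-1}\pmod{p-1}$, so $f^*(b\gamma)=b^{h'}f^*(\gamma)$. Finally $h'-1\equiv h(h-1)^{-1}-1\equiv (h-1)^{-1}\pmod{p-1}$ is a unit modulo $p-1$, whence $(h'-1,p-1)=1$ and $f^*\in\mathcal{RF}$.

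The main obstacle I anticipate is conceptual rather than computational: a direct change of variables in $\mathcal{W}_f(a\gamma)$ only reproduces self-consistent identities that say nothing about how $f^*$ scales. The genuinely new information comes from the Galois action, which converts multiplication of $f$ by the constant $c$ into the action of $\sigma_c$ and thereby links the scaling of $f^*$ to the quadratic character ${c \overwithdelims () p}$. The two delicate points are checking that this character factor disappears (exactly where the parity of $h$ is used) and correctly inverting the exponent relation modulo $p-1$.
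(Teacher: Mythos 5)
Your proof is correct, and it runs on the same two engines as the paper's own argument: the homogeneity $f(ax)=a^hf(x)$ used as a change of variables inside the Walsh sum, and the Galois action of Lemma \ref{2A1} applied to Equation (\ref{equ2}), with the evenness of $h$ neutralizing the quadratic-character factor coming from $\sigma_c(\sqrt{p^*}^m)$. The difference is organizational. The paper chooses the substitution exponent cleverly at the outset: with $l(h-1)\equiv 1 \pmod{p-1}$, the substitution $x\mapsto a^l x$ in $\mathcal{W}_f(a\beta)$ makes the entire exponent a single scalar multiple $a^{l+1}\bigl(f(x)+\mathrm{Tr}_1^m(\beta x)\bigr)$, so the sum is literally $\sigma_{a^{l+1}}(\mathcal{W}_f(\beta))$ and the relation $f^*(a\beta)=a^{l+1}f^*(\beta)$ drops out in one line, with $h'=l+1$. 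You instead use the naive substitution $x\mapsto a^{-1}u$, which forces you to introduce the scaled function $cf$ and derive the auxiliary formula $\mathcal{W}_{cf}(\gamma)={c \overwithdelims () p}^m\varepsilon\sqrt{p^*}^m\zeta_p^{cf^*(c^{-1}\gamma)}$, and then to invert the exponent relation $b=a^{1-h}$ modulo $p-1$ at the end; your $h'\equiv h(h-1)^{-1}\pmod{p-1}$ coincides with the paper's $l+1$, since $(h-1)^{-1}+1\equiv h(h-1)^{-1}$. Two small points in your favor: you justify explicitly that the character factor equals $1$ (because $a^{-h}$ is an even power of $a$, hence a square), a step the paper passes over silently in the equality $\sigma_{a^{l+1}}(\varepsilon\sqrt{p^*}^m\zeta_p^{f^*(\beta)})=\varepsilon\sqrt{p^*}^m\zeta_p^{a^{l+1}f^*(\beta)}$ (which holds only because $a^{l+1}=a^{hl}$ is an even power of $a$); and you verify $(h'-1,p-1)=1$ cleanly via $h'-1\equiv(h-1)^{-1}\pmod{p-1}$, where the paper's corresponding line contains a typo (``$(l+1-1,p-1)\equiv 1$''). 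What the paper's arrangement buys is brevity; what yours buys is a reusable formula for the Walsh transform of the scalar multiples $cf$ and an explicit account of the character bookkeeping.
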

\begin{proof}
If $f(x)$ is weakly regular bent, then
$f^*(x)$ is also weakly regular bent. From
Proposition \ref{2pro5}, $f^*(0)=0$. Hence, we just
need to
prove that $f^*(x)$ satisfies condition (ii)
in the definition of $\mathcal{RF}$.

For $\forall a\in \mathbb{F}_p^{\times}$ and
$\beta \in \mathbb{F}_q$, we have
$$
\varepsilon \sqrt{p^*}^{m}
\zeta_p^{f^*(a\beta)}
=\sum_{x\in \mathbb{F}_q}\zeta_p^{f(x)
+ a\mathrm{Tr}_1^m(\beta x)}
=\sum_{x\in \mathbb{F}_q}
\zeta_p^{f(a^{l}x)+a\mathrm{Tr}_1^m (\beta a^lx)},
$$
where $l$ satisfies that
$l(h-1)\equiv 1 \mod (p-1)$. Then
$$
\varepsilon \sqrt{p^*}^{m}
\zeta_p^{f^*(a\beta)}
=\sum_{x\in \mathbb{F}_q}
\zeta_p^{a^{hl}f(x)+a^{l+1}\mathrm{Tr}_1^m
(\beta x)}.
$$
Note that $a^{hl}=a^{l+1}$. Then we have
$$
\varepsilon \sqrt{p^*}^{m}
\zeta_p^{f^*(a\beta)}
=\sum_{x\in \mathbb{F}_q}
\zeta_p^{a^{l+1}(f(x)+\mathrm{Tr}_1^m(\beta x))}
=\sigma_{a^{l+1}}(\varepsilon \sqrt{p^*}^{m}\zeta_p^{f^*(\beta)})=
\varepsilon \sqrt{p^*}^{m}\zeta_p^{a^{l+1}f^*(\beta)}.
$$
Hence, for $\forall \beta\in \mathbb{F}_q$, $f^*(
a\beta)=a^{l+1}f^*(\beta)$ and $
(l+1-1, p-1)\equiv 1$.

Hence, $f^*(x)\in \mathcal{RF}$.
\end{proof}
\begin{remark}
By Equation (\ref{equ3}), if the sign of the Walsh transform of
$f(x)$ is $\varepsilon$, then the sign of
the Walsh transform of $f^*(x)$ is
${-1 \overwithdelims () p}^m\varepsilon$.
\end{remark}

\subsection{
Exponential sums from weakly regular bent functions}
For determining parameters and weight distributions of linear codes from weakly regular bent functions, some results on exponential sums from weakly regular bent functions in $\mathcal{RF}$ are given.

\begin{lemma}\label{2lem3}
(Theorem 5.13 \cite{LN1983})
Let $p$ be an odd prime and  $p^*={-1 \overwithdelims () p} p$.
Then $\sum_{x\in \mathbb{F}_p^{\times}}
{x \overwithdelims () p}\zeta_p^x=\sqrt{p^*}$.
\end{lemma}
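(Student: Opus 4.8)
The plan is to recognize $g:=\sum_{x\in \mathbb{F}_p^{\times}}{x \overwithdelims () p}\zeta_p^x$ as the quadratic Gauss sum and to evaluate it in two stages: first I would pin $g$ down up to sign by squaring, and then I would fix the sign, which is the genuinely hard step.

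For the first stage, expand
\[
g^2=\sum_{x,y\in \mathbb{F}_p^{\times}}{xy \overwithdelims () p}\zeta_p^{x+y}
\]
and substitute $y=tx$ with $t$ ranging over $\mathbb{F}_p^{\times}$. Since the Legendre symbol is multiplicative, ${xy \overwithdelims () p}={x^2t \overwithdelims () p}={t \overwithdelims () p}$, so
\[
g^2=\sum_{t\in \mathbb{F}_p^{\times}}{t \overwithdelims () p}\sum_{x\in \mathbb{F}_p^{\times}}\zeta_p^{x(1+t)}.
\]
The inner sum equals $p-1$ when $t\equiv -1 \bmod p$ and equals $-1$ otherwise; feeding this back and using $\sum_{t\in \mathbb{F}_p^{\times}}{t \overwithdelims () p}=0$ collapses everything to $g^2={-1 \overwithdelims () p}\,p=p^*$. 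Thus $g=\pm\sqrt{p^*}$, and only the sign remains.

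The main obstacle is to prove the sign is $+$, i.e. $g=+\sqrt{p^*}$; this is Gauss's celebrated sign determination, which no elementary squaring manipulation can reach. I would settle it by Schur's eigenvalue argument. First note the clean reformulation $g=\sum_{j=0}^{p-1}\zeta_p^{j^2}=\mathrm{tr}(M)$, where $M=(\zeta_p^{jk})_{0\le j,k\le p-1}$; since $p^{-1/2}M$ is the finite Fourier matrix of order four, one has $M^4=p^2I$, so every eigenvalue of $M$ lies in $\{\sqrt p,\,i\sqrt p,\,-\sqrt p,\,-i\sqrt p\}$, say with multiplicities $a,b,c,d$. From $a+b+c+d=p$ and $\mathrm{tr}(M^2)=p$ (the matrix $M^2$ is $p$ times the permutation $k\mapsto -k$, which has the single fixed point $0$) one reads off $a+c$ and $b+d$. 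The decisive, \emph{non-circular} input is $\det M$: evaluated as the Vandermonde determinant $\prod_{0\le j<l\le p-1}(\zeta_p^{l}-\zeta_p^{j})$ it has a determined argument, and matching this against $\det M=p^{p/2}\,i^{b-d}(-1)^{c}$ pins down the relevant multiplicity parities. Combining these with the fact that $\mathrm{tr}(M)=g$ is real for $p\equiv 1\bmod 4$ and purely imaginary for $p\equiv 3\bmod 4$ forces $g=+\sqrt{p^*}$.

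Since the statement is quoted as Theorem 5.13 of \cite{LN1983}, one may of course simply invoke that evaluation; the route above is how I would reconstruct it, with the sign being the part that genuinely needs the Vandermonde/eigenvalue global input rather than a direct character-sum identity.
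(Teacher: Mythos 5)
Your proof is correct, but note that the paper does not actually prove this lemma at all: the statement is quoted verbatim as Theorem 5.13 of \cite{LN1983} and used as an imported fact, just as Lemma~\ref{2lem2} is imported from Theorems 5.15 and 5.33 of the same book. So your reconstruction is by definition a different route, and it is a sound one. The squaring computation is right: the substitution $y=tx$, the evaluation $\sum_{x\in\mathbb{F}_p^{\times}}\zeta_p^{x(1+t)}=p-1$ or $-1$, and $\sum_t { t \overwithdelims () p}=0$ do give $g^2=p^*$, and this is exactly the non-hard half. For the sign, your Schur argument is the classical complete proof: $g=\sum_{j=0}^{p-1}\zeta_p^{j^2}=\mathrm{tr}(M)$ holds since $\#\{j: j^2=x\}=1+{x \overwithdelims () p}$; $M^2=pP$ with $P$ the negation permutation (so $\mathrm{tr}(M^2)=p$, giving $a+c=\frac{p+1}{2}$, $b+d=\frac{p-1}{2}$); and the Vandermonde factorization $\zeta_p^{l}-\zeta_p^{j}=2i\,e^{\pi i(l+j)/p}\sin\bigl(\pi(l-j)/p\bigr)$ with accumulated phase $e^{i\pi(p-1)^2/2}=1$ shows $\det M$ has argument $\binom{p}{2}\frac{\pi}{2}$, which matched against $i^{\,b-d+2c}$ yields $b-d+2c\equiv\binom{p}{2}\pmod 4$; together with the real/imaginary dichotomy (which you correctly draw, non-circularly, from $g^2=p^*$) this pins $g=+\sqrt{p^*}$. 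The only compression is that final mod-$4$ bookkeeping in the two congruence classes $p\equiv 1,3\pmod 4$, which is routine and does close. As for what each approach buys: the paper's citation keeps the exposition short and, via Theorem 5.15 of \cite{LN1983}, silently covers the general $\mathbb{F}_q$ Gauss sum needed later (Lemma~\ref{2lem2}), whereas your argument is specific to the prime field $\mathbb{F}_p$ --- which is all Lemma~\ref{2lem3} asserts --- and has the virtue of making the result self-contained while isolating precisely where the genuine depth lies, namely Gauss's sign determination, which no manipulation at the level of $g^2$ can reach.
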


\begin{lemma}\label{2lem7}
Let $f(x)$ be a $p$-ary function
from $\mathbb{F}_q$ to $\mathbb{F}_p$ with   $\mathcal{W}_f(0)=\varepsilon\sqrt{p^*}^{m}$,
where $\varepsilon \in \{1,-1\}$ and
$p^*={-1 \overwithdelims () p}p$. Let $N_f(a)=
\#\{x\in \mathbb{F}_q: f(x)=a\}$.
Then we have

{\rm (1)} If $m$ is even, then
$$
N_f(a)=\left\{
         \begin{array}{ll}
           p^{m-1}+\varepsilon (p-1)
{-1 \overwithdelims () p}^{m/2}p^{(m-2)/2}, & \hbox{$a=0$;} \\
           p^{m-1}-\varepsilon {-1 \overwithdelims () p}^{m/2}
p^{(m-2)/2}, & \hbox{$a\in \mathbb{F}_p^{\times}$.}
         \end{array}
       \right.
$$

{\rm (2)}  If $m$ is odd, then
$$
N_f(a)=\left\{
  \begin{array}{ll}
    p^{m-1}, & \hbox{$a=0$;} \\
    p^{m-1}+\varepsilon \sqrt{p^*}^{m-1}, & \hbox{
$a\in \mathbb{F}_p^{\times 2}$;} \\
    p^{m-1}-\varepsilon \sqrt{p^*}^{m-1}, & \hbox{
$a\in \mathbb{F}_p^{\times}\backslash
\mathbb{F}_p^{\times 2}$.}
  \end{array}
\right.
$$
\end{lemma}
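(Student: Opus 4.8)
The plan is to count $N_f(a)$ by the orthogonality relation for additive characters of $\mathbb{F}_p$. First I would write
$$
N_f(a)=\frac{1}{p}\sum_{y\in\mathbb{F}_p}\sum_{x\in\mathbb{F}_q}\zeta_p^{y(f(x)-a)}
=p^{m-1}+\frac{1}{p}\sum_{y\in\mathbb{F}_p^{\times}}\zeta_p^{-ya}S_y,
$$
where $S_y=\sum_{x\in\mathbb{F}_q}\zeta_p^{yf(x)}$ and the $y=0$ term alone contributes $p^m$, i.e.\ $p^{m-1}$ after dividing by $p$. Everything then reduces to evaluating the twisted sums $S_y$ for $y\neq 0$ and assembling the resulting cyclotomic-integer expression.

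The key step is to recognize each $S_y$ $(y\neq 0)$ as a Galois conjugate of the hypothesis $\mathcal{W}_f(0)=S_1=\varepsilon\sqrt{p^*}^{m}$. Since $f$ takes values in $\mathbb{F}_p$, identified with $\{0,1,\dots,p-1\}\subset\mathbb{Z}$, applying the automorphism $\sigma_y\in Gal(Q(\zeta_p)/Q)$ (sending $\zeta_p\mapsto\zeta_p^{y}$) termwise gives $S_y=\sigma_y(S_1)=\varepsilon\,\sigma_y(\sqrt{p^*}^{m})$. By Lemma \ref{2A1}(iii), $\sigma_y(\sqrt{p^*})={y \overwithdelims () p}\sqrt{p^*}$, so $S_y=\varepsilon\,{y \overwithdelims () p}^{m}\sqrt{p^*}^{m}$. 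Substituting back yields
$$
N_f(a)=p^{m-1}+\frac{\varepsilon\sqrt{p^*}^{m}}{p}\sum_{y\in\mathbb{F}_p^{\times}}{y \overwithdelims () p}^{m}\zeta_p^{-ya}.
$$

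It then remains to evaluate the inner character sum according to the parity of $m$, using $\sqrt{p^*}^{2}=p^*={-1 \overwithdelims () p}p$ to rewrite the powers of $\sqrt{p^*}$. When $m$ is even, ${y \overwithdelims () p}^{m}=1$, so the sum equals $p-1$ for $a=0$ and $-1$ for $a\in\mathbb{F}_p^{\times}$; writing $\sqrt{p^*}^{m}={-1 \overwithdelims () p}^{m/2}p^{m/2}$ gives case (1). When $m$ is odd, ${y \overwithdelims () p}^{m}={y \overwithdelims () p}$, so the sum vanishes for $a=0$ (giving $N_f(0)=p^{m-1}$), while for $a\in\mathbb{F}_p^{\times}$ I would substitute $z=ya$ and invoke the Gauss sum of Lemma \ref{2lem3} to get $\sum_{y}{y \overwithdelims () p}\zeta_p^{-ya}={a \overwithdelims () p}{-1 \overwithdelims () p}\sqrt{p^*}$. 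Combining with $\sqrt{p^*}^{m}\cdot\sqrt{p^*}=\sqrt{p^*}^{m-1}\,{-1 \overwithdelims () p}\,p$ cancels the surplus factor ${-1 \overwithdelims () p}$ and the $1/p$, leaving $N_f(a)=p^{m-1}+\varepsilon\,{a \overwithdelims () p}\sqrt{p^*}^{m-1}$, which is exactly case (2). Conceptually the argument is not deep; the non-routine insight is the identity $S_y=\sigma_y(\mathcal{W}_f(0))$, and the main place where care is needed is the bookkeeping of Legendre symbols and powers of $\sqrt{p^*}$ in the odd-$m$ case, where sign errors most easily creep in.
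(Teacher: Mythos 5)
Your proof is correct, but it takes a genuinely different route from the paper's. The paper works forwards from the single relation $\varepsilon\sqrt{p^*}^{m}=\sum_{a\in\mathbb{F}_p}N_f(a)\zeta_p^{a}$ and extracts the counts by coefficient matching in the integral basis of $\mathbb{Z}[\zeta_p]$: for even $m$ the left side is a rational integer, so the minimal polynomial $1+x+\cdots+x^{p-1}$ of $\zeta_p$ forces $N_f(0)-\varepsilon\sqrt{p^*}^{m}=N_f(a)$ for all $a\neq 0$; for odd $m$ it first rewrites $\sqrt{p^*}=\sum_{x\in\mathbb{F}_p}\zeta_p^{x^2}=1+2\sum_{s\in\mathbb{F}_p^{\times 2}}\zeta_p^{s}$ (Lemma \ref{2lem2}) so that both sides lie in $\mathbb{Z}[\zeta_p]$ with visible coefficients, and again matches them; in both cases $\sum_a N_f(a)=q$ pins down the values. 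You instead invert the transform: orthogonality of additive characters gives $N_f(a)=p^{m-1}+\frac{1}{p}\sum_{y\neq 0}\zeta_p^{-ya}S_y$, and the identity $S_y=\sigma_y(\mathcal{W}_f(0))=\varepsilon\left(\frac{y}{p}\right)^{m}\sqrt{p^*}^{m}$ (valid since $f$ takes values in $\mathbb{F}_p$, so the hypothesis on $\mathcal{W}_f(0)$ alone suffices — no bentness needed, consistent with the lemma's weak hypotheses) reduces everything to the twisted sums $\sum_{y}\left(\frac{y}{p}\right)^{m}\zeta_p^{-ya}$, which you evaluate by parity using Lemma \ref{2lem3}; your Legendre-symbol bookkeeping in the odd case, including the cancellation via $\sqrt{p^*}^{m+1}=\left(\frac{-1}{p}\right)p\,\sqrt{p^*}^{m-1}$, checks out. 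The two arguments are Fourier-dual ways of solving the same linear system: the paper's coefficient matching is slightly more elementary (only the minimal-polynomial fact plus one Gauss sum), while yours yields a uniform closed formula for $N_f(a)$ in one pass and is stylistically the same Galois-conjugation technique the paper itself deploys later in Lemmas \ref{2lem9} and \ref{2lem10}, so it integrates naturally with the rest of the paper's machinery.
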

\begin{proof}
From $\mathcal{W}_f(0)=\varepsilon \sqrt{p^*}^m$,
we have
$$
\varepsilon \sqrt{p^*}^m =
\sum_{a\in \mathbb{F}_p}N_f(a)\zeta_p^a.
$$

{\rm (1)} If $m$ is even, $\varepsilon \sqrt{p^*}^m
\in \mathbb{Z}$, then
$$
N_f(0)-\varepsilon \sqrt{p^*}^m
+\sum_{a\in \mathbb{F}_p^{\times}}N_f(a)
\zeta_p^a=0.
$$
Since the minimal polynomial of
$\zeta_p$ over $Q$ is $1+x+\cdots+x^{p-1}$, we have
$$
N_f(0)-\varepsilon \sqrt{p^*}^m=N_f(a),
a\in \mathbb{F}_p^{\times}.
$$
From $\sum_{a\in \mathbb{F}_p}N_f(a)=q$,
we have
$$
pN_f(1)=q-\varepsilon \sqrt{p^*}^m.
$$
Hence,
$$
N_f(a)=\left\{
         \begin{array}{ll}
           p^{m-1}+\varepsilon (p-1)
{-1 \overwithdelims () p}^{m/2}p^{(m-2)/2}, & \hbox{$a=0$;} \\
          p^{m-1}-\varepsilon {-1 \overwithdelims () p}^{m/2}
p^{(m-2)/2}, & \hbox{$a\in \mathbb{F}_p^{\times}$.}
         \end{array}
       \right.
$$

{\rm (2)} If $m$ is odd,  from Lemma \ref{2lem2}, we have
$$
\sum_{x\in \mathbb{F}_p}\zeta_p^{x^2}=
\sqrt{p^*}.
$$
Further,
$$
\sum_{a\in \mathbb{F}_p}N_f(a)\zeta_p^a
=\varepsilon \sqrt{p^*}^{m-1}
(\zeta_p^0+\sum_{s\in \mathbb{F}_p^{\times 2}}2\zeta_p^s).
$$
And we have
$$
(N_f(0)-\varepsilon \sqrt{p^*}^{m-1})
\zeta_p^0+\sum_{s\in \mathbb{F}_p^{\times 2}}
(N_f(s)-2\varepsilon \sqrt{p^*}^{m-1})+
\sum_{t\in \mathbb{F}_p^{\times}\backslash
\mathbb{F}_p^{\times 2}}N_f(t)\zeta_p^t=0.
$$
The minimal polynomial of $\zeta_p$ over $Q$ is
$1+x+\cdots+ x^{p-1}$. Then we have
$$
N_f(0)-\varepsilon \sqrt{p^*}^{m-1}
=N_f(s)-2\varepsilon \sqrt{p^*}^{m-1}
=N_f(t),
$$
where $s\in \mathbb{F}_p^{\times 2}$
and $t\in \mathbb{F}_p^{\times}\backslash
\mathbb{F}_p^{\times 2}$.
Note that $\sum_{a\in \mathbb{F}_p}N_f(a)
=q$, we have
\begin{align*}
&N_f(t)=p^{m-1}-\varepsilon \sqrt{p^*}^{m-1},\\
&N_f(0)=p^{m-1},\\
&N_f(s)=p^{m-1}+\varepsilon\sqrt{p^*}^{m-1},
\end{align*}
where $t\in \mathbb{F}_p^{\times}\backslash
\mathbb{F}_p^{\times 2}$ and
$s\in \mathbb{F}_p^{\times 2}$.

Hence, this lemma follows.
\end{proof}

\begin{lemma}\label{2lem8}
Let $f(x)\in \mathcal{RF}$ with the sign $\varepsilon$ of the Walsh transform. Then we have

{\rm (1)} If $m$ is even, then
$$
N_{f^*}(a)=\left\{
         \begin{array}{ll}
           p^{m-1}+\varepsilon (p-1)
{-1 \overwithdelims () p}^{m/2}p^{(m-2)/2}, & \hbox{$a=0$;} \\
           p^{m-1}-\varepsilon {-1 \overwithdelims () p}^{m/2}
p^{(m-2)/2}, & \hbox{$a\in \mathbb{F}_p^{\times}$.}
         \end{array}
       \right.
$$

{\rm (2)}  If $m$ is odd, then
$$
N_{f^*}(a)=\left\{
  \begin{array}{ll}
    p^{m-1}, & \hbox{$a=0$;} \\
    p^{m-1}+\varepsilon {-1 \overwithdelims () p} \sqrt{p^*}^{m-1}, & \hbox{
$a\in \mathbb{F}_p^{\times 2}$;} \\
    p^{m-1}-\varepsilon {-1 \overwithdelims () p} \sqrt{p^*}^{m-1}, & \hbox{
$a\in \mathbb{F}_p^{\times}\backslash
\mathbb{F}_p^{\times 2}$.}
  \end{array}
\right.
$$
\end{lemma}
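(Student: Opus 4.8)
The plan is to obtain $N_{f^*}$ by applying Lemma \ref{2lem7} to the dual function $f^*$ rather than to $f$ itself. The point is that Lemma \ref{2lem7} computes $N_g(a)$ for \emph{any} $p$-ary function $g$ whose Walsh transform at $0$ equals $\varepsilon'\sqrt{p^*}^m$; so everything reduces to identifying the correct sign $\varepsilon'$ attached to $f^*$ and then reading off the two cases.

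First I would record the structural facts about $f^*$. By Proposition \ref{2pro6}, $f^*(x)\in\mathcal{RF}$, and hence by Proposition \ref{2pro5} we have $f^*(0)=0$. Next I would pin down $\mathcal{W}_{f^*}(0)$. Using the remark following Proposition \ref{2pro6}, the sign of the Walsh transform of $f^*$ is $\varepsilon^*={-1 \overwithdelims () p}^m\varepsilon$, so $\mathcal{W}_{f^*}(\beta)=\varepsilon^*\sqrt{p^*}^m\zeta_p^{f^{**}(\beta)}$. Since $f^{**}(x)=f(-x)$ and $f(0)=0$, we get $f^{**}(0)=f(0)=0$, whence $\mathcal{W}_{f^*}(0)=\varepsilon^*\sqrt{p^*}^m$ with no residual $\zeta_p$ factor. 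This is exactly the hypothesis of Lemma \ref{2lem7}, with the sign $\varepsilon$ there replaced by $\varepsilon^*$.

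It then remains to substitute $\varepsilon^*={-1 \overwithdelims () p}^m\varepsilon$ into the two cases of Lemma \ref{2lem7} and simplify. When $m$ is even, ${-1 \overwithdelims () p}^m=1$, so $\varepsilon^*=\varepsilon$ and case (1) of Lemma \ref{2lem7} reproduces the claimed formula verbatim. When $m$ is odd, ${-1 \overwithdelims () p}^m={-1 \overwithdelims () p}$, so $\varepsilon^*={-1 \overwithdelims () p}\varepsilon$; feeding this into case (2) of Lemma \ref{2lem7} turns the coefficient $\varepsilon^*\sqrt{p^*}^{m-1}$ into $\varepsilon{-1 \overwithdelims () p}\sqrt{p^*}^{m-1}$, producing precisely the extra Legendre factor that distinguishes the present statement from Lemma \ref{2lem7}(2).

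The computation is otherwise routine; the only place demanding care---and what I would treat as the main obstacle---is the sign and parity bookkeeping of $\varepsilon^*$. Concretely, one must be sure that $\mathcal{W}_{f^*}(0)$ is a real multiple of $\sqrt{p^*}^m$, so that Lemma \ref{2lem7} genuinely applies, which hinges on the identity $f^{**}(0)=f(0)=0$; and one must correctly carry the factor ${-1 \overwithdelims () p}^m$ through both parity cases so that it collapses to $1$ when $m$ is even and to ${-1 \overwithdelims () p}$ when $m$ is odd.
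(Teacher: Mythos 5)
Your proof is correct and essentially reproduces the paper's own argument: the paper likewise reduces everything to Lemma \ref{2lem7} by establishing $\mathcal{W}_{f^*}(0)={-1 \overwithdelims () p}^m\varepsilon\sqrt{p^*}^{m}$, obtained simply by setting $x=0$ in Equation (\ref{equ3}) together with $f(0)=0$ --- the same identity you derive slightly more indirectly via the dual-sign remark and $f^{**}(0)=f(0)=0$. The parity bookkeeping (even $m$ giving $\varepsilon^*=\varepsilon$, odd $m$ giving the extra Legendre factor) that you spell out is exactly what the paper leaves implicit in ``from Lemma \ref{2lem7}, the lemma follows.''
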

\begin{proof}
As $f(x)\in \mathcal{RF}$, $f(0)=0$. From Equation (\ref{equ3}), we have
$$
\mathcal{W}_{f^*}(0)={-1 \overwithdelims () p}^m\varepsilon
\sqrt{p^*}^m.
$$
Hence, from Lemma \ref{2lem7}, the lemma follows.
\end{proof}

\begin{lemma}\label{2lem9}
Let $f(x)$ be a $p$-ary function with $
\mathcal{W}_f(0)=\varepsilon \sqrt{p^*}^{m}$, Then
$$
\sum_{y\in \mathbb{F}_p^{\times}}
\sum_{x\in \mathbb{F}_q}
\zeta_p^{yf(x)}=
\left\{
  \begin{array}{ll}
    \varepsilon (p-1)\sqrt{p^*}^{m}, & \hbox{$m$ is even;} \\
    0, & \hbox{$m$ is odd.}
  \end{array}
\right.
$$
\end{lemma}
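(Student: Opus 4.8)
The plan is to reduce the double sum to the single quantity $N_f(0)=\#\{x\in\mathbb{F}_q:f(x)=a\}$ evaluated at $a=0$, and then to read off the answer from Lemma \ref{2lem7}. First I would group the inner sum over $x$ by the value of $f$, writing $\sum_{x\in\mathbb{F}_q}\zeta_p^{yf(x)}=\sum_{a\in\mathbb{F}_p}N_f(a)\zeta_p^{ya}$, so that the whole expression becomes $\sum_{y\in\mathbb{F}_p^{\times}}\sum_{a\in\mathbb{F}_p}N_f(a)\zeta_p^{ya}$. Swapping the order of summation turns this into $\sum_{a\in\mathbb{F}_p}N_f(a)\bigl(\sum_{y\in\mathbb{F}_p^{\times}}\zeta_p^{ya}\bigr)$, isolating a clean inner character sum.

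Next I would evaluate $\sum_{y\in\mathbb{F}_p^{\times}}\zeta_p^{ya}$ by orthogonality of additive characters. When $a=0$ it equals $p-1$; when $a\in\mathbb{F}_p^{\times}$, the map $y\mapsto ya$ permutes $\mathbb{F}_p^{\times}$, so the sum equals $\sum_{z\in\mathbb{F}_p^{\times}}\zeta_p^{z}=-1$. Substituting these two values collapses the double sum to $(p-1)N_f(0)-\sum_{a\in\mathbb{F}_p^{\times}}N_f(a)$. Using the total count $\sum_{a\in\mathbb{F}_p}N_f(a)=q=p^m$, this further simplifies to $p\,N_f(0)-p^m=p\bigl(N_f(0)-p^{m-1}\bigr)$, so the entire statement reduces to knowing how far $N_f(0)$ deviates from the average $p^{m-1}$.

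Finally I would invoke Lemma \ref{2lem7}, which under the hypothesis $\mathcal{W}_f(0)=\varepsilon\sqrt{p^*}^{m}$ gives $N_f(0)$ explicitly. For $m$ odd, Lemma \ref{2lem7} yields $N_f(0)=p^{m-1}$, hence $N_f(0)-p^{m-1}=0$ and the sum vanishes, matching the claim. For $m$ even, Lemma \ref{2lem7} gives $N_f(0)-p^{m-1}=\varepsilon(p-1){-1 \overwithdelims () p}^{m/2}p^{(m-2)/2}$; multiplying by $p$ produces $\varepsilon(p-1){-1 \overwithdelims () p}^{m/2}p^{m/2}$, and the identity $\sqrt{p^*}^{m}=\bigl({-1 \overwithdelims () p}\,p\bigr)^{m/2}={-1 \overwithdelims () p}^{m/2}p^{m/2}$ rewrites this exactly as $\varepsilon(p-1)\sqrt{p^*}^{m}$, as required.

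This argument presents no genuine obstacle; it is essentially a single application of additive-character orthogonality followed by substitution of the known value of $N_f(0)$. The only point demanding a little care is the bookkeeping of signs and powers of $p$ in the even case, where one must rewrite $\sqrt{p^*}^{m}$ as ${-1 \overwithdelims () p}^{m/2}p^{m/2}$ so that the factor $p$ coming from $p(N_f(0)-p^{m-1})$ combines correctly with the $p^{(m-2)/2}$ supplied by Lemma \ref{2lem7}.
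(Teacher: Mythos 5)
Your proof is correct, and it is not circular: Lemma \ref{2lem7} is established independently from the same hypothesis $\mathcal{W}_f(0)=\varepsilon\sqrt{p^*}^{m}$, so you are free to quote it. However, your route differs from the paper's. The paper does not pass through the counts $N_f(a)$ at all: it writes $\sum_{y\in\mathbb{F}_p^{\times}}\sum_{x\in\mathbb{F}_q}\zeta_p^{yf(x)}=\sum_{y\in\mathbb{F}_p^{\times}}\sigma_y\bigl(\mathcal{W}_f(0)\bigr)$, where $\sigma_y\in Gal(Q(\zeta_p)/Q)$ is the automorphism $\zeta_p\mapsto\zeta_p^y$ of Lemma \ref{2A1}, and then uses $\sigma_y(\sqrt{p^*})={y \overwithdelims () p}\sqrt{p^*}$ to reduce everything to the character sum $\sum_{y\in\mathbb{F}_p^{\times}}{y \overwithdelims () p}^{m}$, which is $p-1$ for $m$ even and $0$ for $m$ odd. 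Your argument instead groups by level sets of $f$, applies additive-character orthogonality to collapse the double sum to $p\bigl(N_f(0)-p^{m-1}\bigr)$, and substitutes the value of $N_f(0)$ from Lemma \ref{2lem7}; the sign bookkeeping via $\sqrt{p^*}^{m}={-1 \overwithdelims () p}^{m/2}p^{m/2}$ in the even case is handled correctly. What the paper's Galois-action method buys is uniformity and extensibility: the same $\sigma_y$ trick is the engine of the subsequent Lemmas \ref{2lem10}, \ref{2lem12} and \ref{2lem13}, where the Walsh value carries a nontrivial phase $\zeta_p^{f^*(\beta)}$ and a reduction to a single count such as $N_f(0)$ is no longer available. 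What your method buys is economy of machinery at this particular lemma --- only orthogonality plus an already-proven counting result, with no explicit appeal to the Galois group --- though the algebraic content (the minimal polynomial of $\zeta_p$) is of course hidden inside the proof of Lemma \ref{2lem7} itself.
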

\begin{proof}
\begin{align*}
\sum_{y\in \mathbb{F}_p^{\times}}
\sum_{x\in \mathbb{F}_q}
\zeta_p^{yf(x)}
=\sum_{y\in \mathbb{F}_p^{\times}}
\sigma_y(
\sum_{x\in \mathbb{F}_q} \zeta_p^{f(x)})
=\sum_{y\in \mathbb{F}_p^{\times}}
\sigma_y(\varepsilon \sqrt{p^*}^m)
=\varepsilon \sqrt{p^*}^m\sum_{y\in \mathbb{F}_p^{\times}} {y \overwithdelims () p}^m.
\end{align*}

If $m$ is even, then $\sum_{y\in \mathbb{F}_p^{\times}}{y \overwithdelims () p}^m=p-1$, that is,
$\sum_{y\in \mathbb{F}_p^{\times}}
\sum_{x\in \mathbb{F}_q}
\zeta_p^{yf(x)}=\varepsilon (p-1)\sqrt{p^*}^m$.

If $m$ is odd, then $\sum_{y\in \mathbb{F}_p^{\times}}{y \overwithdelims () p}^m=
\sum_{y\in \mathbb{F}_p^{\times}}{y \overwithdelims () p}=0$ and
$\sum_{y\in \mathbb{F}_p^{\times}}
\sum_{x\in \mathbb{F}_q}
\zeta_p^{yf(x)}=0$.

Hence, this lemma follows.
\end{proof}

\begin{lemma}\label{2lem10}
Let $\beta\in \mathbb{F}_q^{\times}$ and
$f(x)\in \mathcal{RF}$ with the sign
$\varepsilon$ of the Walsh transfrom.

{\rm (1)} If $m$ is even, then
$$
\sum_{y,z\in \mathbb{F}_p^{\times}}
\sum_{x\in \mathbb{F}_q}\zeta_p^{yf(x)
+z\mathrm{Tr}_1^m(\beta x)}=
\left\{
  \begin{array}{ll}
    \varepsilon (p-1)^2\sqrt{p^*}^m, & \hbox{$f^*(\beta)=0$;} \\
    -\varepsilon (p-1)\sqrt{p^*}^m, & \hbox{
$f^*(\beta)\neq 0$.}
  \end{array}
\right.
$$

{\rm (2)} If $m$ is odd, then
$$
\sum_{y,z\in \mathbb{F}_p^{\times}}
\sum_{x\in \mathbb{F}_q}\zeta_p^{yf(x)
+z\mathrm{Tr}_1^m(\beta x)}=
\left\{
  \begin{array}{ll}
    0, & \hbox{$f^*(\beta)=0$;} \\
    \varepsilon{f^*(\beta) \overwithdelims () p}
{-1 \overwithdelims () p}^{(m+1)/2}
(p-1)p^{(m+1)/2}, & \hbox{
$f^*(\beta)\neq 0$.}
  \end{array}
\right.
$$
\end{lemma}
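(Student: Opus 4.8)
The plan is to reduce the inner $x$-sum to the Walsh transform of $f$ via a Galois twist, and then exploit the homogeneity of the dual $f^*$ to collapse the outer $(y,z)$-sum. First I fix $y,z\in\mathbb{F}_p^{\times}$ and set $w=zy^{-1}$, so that $z\,\mathrm{Tr}_1^m(\beta x)=y\,\mathrm{Tr}_1^m(w\beta x)$ and the exponent becomes $y\bigl(f(x)+\mathrm{Tr}_1^m(w\beta x)\bigr)$. Applying the automorphism $\sigma_y$ (which sends $\zeta_p^c\mapsto\zeta_p^{yc}$) and using Equation (\ref{equ2}) together with $\sigma_y(\sqrt{p^*})={y \overwithdelims () p}\sqrt{p^*}$ from Lemma \ref{2A1}, I obtain
$$\sum_{x\in\mathbb{F}_q}\zeta_p^{yf(x)+z\mathrm{Tr}_1^m(\beta x)}=\varepsilon\,{y \overwithdelims () p}^m\sqrt{p^*}^m\,\zeta_p^{yf^*(zy^{-1}\beta)}.$$
By Proposition \ref{2pro6}, $f^*\in\mathcal{RF}$ is homogeneous with $f^*(a\gamma)=a^{l+1}f^*(\gamma)$, where $l(h-1)\equiv 1\pmod{p-1}$ satisfies $(l,p-1)=1$; since $p-1$ is even, $l$ is odd and $l+1$ is even. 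Thus $f^*(zy^{-1}\beta)=(zy^{-1})^{l+1}f^*(\beta)$ and the inner-sum exponent simplifies to $z^{l+1}y^{-l}f^*(\beta)$.

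Next I sum over $y,z$. Writing $c=f^*(\beta)$ and performing the bijective substitution $y\mapsto y^{-l}$ (a bijection of $\mathbb{F}_p^{\times}$ because $(l,p-1)=1$), the exponent becomes linear in $y$; moreover, since $l$ is odd one checks ${y^{-l} \overwithdelims () p}={y \overwithdelims () p}$, so the Legendre weight is unchanged. This turns the double sum into
$$S=\varepsilon\sqrt{p^*}^m\sum_{z\in\mathbb{F}_p^{\times}}\sum_{y\in\mathbb{F}_p^{\times}}{y \overwithdelims () p}^m\,\zeta_p^{cz^{l+1}y}.$$
If $f^*(\beta)=0$ the inner exponent vanishes and the claim follows at once from $\sum_{y}{y \overwithdelims () p}^m$ equalling $p-1$ for $m$ even and $0$ for $m$ odd. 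If $f^*(\beta)\neq0$ I evaluate the $y$-sum first, with $d=cz^{l+1}\neq0$: for $m$ even it is the full-character sum $\sum_y\zeta_p^{dy}=-1$, and summing the constant $-1$ over $z$ gives $-(p-1)$, so $S=-\varepsilon(p-1)\sqrt{p^*}^m$; for $m$ odd it is the Gauss sum $\sum_y{y \overwithdelims () p}\zeta_p^{dy}={d \overwithdelims () p}\sqrt{p^*}$ by Lemma \ref{2lem3}.

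In the odd case the surviving sum is ${c \overwithdelims () p}\sqrt{p^*}\sum_{z}{z \overwithdelims () p}^{l+1}$, and because $l+1$ is even this equals $(p-1){c \overwithdelims () p}\sqrt{p^*}$; multiplying by $\varepsilon\sqrt{p^*}^m$ and rewriting $\sqrt{p^*}^{m+1}={-1 \overwithdelims () p}^{(m+1)/2}p^{(m+1)/2}$ gives the stated value. I expect the main obstacle to be precisely this odd-$m$, $f^*(\beta)\neq0$ case: the double character sum $\sum_{y,z}{y \overwithdelims () p}\zeta_p^{cz^{l+1}y^{-l}}$ does not collapse by a naive fiber count, and the crucial points are to use the homogeneity exponent of $f^*$ to linearize the sum in $y$ and to carry the two parity facts ($l$ odd so the Legendre weight survives the substitution, $l+1$ even so $\sum_z{z \overwithdelims () p}^{l+1}=p-1$ rather than $0$). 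The even-$m$ case is then routine by the same substitution.
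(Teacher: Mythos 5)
Your proof is correct, and it reaches the paper's formula by a genuinely different (though related) route. The paper never touches the dual's homogeneity in this lemma: it makes the change of variable $x\mapsto(z/y)^l x$ inside the $x$-sum, so that by $f(ax)=a^hf(x)$ and $hl\equiv l+1 \pmod{p-1}$ the exponent becomes $\bigl(f(x)+\mathrm{Tr}_1^m(\beta x)\bigr)\,z(z/y)^{l}$; since $y\mapsto z(z/y)^{l}$ permutes $\mathbb{F}_p^{\times}$ for each fixed $z$, the double sum collapses in one line to $(p-1)\sum_{y\in\mathbb{F}_p^{\times}}\sigma_y\bigl(\mathcal{W}_f(\beta)\bigr)=\varepsilon(p-1)\sqrt{p^*}^{m}\sum_{y}\eta^m(y)\zeta_p^{yf^*(\beta)}$, after which the evaluation via Lemma \ref{2A1} and Lemma \ref{2lem3} is identical to your endgame. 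You instead absorb $z/y$ into the argument of the Walsh transform, invoke Proposition \ref{2pro6} to write $f^*(zy^{-1}\beta)=(zy^{-1})^{l+1}f^*(\beta)$, and linearize with the substitution $y\mapsto y^{-l}$; the two parity facts you isolate are exactly the right replacements for the paper's permutation count, and both check out: $(l,p-1)=1$ with $p-1$ even forces $l$ odd, whence $\bigl(\frac{y^{-l}}{p}\bigr)=\bigl(\frac{y}{p}\bigr)$ so the Legendre weight survives the substitution, and $l+1$ even gives $\sum_{z}\bigl(\frac{z}{p}\bigr)^{l+1}=p-1$ rather than $0$, so $\bigl(\frac{cz^{l+1}}{p}\bigr)=\bigl(\frac{c}{p}\bigr)$ uniformly in $z$. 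The trade-off between the two arguments: the paper's version is self-contained within the lemma and needs nothing about $f^*$ beyond Equation (\ref{equ2}), while yours leans on Proposition \ref{2pro6} --- whose own proof is the same substitution trick applied once and for all --- so your argument is in effect a refactoring that pushes the homogeneity manipulation into the dual; it costs some parity bookkeeping but makes fully explicit how the $(y,z)$-dependence factors through $c\,z^{l+1}y^{-l}$, which arguably clarifies why the odd-$m$, $f^*(\beta)\neq 0$ case produces the Gauss-sum factor $\bigl(\frac{f^*(\beta)}{p}\bigr)\sqrt{p^*}$.
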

\begin{proof}
Let $A=\sum_{y,z\in \mathbb{F}_p^{\times}}
\sum_{x\in \mathbb{F}_q}\zeta_p^{yf(x)
+z\mathrm{Tr}_1^m(\beta x)}$.
Let $l$ be an integer satisfying
$l(h-1)\equiv 1 \mod p-1$.

Since $x\longmapsto (\frac{z}{y})^lx$ is a
permutation of $\mathbb{F}_q$, then
$$
A=\sum_{y,z\in \mathbb{F}_p^{\times}}
\sum_{x\in \mathbb{F}_q}\zeta_p^{yf((\frac{z}{y})^lx)
+z\mathrm{Tr}_1^m(\beta (\frac{z}{y})^lx)}.
$$
From Condition (ii) in the definition of
$\mathcal{RF}$, we have
$$
A=\sum_{y,z\in \mathbb{F}_p^{\times}}
\sum_{x\in \mathbb{F}_q}\zeta_p^{y(\frac{z}{y})^{hl}f(x)
+z(\frac{z}{y})^l\mathrm{Tr}_1^m(\beta x)}.
$$
Note that $y(\frac{z}{y})^{hl}
=y(\frac{z}{y})^{l+1}
=z(\frac{z}{y})^l$. Then
$$
A=\sum_{y,z\in \mathbb{F}_p^{\times}}
\sum_{x\in \mathbb{F}_q}\zeta_p^{(f(x)+\mathrm{Tr}_1^m(\beta x))
z(\frac{z}{y})^{l}}.
$$
Since $(l,p-1)=1$ and
$y\longmapsto z(\frac{z}{y})^l$ is a
permutation of $\mathbb{F}_p^{\times}$, then
\begin{align*}
A&=\sum_{y\in \mathbb{F}_p^{\times}}
\sum_{z\in \mathbb{F}_p^{\times}}
\sum_{x\in \mathbb{F}_q}\zeta_p^{(f(x)+\mathrm{Tr}_1^m(\beta x))
y}
\\
&=(p-1)\sum_{y\in \mathbb{F}_p^{\times}}
\sum_{x\in \mathbb{F}_q}\zeta_p^{(f(x)+\mathrm{Tr}_1^m(\beta x))
y}\\
&=(p-1)\sum_{y\in \mathbb{F}_p^{\times}}
\sigma_y(\sum_{x\in \mathbb{F}_q}\zeta_p^{f(x)+\mathrm{Tr}_1^m(\beta x)})
\\
&= (p-1)\sum_{y\in \mathbb{F}_p^{\times}}
\sigma_y (\mathcal{W}_f(\beta)).
\end{align*}
From $f(0)=0$,
$$
A=(p-1)\sum_{y\in \mathbb{F}_p^{\times}}
\sigma_y (\varepsilon \sqrt{p^*}^m \zeta_p^{
f^*(\beta)}).
$$
From Lemma \ref{2A1}, we have
$$
A=\varepsilon(p-1)\sqrt{p^*}^m\sum_{y\in \mathbb{F}_p^{\times}}
\eta^m(y)  \zeta_p^{
yf^*(\beta)}.
$$

When $m$ is even, then
$$
A=\varepsilon(p-1)\sqrt{p^*}^m\sum_{y\in \mathbb{F}_p^{\times}}
\zeta_p^{
yf^*(\beta)}.
$$
If $f^*(\beta)=0$, then
$$
A=\varepsilon(p-1)^2\sqrt{p^*}^m.
$$
If $f^*(\beta)\neq 0$, then
$$
A=\varepsilon(p-1)\sqrt{p^*}^m
\sigma_{f^*(\beta)}(\sum_{y\in \mathbb{F}_p^{\times}}
  \zeta_p^{
y}).
$$
From $\sum_{y\in \mathbb{F}_p^{\times}}
  \zeta_p^{y}=-1$, we have
$$
A=-\varepsilon(p-1)\sqrt{p^*}^m.
$$

When $m$ is odd, then
$$
A=\varepsilon(p-1)\sqrt{p^*}^m\sum_{y\in \mathbb{F}_p^{\times}}
{y \overwithdelims () p}  \zeta_p^{
yf^*(\beta)}.
$$
If $f^*(\beta)=0$, then
$$
A=\varepsilon(p-1)\sqrt{p^*}^m\sum_{y\in \mathbb{F}_p^{\times}}
{y \overwithdelims () p}.
$$
From $\sum_{y\in \mathbb{F}_p^{\times}}
{y \overwithdelims () p}=0$, we have
$A=0$.

\noindent If $f^*(\beta)\neq 0$, then
$$
A=\varepsilon(p-1)\sqrt{p^*}^m
\sigma_{f^*(\beta)}(
\sum_{y\in \mathbb{F}_p^{\times}}
{y \overwithdelims () p}  \zeta_p^{
y}).
$$
From Lemma \ref{2A1}, we have
$$
A=\varepsilon(p-1)\sqrt{p^*}^m
{f^*(\beta) \overwithdelims () p}
\sqrt{p^*}
=\varepsilon{f^*(\beta) \overwithdelims () p}
{-1 \overwithdelims () p}^{(m+1)/2}(p-1)
p^{(m+1)/2}.
$$

Hence, this lemma follows.
\end{proof}

\begin{lemma}\label{2lem11}
Let $\beta\in \mathbb{F}_q^{\times}$ and
$f(x)\in \mathcal{RF}$ with the sign
$\varepsilon$ of the Walsh transfrom.  Let
$$
N_{f,\beta}=\#\{x\in \mathbb{F}_q:
f(x)=0, \mathrm{Tr}_1^m(\beta x)=0\}.
$$

{\rm (1)} If $m$ is even, then
$$
N_{f,\beta}=
\left\{
  \begin{array}{ll}
    p^{m-2}+\varepsilon {-1 \overwithdelims () p}^{m/2}
(p-1)p^{(m-2)/2}, & \hbox{$f^*(\beta)=0$;} \\
    p^{m-2}, & \hbox{
$f^*(\beta)\neq 0$.}
  \end{array}
\right.
$$

{\rm (2)} If $m$ is odd, then
$$
N_{f,\beta}=
\left\{
  \begin{array}{ll}
    p^{m-2}, & \hbox{$f^*(\beta)=0$;} \\
    p^{m-2}+\varepsilon{f^*(\beta) \overwithdelims () p}
\eta^{(m+1)/2}(-1)
(p-1)p^{(m-3)/2}, & \hbox{
$f^*(\beta)\neq 0$.}
  \end{array}
\right.
$$
\end{lemma}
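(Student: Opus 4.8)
The plan is to express $N_{f,\beta}$ through the additive-character indicators of the two conditions $f(x)=0$ and $\mathrm{Tr}_1^m(\beta x)=0$, and then reduce everything to the exponential sums already evaluated in Lemmas \ref{2lem9} and \ref{2lem10}. Using the orthogonality relation $\frac{1}{p}\sum_{y\in\mathbb{F}_p}\zeta_p^{ya}$, which equals $1$ when $a=0$ and $0$ otherwise, I would write
$$
N_{f,\beta}=\frac{1}{p^2}\sum_{x\in\mathbb{F}_q}\sum_{y,z\in\mathbb{F}_p}\zeta_p^{yf(x)+z\mathrm{Tr}_1^m(\beta x)},
$$
and then split the inner double sum over $(y,z)$ into four pieces according to whether each of $y,z$ is zero or nonzero.

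The four pieces evaluate cleanly. The term $y=z=0$ contributes $\sum_{x\in\mathbb{F}_q}1=p^m$. The term $y=0,\ z\neq 0$ contributes $\sum_{z\in\mathbb{F}_p^{\times}}\sum_{x\in\mathbb{F}_q}\zeta_p^{z\mathrm{Tr}_1^m(\beta x)}$, which vanishes because $\beta\neq 0$ forces $\sum_{x}\zeta_p^{z\mathrm{Tr}_1^m(\beta x)}=0$ for each $z\neq 0$. The term $y\neq 0,\ z=0$ is exactly the sum computed in Lemma \ref{2lem9}, and the term $y\neq 0,\ z\neq 0$ is precisely the sum $A$ computed in Lemma \ref{2lem10}. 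Substituting gives $p^2 N_{f,\beta}$ as $p^m$ plus the Lemma \ref{2lem9} contribution plus the Lemma \ref{2lem10} contribution.

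The remaining work is bookkeeping over the two parities of $m$. For $m$ even the two nontrivial contributions are $\varepsilon(p-1)\sqrt{p^*}^m$ from Lemma \ref{2lem9} and either $\varepsilon(p-1)^2\sqrt{p^*}^m$ or $-\varepsilon(p-1)\sqrt{p^*}^m$ from Lemma \ref{2lem10}, according to whether $f^*(\beta)=0$; in the first case they reinforce to $\varepsilon(p-1)p\sqrt{p^*}^m$ and in the second they cancel exactly. For $m$ odd the Lemma \ref{2lem9} contribution is $0$, so only the Lemma \ref{2lem10} value survives. In each case I divide by $p^2$ and rewrite the power of $\sqrt{p^*}$ using $\sqrt{p^*}^m={-1 \overwithdelims () p}^{m/2}p^{m/2}$ when $m$ is even; when $m$ is odd the factor $p^{(m+1)/2}$ from Lemma \ref{2lem10} becomes $p^{(m-3)/2}$ after division, and I convert ${-1 \overwithdelims () p}^{(m+1)/2}$ into $\eta^{(m+1)/2}(-1)$ via the identity $\eta(-1)={-1 \overwithdelims () p}$, which holds since $m$ is odd (as $(p^m-1)/2\equiv(p-1)/2\bmod 2$ in that case). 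This produces the four stated formulas.

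I do not expect a genuine obstacle, since all the delicate exponential-sum evaluations are already packaged in Lemmas \ref{2lem9} and \ref{2lem10}. The only steps requiring care are the reinforce-versus-cancel behaviour of the two nontrivial terms in the $m$ even case, and the cosmetic passage between the Legendre symbol and $\eta$ evaluated at $-1$ in the $m$ odd case.
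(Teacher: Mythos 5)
Your proposal is correct and follows essentially the same route as the paper's own proof: the same orthogonality expansion of $N_{f,\beta}$ into the four pieces over $(y,z)$, the vanishing of the $y=0,\ z\neq 0$ piece since $\beta\neq 0$, and the substitution of Lemmas \ref{2lem9} and \ref{2lem10} for the two nontrivial pieces, with the same case bookkeeping (including the reinforce/cancel dichotomy for even $m$ and the conversion $\eta(-1)={-1 \overwithdelims () p}$ valid for odd $m$).
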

\begin{proof}
From $
\sum_{x\in \mathbb{F}_p}\zeta_p^{\mathrm{Tr}_1^m(ax)}=
\left\{
  \begin{array}{ll}
    0, & \hbox{$a\in \mathbb{F}_p^{\times}$;} \\
    p, & \hbox{$a=0$,}
  \end{array}
\right.
$
we have
\begin{align*}
N_{f,\beta}=&
p^{-2}\sum_{x\in\mathbb{F}_q}
(\sum_{y\in \mathbb{F}_p}\zeta_p^{yf(x)})(
\sum_{z\in \mathbb{F}_p}\zeta_p^{z\mathrm{Tr}_1^m(\beta x)})
\\
=& p^{-2} \sum_{y,z\in \mathbb{F}_p}
\sum_{x\in \mathbb{F}_q}
\zeta_p^{yf(x)+z\mathrm{Tr}_1^m(\beta x)}\\
=&p^{-2} \sum_{x\in \mathbb{F}_q}
\zeta_p^{0f(x)+0\mathrm{Tr}_1^m(\beta x)}
+p^{-2}\sum_{z\in \mathbb{F}_p^{\times}}
\sum_{x\in \mathbb{F}_q}\zeta_p^{0f(x)+z\mathrm{Tr}_1^m
(\beta x)}\\
&
+p^{-2}\sum_{y\in \mathbb{F}_p^{\times}}
\sum_{x\in \mathbb{F}_q}\zeta_p^{yf(x)+0\mathrm{Tr}_1^m
(\beta x)}+
p^{-2}\sum_{y,z\in \mathbb{F}_p^{\times}}
\sum_{x\in \mathbb{F}_q}\zeta_p^{yf(x)+z\mathrm{Tr}_1^m
(\beta x)}\\
=&p^{m-2}+p^{-2}\sum_{y\in \mathbb{F}_p^{\times}}
\sum_{x\in \mathbb{F}_q}\zeta_p^{yf(x)}
+p^{-2}\sum_{y,z\in \mathbb{F}_p^{\times}}
\sum_{x\in \mathbb{F}_q}\zeta_p^{yf(x)+z\mathrm{Tr}_1^m
(\beta x)}.
\end{align*}

When $m$ is even, from Lemma \ref{2lem9} and Lemma \ref{2lem10},

\noindent If $f^*(\beta)=0$, then
\begin{align*}
N_{f,\beta}&=p^{m-2}+p^{-2}(\varepsilon
(p-1)\sqrt{p^*}^m+\varepsilon (p-1)^2 \sqrt{p^*}^m)\\
&=p^{m-2} +p^{-2}\varepsilon (p-1) \sqrt{p^*}^mp\\
&=p^{m-2}+\varepsilon (p-1) \sqrt{p^*}^m p^{-1}\\
&=p^{m-2}+\varepsilon {-1 \overwithdelims () p}^{m/2}(p-1)p^{
(m-2)/2}.
\end{align*}
If $f^*(\beta)\neq 0$, then
$$
N_{f,\beta}=p^{m-2}.
$$

When $m$ is odd, from Lemma \ref{2lem9} and Lemma \ref{2lem10},

\noindent If $f^*(\beta)=0$, then
$$
N_{f,\beta}=p^{m-2}.
$$
If $f^*(\beta)\neq 0$, then
$$
N_{f,\beta}=p^{m-2}+\varepsilon
{ f^*(\beta) \overwithdelims () p}{-1 \overwithdelims () p}^{(m+1)/2}
(p-1)p^{(m-3)/2}.
$$

Hence, this lemma follows.
\end{proof}

\begin{lemma}\label{2lem12}
Let $f(x)\in \mathcal{RF}$, then
$$
\sum_{y\in \mathbb{F}_p^{\times}}
\sum_{x\in \mathbb{F}_q}\zeta_{p}^{y^2
f(x)}=\varepsilon (p-1) \sqrt{p^*}^m,
$$
where $\varepsilon$ is the sign of the Walsh
transform of $f(x)$.
\end{lemma}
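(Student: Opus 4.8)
The plan is to reduce the double sum to the single Walsh value $\mathcal{W}_f(0)$ and then to exploit the fact that squaring the outer variable $y$ makes the relevant Legendre symbol trivial, so that—unlike in Lemma \ref{2lem9}—no parity-of-$m$ distinction survives.

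First I would record the base value of the inner sum. Since $f\in\mathcal{RF}$ we have $f(0)=0$, and Proposition \ref{2pro5} gives $f^*(0)=0$; hence evaluating the defining relation (\ref{equ2}) at $\beta=0$ yields
$$
\sum_{x\in\mathbb{F}_q}\zeta_p^{f(x)}=\mathcal{W}_f(0)=\varepsilon\sqrt{p^*}^{m}.
$$
Next, for a fixed $y\in\mathbb{F}_p^{\times}$ I would rewrite the inner sum using the Galois automorphism $\sigma_{y^2}\in Gal(Q(\zeta_p)/Q)$, which sends $\zeta_p\mapsto\zeta_p^{y^2}$ and fixes $Q$. Applying $\sigma_{y^2}$ term by term gives
$$
\sum_{x\in\mathbb{F}_q}\zeta_p^{y^2 f(x)}
=\sigma_{y^2}\Big(\sum_{x\in\mathbb{F}_q}\zeta_p^{f(x)}\Big)
=\sigma_{y^2}\big(\varepsilon\sqrt{p^*}^{m}\big).
$$
By Lemma \ref{2A1}(iii), $\sigma_{y^2}(\sqrt{p^*})={y^2 \overwithdelims () p}\sqrt{p^*}$, and since $y^2$ is a nonzero square modulo $p$ we have ${y^2 \overwithdelims () p}=1$, so $\sigma_{y^2}(\sqrt{p^*}^{m})=\sqrt{p^*}^{m}$. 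As $\varepsilon\in\{1,-1\}\subset Q$ is fixed by every automorphism, the inner sum equals $\varepsilon\sqrt{p^*}^{m}$ for every $y\in\mathbb{F}_p^{\times}$, independently of $y$.

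Finally, summing this constant value over the $p-1$ elements of $\mathbb{F}_p^{\times}$ gives exactly $\varepsilon(p-1)\sqrt{p^*}^{m}$, which is the claimed identity.

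There is no genuine obstacle in this argument; the only point worth emphasizing is the contrast with Lemma \ref{2lem9}. There the automorphism $\sigma_y$ introduces the factor ${y \overwithdelims () p}^{m}$, whose sum over $\mathbb{F}_p^{\times}$ vanishes when $m$ is odd and equals $p-1$ when $m$ is even. Here, replacing $y$ by $y^2$ forces ${y^2 \overwithdelims () p}^{m}=1$ identically, which removes the parity dichotomy and produces the single value $\varepsilon(p-1)\sqrt{p^*}^{m}$ for all $m$.
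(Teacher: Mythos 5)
Your proof is correct and takes essentially the same route as the paper: both arguments apply the Galois automorphism $\sigma_{y^2}$ to rewrite the inner sum as $\sigma_{y^2}(\mathcal{W}_f(0))$, use $f^*(0)=0$ (Proposition \ref{2pro5}) to get $\mathcal{W}_f(0)=\varepsilon\sqrt{p^*}^{m}$, invoke Lemma \ref{2A1} to see that $\sigma_{y^2}$ fixes $\sqrt{p^*}^{m}$ since $y^2$ is a nonzero square, and then sum the constant over the $p-1$ elements of $\mathbb{F}_p^{\times}$. Your added contrast with Lemma \ref{2lem9} is sound commentary but goes beyond what the paper's proof states explicitly.
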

\begin{proof}
Let $A=\sum_{y\in \mathbb{F}_p^{\times}}
\sum_{x\in \mathbb{F}_q}\zeta_{p}^{y^2
f(x)}$.
From Lemma \ref{2A1} and
$f^*(0)=0$, we have
$$A=\sum_{y\in F_p^{\times}}\sigma_{y^2}(W_f(0))=\sum_{y\in F_p^{\times}}\sigma_{y^2}(\varepsilon
\sqrt{p^*}^m\zeta_p^{f*(0)})=
\varepsilon(p-1)\sqrt{p^*}^m.
$$
Hence, this lemma follows.
\end{proof}

\begin{lemma}\label{2lem13}
Let $f(x)\in \mathcal{RF}$ and
$\beta\in \mathbb{F}_q^{\times}$. Then
$$
\sum_{y,z\in \mathbb{F}_p^{\times}}
\sum_{x\in \mathbb{F}_q}\zeta_p^{y^2f(x)
+z\mathrm{Tr}_1^m(\beta x)}=
\left\{
  \begin{array}{ll}
    \varepsilon (p-1)^2\sqrt{p^*}^m, & \hbox{
$f^*(\beta)=0$;} \\
    \varepsilon (p-1)\sqrt{p^*}^m
(\sqrt{p^*}-1), & \hbox{
$f^*(\beta)\in \mathbb{F}_p^{\times 2}$;} \\
   -\varepsilon (p-1)\sqrt{p^*}^m
(\sqrt{p^*}+1) , & \hbox{$f^*(\beta)\in \mathbb{F}_p^{\times }\backslash\mathbb{F}_p^{\times 2}$.}
  \end{array}
\right.
$$
\end{lemma}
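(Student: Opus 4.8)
The plan is to adapt the substitution argument used for Lemma \ref{2lem10} to the square exponent $y^2$. Write $A$ for the triple sum and fix an integer $l$ with $l(h-1)\equiv 1 \mod (p-1)$, where $h$ is the exponent attached to $f$ in condition (ii) of $\mathcal{RF}$. For each pair $(y,z)\in(\mathbb{F}_p^{\times})^2$ the map $x\mapsto (z/y^2)^l x$ permutes $\mathbb{F}_q$, so applying it inside the sum and invoking $f(ax)=a^hf(x)$ replaces the exponent $y^2f(x)+z\mathrm{Tr}_1^m(\beta x)$ by $y^2(z/y^2)^{hl}f(x)+z(z/y^2)^l\mathrm{Tr}_1^m(\beta x)$. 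Using $hl\equiv l+1\mod (p-1)$ one checks that $y^2(z/y^2)^{hl}=z(z/y^2)^l$, so both coefficients collapse to the single quantity $u:=z(z/y^2)^l=z^{l+1}y^{-2l}$, and the summand becomes $\zeta_p^{u(f(x)+\mathrm{Tr}_1^m(\beta x))}$.

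The next step is to understand how $u$ ranges over $\mathbb{F}_p^{\times}$. First I would record that $h$ is even (as in Proposition \ref{2pro5}), whence $l+1\equiv hl \mod(p-1)$ is even, so $l$ is odd; consequently $z^{l+1}$ is a nonzero square for every $z\in\mathbb{F}_p^{\times}$. For a fixed $z$, since $y^{-2l}=(y^{-l})^2$ and $y\mapsto y^{-l}$ is a bijection of $\mathbb{F}_p^{\times}$ (because $(l,p-1)=1$), the element $u=z^{l+1}(y^{-l})^2$ runs over the set $\mathbb{F}_p^{\times 2}$ of nonzero squares, each value attained exactly twice, independently of $z$. Summing first over $y$ and then over the $p-1$ values of $z$ therefore gives
$$
A=2(p-1)\sum_{u\in\mathbb{F}_p^{\times 2}}\sigma_u\!\big(\mathcal{W}_f(\beta)\big),
$$
where I have recognized $\sum_{x\in\mathbb{F}_q}\zeta_p^{u(f(x)+\mathrm{Tr}_1^m(\beta x))}=\sigma_u(\mathcal{W}_f(\beta))$.

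To finish, substitute $\mathcal{W}_f(\beta)=\varepsilon\sqrt{p^*}^{m}\zeta_p^{f^*(\beta)}$ from Equation (\ref{equ2}). Because every $u$ in the range is a square, Lemma \ref{2A1}(iii) gives $\sigma_u(\sqrt{p^*}^{m})=\sqrt{p^*}^{m}$, so
$$
A=2(p-1)\varepsilon\sqrt{p^*}^{m}\sum_{u\in\mathbb{F}_p^{\times 2}}\zeta_p^{uf^*(\beta)}.
$$
The remaining sum is a standard partial Gauss sum: when $f^*(\beta)=0$ it equals $(p-1)/2$; when $f^*(\beta)\in\mathbb{F}_p^{\times 2}$ the products $uf^*(\beta)$ exhaust the squares and Lemma \ref{2lem3} together with $\sum_{v\in\mathbb{F}_p^{\times}}\zeta_p^v=-1$ gives $(\sqrt{p^*}-1)/2$; when $f^*(\beta)$ is a nonsquare the products exhaust the nonsquares and the sum is $(-1-\sqrt{p^*})/2$. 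Multiplying by $2(p-1)\varepsilon\sqrt{p^*}^m$ yields the three stated values.

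The main obstacle is the bookkeeping in the second paragraph: one must verify both that the substitution genuinely identifies the two coefficients (which hinges on $hl\equiv l+1$) and, crucially, that $u$ sweeps the nonzero squares with constant multiplicity independent of $z$. The latter is exactly where the parity of $l$ enters---were $l$ even, $z^{l+1}$ would be a square only for square $z$, the inner sum would depend on $z$, and the three clean cases would not separate. Once the multiplicity-two behaviour over $\mathbb{F}_p^{\times 2}$ is pinned down, the rest is the routine quadratic Gauss sum evaluation.
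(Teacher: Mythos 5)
Your proposal is correct and takes essentially the same route as the paper's proof: the same substitution $x\mapsto(\frac{z}{y^2})^{l}x$, the same collapse of both coefficients to $u=z(\frac{z}{y^2})^{l}$ via $hl\equiv l+1 \bmod (p-1)$, and the same reduction to $\sigma$-conjugates of $\mathcal{W}_f(\beta)$ over the squares followed by a quadratic Gauss sum evaluation. The only (minor, welcome) differences are that you evaluate the partial sums $\sum_{u\in\mathbb{F}_p^{\times 2}}\zeta_p^{uf^*(\beta)}$ directly from Lemma \ref{2lem3}, where the paper instead uses the full sum $\sum_{y\in\mathbb{F}_p}\zeta_p^{y^2f^*(\beta)}$ via Lemma \ref{2lem2}, and that you make explicit the parity argument ($h$ even forces $l$ odd, so $z^{l+1}$ is always a square) that the paper needs but asserts without justification when it claims the values sweep $\mathbb{F}_p^{\times 2}$ twice each, independently of $z$.
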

\begin{proof}
Let $A=\sum_{y,z\in \mathbb{F}_p^{\times}}
\sum_{x\in \mathbb{F}_q}\zeta_p^{y^2f(x)
+z\mathrm{Tr}_1^m(\beta x)}$.
Let $l$ be an integer such that
$l(h-1)\equiv 1\mod (p-1)$. Then
\begin{align*}
A=&\sum_{y,z\in \mathbb{F}_p^{\times}}
\sum_{x\in \mathbb{F}_q}\zeta_p^{y^2f((\frac{z}{y^2})^lx)
+z\mathrm{Tr}_1^m(\beta (\frac{z}{y^2})^lx)}\\
=& \sum_{y,z\in \mathbb{F}_p^{\times}}
\sum_{x\in \mathbb{F}_q}\zeta_p^{
y^2(\frac{z}{y^2})^{hl}f(x)+z(\frac{z}{y^2})^l
\mathrm{Tr}_1^m(\beta x)
}.
\end{align*}
Since $y^2(\frac{z}{y^2})^{hl}
=y^2(\frac{z}{y^2})^{l+1}=z(\frac{z}{y^2})^{l}$,
\begin{align*}
A=& \sum_{y,z\in \mathbb{F}_p^{\times}}
\sigma_{z(\frac{z}{y^2})^{l}}(\sum_{x\in \mathbb{F}_q}\zeta_p^{f(x)
+\mathrm{Tr}_1^m(\beta x)})\\
=&\sum_{y,z\in \mathbb{F}_p^{\times}}
\sigma_{z(\frac{z}{y^2})^{l}}(\mathcal{W}_f(\beta)).
\end{align*}
As $y$ runs through $\mathbb{F}_p^{\times}$, $(\frac{z}{y^2})^{l}$
will run through
$\mathbb{F}_p^{\times 2}$ and every
value in  $\mathbb{F}_p^{\times 2}$  is
taken twice. Then
\begin{align*}
A&=\sum_{z\in \mathbb{F}_p^{\times}}\sum_{y\in \mathbb{F}_p^{\times}}
\sigma_{y^2}(\mathcal{W}_f(\beta))\\
&=(p-1)\sum_{y\in \mathbb{F}_p^{\times}}
\sigma_{y^2}(\mathcal{W}_f(\beta))\\
&=(p-1)\sum_{y\in \mathbb{F}_p^{\times}}
\sigma_{y^2}(\varepsilon \sqrt{p^*}^m
\zeta_p^{f^*(\beta)})\\
&=\varepsilon (p-1) \sqrt{p^*}^m
\sum_{y\in \mathbb{F}_p^{\times}}
\sigma_{y^2}(\zeta_p^{f^*(\beta)})\\
&=\varepsilon (p-1) \sqrt{p^*}^m
\sum_{y\in \mathbb{F}_p^{\times}}
\zeta_p^{y^2 f^*(\beta)}.
\end{align*}
Hence,
$$
A=\varepsilon(p-1)\sqrt{p^*}^m
\sum_{y\in \mathbb{F}_p}
\zeta_p^{y^2 f^*(\beta)}-\varepsilon(p-1)\sqrt{p^*}^m .
$$
From Lemma \ref{2lem2}, if $f^*(\beta)=0$, then
$$
A=\varepsilon (p-1)^2 \sqrt{p^*}^m.
$$
If $f^*(\beta)\in \mathbb{F}_p^{\times 2}$, then
$$
A=\varepsilon (p-1) \sqrt{p^*}^m (\sqrt{p^*}-1).
$$
If $f^*(\beta)\in \mathbb{F}_p^{\times}\backslash
\mathbb{F}_p^{\times 2}$, then
$$
A=-\varepsilon (p-1)\sqrt{p^*}^m (
\sqrt{p^*}+1).
$$

Hence, this lemma follows.
\end{proof}
\begin{lemma}\label{2lem14}
Let $f(x)\in \mathcal{RF}$ and $\beta \in \mathbb{F}_q^{\times}$. Let
$$
N_{sq,\beta}=\#\{x\in \mathbb{F}_q:
f(x)\in \mathbb{F}_p^{\times 2}, \mathrm{Tr}_1^m(
\beta x)=0 \},
$$
$$
N_{nsq,\beta}=\#\{x\in \mathbb{F}_q:
f(x)\in \mathbb{F}_p^{\times}\backslash\mathbb{F}_p^{\times 2}, \mathrm{Tr}_1^m(
\beta x)=0 \}.
$$

{\rm (1)} If $m$ is even, then
$$
N_{sq,\beta}=
\left\{
  \begin{array}{ll}
    \frac{p-1}{2}[p^{m-2}
-\varepsilon {-1 \overwithdelims () p}^{m/2}p^{(m-2)/2}], & \hbox{$f^*(\beta)=0$ or $f^*(\beta)\in
\mathbb{F}_p^{\times}\backslash\mathbb{F}_p^{\times 2}$;} \\
  \frac{p-1}{2}[p^{m-2}
+\varepsilon {-1 \overwithdelims () p}^{m/2}p^{(m-2)/2}]  , & \hbox{$f^*(\beta)\in
\mathbb{F}_p^{\times 2}$.}
  \end{array}
\right.
$$
$$
N_{nsq,\beta}=
\left\{
  \begin{array}{ll}
    \frac{p-1}{2}[p^{m-2}
-\varepsilon {-1 \overwithdelims () p}^{m/2}p^{(m-2)/2}], & \hbox{$f^*(\beta)=0$ or $f^*(\beta)\in
\mathbb{F}_p^{\times 2}$;} \\
  \frac{p-1}{2}[p^{m-2}
+\varepsilon {-1 \overwithdelims () p}^{m/2}p^{(m-2)/2}]  , & \hbox{$f^*(\beta)\in
\mathbb{F}_p^{\times}\backslash\mathbb{F}_p^{\times 2}$.}
  \end{array}
\right.
$$

{\rm (2)} If $m$ is odd, then
$$
N_{sq,\beta}=
\left\{
  \begin{array}{ll}
    \frac{p-1}{2}[p^{m-2}
+\varepsilon \sqrt{p*}^{m-1}], & \hbox{$f^*(\beta)=0$;} \\
\frac{p-1}{2}[p^{m-2}
-\varepsilon \sqrt{p*}^{m-3}], & \hbox{ $f^*(\beta)\in
\mathbb{F}_p^{\times 2}$;} \\
  \frac{p-1}{2}[p^{m-2}
+\varepsilon \sqrt{p*}^{m-3}]  , & \hbox{$f^*(\beta)\in
\mathbb{F}_p^{\times}\backslash\mathbb{F}_p^{\times 2}$.}
  \end{array}
\right.
$$
$$
N_{nsq,\beta}=
\left\{
  \begin{array}{ll}
    \frac{p-1}{2}[p^{m-2}
-\varepsilon \sqrt{p*}^{m-1}], & \hbox{$f^*(\beta)=0$;} \\
\frac{p-1}{2}[p^{m-2}
-\varepsilon \sqrt{p*}^{m-3}], & \hbox{ $f^*(\beta)\in
\mathbb{F}_p^{\times 2}$;} \\
  \frac{p-1}{2}[p^{m-2}
+\varepsilon \sqrt{p*}^{m-3}]  , & \hbox{$f^*(\beta)\in
\mathbb{F}_p^{\times}\backslash\mathbb{F}_p^{\times 2}$.}
  \end{array}
\right.
$$
\end{lemma}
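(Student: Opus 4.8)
The plan is to pin down $N_{sq,\beta}$ and $N_{nsq,\beta}$ by computing their sum and their difference separately, and then inverting the resulting pair of linear equations. Since $\beta\neq 0$, the set $\{x\in\mathbb{F}_q:\mathrm{Tr}_1^m(\beta x)=0\}$ is a hyperplane with $p^{m-1}$ points, and it is partitioned according to whether $f(x)=0$, $f(x)\in\mathbb{F}_p^{\times 2}$, or $f(x)\in\mathbb{F}_p^{\times}\backslash\mathbb{F}_p^{\times 2}$. Hence
$$N_{f,\beta}+N_{sq,\beta}+N_{nsq,\beta}=p^{m-1},$$
so that $N_{sq,\beta}+N_{nsq,\beta}=p^{m-1}-N_{f,\beta}$, with $N_{f,\beta}$ already supplied by Lemma \ref{2lem11} in all cases.

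To get the difference, I would introduce the auxiliary sum
$$T:=\sum_{\substack{x\in\mathbb{F}_q\\ \mathrm{Tr}_1^m(\beta x)=0}}\;\sum_{y\in\mathbb{F}_p^{\times}}\zeta_p^{y^2f(x)}$$
and evaluate it in two independent ways. Evaluating the inner sum first via the Gauss-sum identity $\sum_{y\in\mathbb{F}_p^{\times}}\zeta_p^{ay^2}={a \overwithdelims () p}\sqrt{p^*}-1$ for $a\in\mathbb{F}_p^{\times}$ (a direct consequence of Lemma \ref{2lem3}) and the trivial value $p-1$ when $a=0$, the inner sum depends only on whether $f(x)$ is zero, a nonzero square, or a nonzero nonsquare, giving
$$T=(p-1)N_{f,\beta}+\sqrt{p^*}\,(N_{sq,\beta}-N_{nsq,\beta})-(N_{sq,\beta}+N_{nsq,\beta}).$$
Alternatively, writing the indicator of $\mathrm{Tr}_1^m(\beta x)=0$ as $\frac1p\sum_{z\in\mathbb{F}_p}\zeta_p^{z\mathrm{Tr}_1^m(\beta x)}$ and isolating the $z=0$ contribution gives
$$T=\frac1p\sum_{y\in\mathbb{F}_p^{\times}}\sum_{x\in\mathbb{F}_q}\zeta_p^{y^2f(x)}+\frac1p\sum_{y,z\in\mathbb{F}_p^{\times}}\sum_{x\in\mathbb{F}_q}\zeta_p^{y^2f(x)+z\mathrm{Tr}_1^m(\beta x)},$$
where the first term is $\frac1p\varepsilon(p-1)\sqrt{p^*}^m$ by Lemma \ref{2lem12} and the second is $\frac1p$ times the quantity evaluated in Lemma \ref{2lem13}. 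Equating the two expressions for $T$ and using the known value of $N_{sq,\beta}+N_{nsq,\beta}$ isolates $N_{sq,\beta}-N_{nsq,\beta}$.

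Finally I would recover $N_{sq,\beta}=\frac12[(N_{sq,\beta}+N_{nsq,\beta})+(N_{sq,\beta}-N_{nsq,\beta})]$ and $N_{nsq,\beta}=\frac12[(N_{sq,\beta}+N_{nsq,\beta})-(N_{sq,\beta}-N_{nsq,\beta})]$, substituting the explicit data from Lemma \ref{2lem11} and Lemma \ref{2lem13} in each of the three cases $f^*(\beta)=0$, $f^*(\beta)\in\mathbb{F}_p^{\times 2}$, $f^*(\beta)\in\mathbb{F}_p^{\times}\backslash\mathbb{F}_p^{\times 2}$, and treating $m$ even and $m$ odd separately. I expect the only real difficulty to be bookkeeping rather than anything conceptual: one must simplify products and quotients such as $\sqrt{p^*}^{m}/p$, $\sqrt{p^*}^{m}\cdot\sqrt{p^*}$, and $\sqrt{p^*}^{m}(\sqrt{p^*}\mp 1)$ into the stated powers of $\sqrt{p^*}$, and convert the factor $\eta^{(m+1)/2}(-1)$ appearing in the odd-$m$ formula of Lemma \ref{2lem11} into the corresponding power of $\sqrt{p^*}$ via $\eta(-1)={-1 \overwithdelims () p}^m$. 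Some care is also needed to ensure the Gauss-sum constant $\sqrt{p^*}$ enters with the correct sign, so that the roles of the square and nonsquare cases of $f^*(\beta)$ are not accidentally exchanged.
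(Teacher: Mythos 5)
Your proposal is correct and follows essentially the same route as the paper: the paper evaluates $A=\sum_{x\in\mathbb{F}_q}\bigl(\sum_{y\in\mathbb{F}_p}\zeta_p^{y^2f(x)}\bigr)\bigl(\sum_{z\in\mathbb{F}_p}\zeta_p^{z\mathrm{Tr}_1^m(\beta x)}\bigr)$ in two ways, obtaining $A=N_{f,\beta}p^2+(N_{sq,\beta}-N_{nsq,\beta})p\sqrt{p^*}$ on one side and $q+\varepsilon(p-1)\sqrt{p^*}^m$ plus the Lemma \ref{2lem13} sum on the other, then solves together with $N_{f,\beta}+N_{sq,\beta}+N_{nsq,\beta}=p^{m-1}$ and Lemma \ref{2lem11}, exactly as you do. Your $T$ differs from the paper's $A$ only by dropping the $y=0$ term and an overall factor of $p$ (indeed $A=p(p^{m-1}+T)$), a trivial reindexing, and your case-by-case bookkeeping reproduces the stated values.
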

\begin{proof}
Let $A=\sum_{x\in \mathbb{F}_q}(
\sum_{y\in \mathbb{F}_p}\zeta_p^{y^2f(x)})(
\sum_{z\in \mathbb{F}_p}\zeta_p^{z\mathrm{Tr}_1^m(\beta x)})$.
Note that
$$
\sum_{z\in \mathbb{F}_p}\zeta_p^{z\mathrm{Tr}_1^m(\beta x)}
=
\left\{
  \begin{array}{ll}
    p, & \hbox{$\mathrm{Tr}_1^m(\beta x)=0$;} \\
    0, & \hbox{$\mathrm{Tr}_1^m(\beta x)\neq 0$.}
  \end{array}
\right.
$$
and
$$
\sum_{y\in \mathbb{F}_p}\zeta_p^{y^2f(x)}
=
\left\{
  \begin{array}{ll}
    p, & \hbox{$f(x)=0$;} \\
\sqrt{p^*}, & \hbox{$f(x)\in \mathbb{F}_{p}
^{\times 2}$;} \\
    -\sqrt{p^*}, & \hbox{$f(x)\in
\mathbb{F}_{p}
^{\times }\backslash\mathbb{F}_{p}
^{\times 2}$.}
  \end{array}
\right.
$$
Then, we have
\begin{align}\label{equ4}
A&= N_{f,\beta}p^2+N_{sq,\beta}(\sqrt{p^*})p
+N_{nsq,\beta}(-\sqrt{p^*})p\nonumber\\
&=N_{f,\beta}p^2+(N_{sq,\beta}-N_{nsq,\beta})p
\sqrt{p^*},
\end{align}
where $N_{f,\beta}=
\{x\in \mathbb{F}_q: f(x)=0, \mathrm{Tr}_1^m
(\beta x)=0\}$.
Further, we have
\begin{align*}
A=& \sum_{z, y\in \mathbb{F}_p}\sum_{x\in \mathbb{F}_q}\zeta_p^{y^2f(x)+z\mathrm{Tr}_1^m(\beta x)}\\
=&q+\sum_{z\in \mathbb{F}_p^{\times}}\sum_{x\in \mathbb{F}_q} \zeta_{p}^{z\mathrm{Tr}_1^m(\beta x)}
+\sum_{y\in \mathbb{F}_p^{\times}}\sum_{x\in \mathbb{F}_q} \zeta_{p}^{y^2f(x)}
+\sum_{y,z\in \mathbb{F}_p^{\times}}\sum_{x\in \mathbb{F}_q} \zeta_{p}^{y^2f(x)+z\mathrm{Tr}_1^m(\beta x)}\\
=&q+\sum_{y\in \mathbb{F}_p^{\times}}\sum_{x\in \mathbb{F}_q} \zeta_{p}^{y^2f(x)}
+\sum_{y,z\in \mathbb{F}_p^{\times}}\sum_{x\in \mathbb{F}_q} \zeta_{p}^{y^2f(x)+z\mathrm{Tr}_1^m(\beta x)}.
\end{align*}
From Lemma \ref{2lem12}, we have
$$
A=q+\varepsilon (p-1) \sqrt{p^*}^m
+\sum_{y,z\in \mathbb{F}_p^{\times}}\sum_{x\in \mathbb{F}_q} \zeta_{p}^{y^2f(x)+z\mathrm{Tr}_1^m(\beta x)}.
$$

When $m$ is even,  from Lemma \ref{2lem13},

\noindent if $f^*(\beta)=0$, then
$$
A=q+\varepsilon (p-1) \sqrt{p^*}^m
+\varepsilon (p-1)^2 \sqrt{p^*}^m
=q+\varepsilon (p-1)p\sqrt{p^*}^m.
$$
From Equation (\ref{equ4})
and $N_{f,\beta}+N_{sq,\beta}+N_{nsq,\beta}
=p^{m-1}$,
$$
N_{sq,\beta}=N_{nsq,\beta}
=\frac{p-1}{2}[p^{m-2}-
\varepsilon {-1 \overwithdelims () p}^{m/2}p^{(m-2)/2}].
$$
If $f^*(\beta)\in \mathbb{F}_{p}^{\times 2}$, then
$$
A=q+\varepsilon (p-1) \sqrt{p^*}^m
+\varepsilon (p-1)\sqrt{p^*}^m(\sqrt{p^*}-1)
=q+\varepsilon (p-1)\sqrt{p^*}^{m+1}.
$$
From Equation (\ref{equ4}), we have
\begin{align*}
&N_{sq,\beta}=\frac{p-1}{2} [p^{m-2}+
\varepsilon {-1 \overwithdelims () p}^{m/2}p^{(m-2)/2}],\\
&N_{nsq,\beta}=
\frac{p-1}{2}[p^{m-2}-\varepsilon \eta^{
m/2}(-1)p^{(m-2)/2}].
\end{align*}
If $f^*(\beta)\in \mathbb{F}_{p}^{\times}
\backslash \mathbb{F}_{p}^{\times 2}$, then
$$
A=q+\varepsilon (p-1) \sqrt{p^*}^m
-\varepsilon (p-1)\sqrt{p^*}^m(\sqrt{p^*}+1)
=q-\varepsilon (p-1)\sqrt{p^*}^{m+1}.
$$
From Equation (\ref{equ4}), we have
\begin{align*}
&N_{sq,\beta}=\frac{p-1}{2} [p^{m-2}-
\varepsilon {-1 \overwithdelims () p}^{m/2}p^{(m-2)/2}],\\
&N_{nsq,\beta}=
\frac{p-1}{2}[p^{m-2}+\varepsilon \eta^{
m/2}(-1)p^{(m-2)/2}].
\end{align*}

When $m$ is odd, from Lemma \ref{2lem13},

\noindent if $f^*(\beta)=0$, then
$$
A=q+\varepsilon (p-1) \sqrt{p^*}^m
+\varepsilon (p-1)^2\sqrt{p^*}^m
=q+\varepsilon (p-1)p\sqrt{p^*}^{m}.
$$
From Equation (\ref{equ4}), we have
\begin{align*}
&N_{sq,\beta}=\frac{p-1}{2} [p^{m-2}+
\varepsilon \sqrt{p^*}^{m-1}],\\
&N_{nsq,\beta}=
\frac{p-1}{2}[p^{m-2}-\varepsilon \sqrt{p^*}^{m-1}].
\end{align*}
If $f^*(\beta)\in \mathbb{F}_p^{\times 2}$, then
$$
A=q+\varepsilon (p-1) \sqrt{p^*}^{m+1}.
$$
From Equation (\ref{equ4}), we have
$$
N_{sq,\beta}=N_{nsq,\beta}=\frac{p-1}{2} [p^{m-2}-
\varepsilon \sqrt{p^*}^{m-3}].
$$
If $f^*(\beta)\in \mathbb{F}_p^{\times}\backslash
\mathbb{F}_p^{\times 2}$, then
$$
A=q-\varepsilon (p-1) \sqrt{p^*}^{m+1}.
$$
From Equation (\ref{equ4}), we have
$$
N_{sq,\beta}=N_{nsq,\beta}=\frac{p-1}{2} [p^{m-2}+
\varepsilon \sqrt{p^*}^{m-3}].
$$

Hence, this lemma follows.
\end{proof}
\section{The weight distributions of linear codes from weakly
regular bent functions}
In this section, we generalize the construction method of linear codes with
few weights by Ding et al.\cite{Ding2015,DD2015}
to general weakly regular bent functions and
determine the weight distributions of the  corresponding linear codes.

Let $f(x)$ be a $p$-ary function mapping from
$\mathbb{F}_q$ to $\mathbb{F}_p$. Define
\begin{equation*}
D_f=\{x\in \mathbb{F}_q^{\times}: f(x)=0\}.
\end{equation*}
Let $n=\#(D_f)$ and $D_f=
\{d_1,d_2,\cdots,d_n\}$. A linear code defined from
$D_f$ is
\begin{equation*}
\mathcal{C}_{D_f}=
\{\mathbf{c}_{\beta}: \beta\in \mathbb{F}_q\},
\end{equation*}
where $\mathbf{c}_{\beta}=
(\mathrm{Tr}_1^m(\beta d_1),\mathrm{Tr}_1^m(\beta d_2), \cdots,\mathrm{Tr}_1^m(\beta d_n)) $.
For a general function $f(x)$, it is difficult to
determine the weight distribution of
$\mathcal{C}_{D_f}$. However, for some special function $f(x)$, this problem can be solved.
Ding et al. \cite{DD2015} determined the weight distribution of
$\mathcal{C}_{D_f}$ for $f(x)=
\mathrm{Tr}_1^m(x^2)$ and proposed  an open problem
on how to determine the weight distribution of  $\mathcal{C}_{D_f}$ for general planar functions
$F(x)$,
where $f(x)={\mathrm{Tr}_1^m(F(x))}$.
Zhou et al. \cite{ZLFH2015} solved the weight distribution of $\mathcal{C}_{D_f}$ for
the case that $F(x)$ is a quadratic planar function. We generalize their work to weakly regular bent functions and solve the open problem proposed  by Ding et al.  \cite{DD2015}. Our results on linear codes
$\mathcal{C}_{D_f}$ from weakly regular bent functions in $\mathcal{RF}$ are listed in the following theorems and corollaries.

\begin{theorem}\label{3thm1}
Let $m$ be even and $f(x)\in \mathcal{RF}$ with  the sign $\varepsilon$ of the Walsh transform. Then
$\mathcal{C}_{D_f}$ is a two-weight linear code with
parameters $[p^{m-1}-1+\varepsilon(p-1)
p^{(m-2)/2}, m]$, whose weight distribution is listed in Table \ref{t3thm1}.
\begin{table}[htbp]
\caption{The weight distribution of
$\mathcal{C}_{D_f}$ for even $m$}
\label{t3thm1}
\begin{center}
\begin{tabular}{|c|c|}
\hline
Weight & Mutilipicity\\
\hline
0& 1\\
\hline
$(p-1)p^{m-2}$& $p^{m-1}-1+\varepsilon
{-1 \overwithdelims () p}^{m/2}(p-1)p^{(m-2)/2}$\\
\hline
$(p-1)[p^{m-2}+\varepsilon {-1 \overwithdelims () p}^{m/2}
p^{(m-2)/2}]$ & $(p-1)[p^{m-1}
-\varepsilon {-1 \overwithdelims () p}^{m/2}p^{(m-2)/2}]$\\
\hline
\end{tabular}
\end{center}
\end{table}
\end{theorem}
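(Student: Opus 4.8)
The plan is to translate the Hamming weight of each codeword into the solution counts already computed in Section \Rmnum{2}, and then to read off the two weights and their multiplicities from Lemmas \ref{2lem7}, \ref{2lem8} and \ref{2lem11}. Since $f\in\mathcal{RF}$ we have $f(0)=0$ and, by Proposition \ref{2pro5}, $f^*(0)=0$, so $\mathcal{W}_f(0)=\varepsilon\sqrt{p^*}^m$ and Lemma \ref{2lem7} applies with this sign. The length is $n=\#(D_f)=N_f(0)-1$ because $x=0$ is the only zero of $f$ removed in passing from $\mathbb{F}_q$ to $\mathbb{F}_q^{\times}$; Lemma \ref{2lem7}(1) then gives $n=p^{m-1}-1+\varepsilon{-1 \overwithdelims () p}^{m/2}(p-1)p^{(m-2)/2}$, which is the stated length (note this also equals $N_{f^*}(0)-1$, since for even $m$ the sign of $f^*$ in Lemma \ref{2lem8}(1) is ${-1 \overwithdelims () p}^m\varepsilon=\varepsilon$; this coincidence is a useful internal check against the first frequency in Table \ref{t3thm1}).

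For the weights, fix $\beta\in\mathbb{F}_q$. The coordinate $\mathrm{Tr}_1^m(\beta d_i)$ vanishes exactly when $d_i$ lies in the hyperplane $\mathrm{Tr}_1^m(\beta x)=0$, so $\mathrm{wt}(\mathbf{c}_\beta)=n-\#\{x\in\mathbb{F}_q^{\times}:f(x)=0,\ \mathrm{Tr}_1^m(\beta x)=0\}$. Since $x=0$ satisfies both $f(0)=0$ and $\mathrm{Tr}_1^m(0)=0$, this punctured count is $N_{f,\beta}-1$, whence $\mathrm{wt}(\mathbf{c}_\beta)=n+1-N_{f,\beta}=N_f(0)-N_{f,\beta}$. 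For $\beta=0$ the codeword is zero; for $\beta\neq0$ I substitute the two values of $N_{f,\beta}$ from Lemma \ref{2lem11}(1). When $f^*(\beta)=0$ the correction term of $N_{f,\beta}$ exactly cancels the one in $N_f(0)$, leaving $\mathrm{wt}(\mathbf{c}_\beta)=(p-1)p^{m-2}$; when $f^*(\beta)\neq0$ one obtains $\mathrm{wt}(\mathbf{c}_\beta)=(p-1)[p^{m-2}+\varepsilon{-1 \overwithdelims () p}^{m/2}p^{(m-2)/2}]$. These are precisely the two nonzero weights in Table \ref{t3thm1}.

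It remains to count multiplicities and confirm the dimension. The weight class is determined solely by whether $f^*(\beta)=0$, so I count these $\beta$ directly: by Proposition \ref{2pro6}, $f^*\in\mathcal{RF}$, and Lemma \ref{2lem8}(1) gives $\#\{\beta\in\mathbb{F}_q:f^*(\beta)=0\}=N_{f^*}(0)$. Discarding $\beta=0$ (counted here because $f^*(0)=0$) leaves $N_{f^*}(0)-1=p^{m-1}-1+\varepsilon{-1 \overwithdelims () p}^{m/2}(p-1)p^{(m-2)/2}$ nonzero $\beta$ with $f^*(\beta)=0$, matching the frequency of the weight $(p-1)p^{m-2}$, while the remaining $(q-1)-(N_{f^*}(0)-1)=(p-1)[p^{m-1}-\varepsilon{-1 \overwithdelims () p}^{m/2}p^{(m-2)/2}]$ values of $\beta$ give the second weight. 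Both nonzero weights are strictly positive for $m\geq4$ (there $p^{(m-2)/2}>1$ dominates the $\pm1$ correction), so $\mathbf{c}_\beta\neq0$ for every $\beta\neq0$; this forces $\beta\mapsto\mathbf{c}_\beta$ to be injective, giving $\dim\mathcal{C}_{D_f}=m$ with $\beta=0$ the unique weight-$0$ codeword.

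The computation is essentially assembly: once Lemmas \ref{2lem7}, \ref{2lem8} and \ref{2lem11} are available, the weights and frequencies drop out by arithmetic. The only genuinely delicate point is the dimension claim, which rests on the positivity of $(p-1)[p^{m-2}+\varepsilon{-1 \overwithdelims () p}^{m/2}p^{(m-2)/2}]$; I expect this to be the sole place needing care, the borderline case $m=2$ with $\varepsilon{-1 \overwithdelims () p}=-1$ being genuinely degenerate and, if not excluded, requiring separate comment.
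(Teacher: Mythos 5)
Your proof is correct and takes essentially the same route as the paper, whose proof simply cites Lemmas \ref{2lem7} and \ref{2lem11} (with Lemma \ref{2lem8} --- i.e.\ Lemma \ref{2lem7} applied to $f^*$ via Proposition \ref{2pro6} --- implicitly supplying the multiplicities exactly as you spell out). Two small merits of your write-up worth keeping in mind: your length $p^{m-1}-1+\varepsilon{-1 \overwithdelims () p}^{m/2}(p-1)p^{(m-2)/2}$ is the one consistent with Table \ref{t3thm1}, the factor ${-1 \overwithdelims () p}^{m/2}$ being silently dropped in the theorem's parameter display, and your observation that the case $m=2$ with $\varepsilon{-1 \overwithdelims () p}=-1$ is degenerate (the second weight, and indeed the length, can vanish) is a legitimate caveat needed for the dimension claim that the paper leaves unaddressed.
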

\begin{proof}
From Lemma \ref{2lem7} and Lemma \ref{2lem11}, this theorem follows.
\end{proof}
\begin{theorem}\label{3thm2}
Let $m$ be odd and $f(x)\in \mathcal{RF}$. Then
$\mathcal{C}_{D_f}$ is a three-weight linear code with
parameters $[p^{m-1}-1, m]$, whose weight distribution is listed in Table \ref{t3thm2}.
\begin{table}[htbp]
\caption{The weight distribution of
$\mathcal{C}_{D_f}$ for odd $m$}
\label{t3thm2}
\begin{center}
\begin{tabular}{|c|c|}
\hline
Weight & Mutilipicity\\
\hline
0& 1\\
\hline
$(p-1)p^{m-2}$& $p^{m-1}-1$\\
\hline
$(p-1)(p^{m-2}-
p^{(m-3)/2})$ & $\frac{p-1}{2}(p^{m-1}
+{p}^{(m-1)/2})$\\
\hline
$(p-1)(p^{m-2}+
p^{(m-3)/2})$ & $\frac{p-1}{2}(p^{m-1}
- {p}^{(m-1)/2})$\\
\hline
\end{tabular}
\end{center}
\end{table}
\end{theorem}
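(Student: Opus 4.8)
The plan is to follow the same route as the even case (Theorem \ref{3thm1}): read the length off Lemma \ref{2lem7}, the weights off Lemma \ref{2lem11}, and the multiplicities off the value distribution of $f^*$ in Lemma \ref{2lem8}. The feature genuinely new to the odd case, and the step needing care, is that the Walsh sign $\varepsilon$ enters both the weight attached to a given $\beta$ and the number of $\beta$ producing that weight; one must check that these two occurrences cancel, so that the resulting Table \ref{t3thm2} is independent of $\varepsilon$.

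First I would record the length. Since $f(0)=0$ and $m$ is odd, Lemma \ref{2lem7}(2) gives $N_f(0)=p^{m-1}$, hence $n=\#(D_f)=p^{m-1}-1$ as $0\notin D_f$. For $\beta\in\mathbb{F}_q^{\times}$ the weight of $\mathbf{c}_\beta$ is $\#\{x\in D_f:\mathrm{Tr}_1^m(\beta x)\neq 0\}$, which I would write as $n-(N_{f,\beta}-1)=p^{m-1}-N_{f,\beta}$ (the term $-1$ discards $x=0$, which satisfies $f(x)=0$ and $\mathrm{Tr}_1^m(\beta x)=0$ simultaneously). Substituting Lemma \ref{2lem11}(2) then gives weight $(p-1)p^{m-2}$ when $f^*(\beta)=0$, and $(p-1)[\,p^{m-2}-\varepsilon\,{f^*(\beta) \overwithdelims () p}\,\eta^{(m+1)/2}(-1)\,p^{(m-3)/2}\,]$ when $f^*(\beta)\neq 0$. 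Putting $c=\varepsilon\,\eta^{(m+1)/2}(-1)\in\{\pm1\}$, the latter collapses to the two tabulated weights $(p-1)(p^{m-2}\mp p^{(m-3)/2})$ according as $c\,{f^*(\beta) \overwithdelims () p}=\pm1$.

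It remains to count multiplicities. By Proposition \ref{2pro5}, $f^*(0)=0$, so the number of $\beta\neq 0$ with $f^*(\beta)=0$ is $N_{f^*}(0)-1=p^{m-1}-1$ by Lemma \ref{2lem8}(2); this is the multiplicity of $(p-1)p^{m-2}$. Summing Lemma \ref{2lem8}(2) over the $(p-1)/2$ squares and over the $(p-1)/2$ nonsquares, and using $\sqrt{p^*}^{\,m-1}=\eta^{(m-1)/2}(-1)\,p^{(m-1)/2}$ (as $m-1$ is even), I obtain $\#\{\beta:f^*(\beta)\in\mathbb{F}_p^{\times 2}\}=\frac{p-1}{2}(p^{m-1}+c\,p^{(m-1)/2})$ and $\#\{\beta:f^*(\beta)\in\mathbb{F}_p^{\times}\setminus\mathbb{F}_p^{\times 2}\}=\frac{p-1}{2}(p^{m-1}-c\,p^{(m-1)/2})$ with the very same $c$. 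The weight $(p-1)(p^{m-2}-p^{(m-3)/2})$ arises exactly when $c\,{f^*(\beta) \overwithdelims () p}=+1$: if $c=+1$ this is the square class, of count $\frac{p-1}{2}(p^{m-1}+p^{(m-1)/2})$, and if $c=-1$ it is the nonsquare class, of count again $\frac{p-1}{2}(p^{m-1}+p^{(m-1)/2})$; so this multiplicity is $\varepsilon$-free, and symmetrically $(p-1)(p^{m-2}+p^{(m-3)/2})$ has multiplicity $\frac{p-1}{2}(p^{m-1}-p^{(m-1)/2})$. Verifying this cancellation is the crux. Finally, $1+(p^{m-1}-1)+(p-1)p^{m-1}=p^m$ exhausts all $\beta$ and only $\beta=0$ gives the zero word, so $\beta\mapsto\mathbf{c}_\beta$ is injective and $\dim_{\mathbb{F}_p}\mathcal{C}_{D_f}=m$, completing the proof.
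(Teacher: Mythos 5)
Your proof is correct and follows exactly the route the paper intends: the paper's own proof of Theorem \ref{3thm2} is the one-line citation of Lemmas \ref{2lem7} and \ref{2lem11}, and you supply precisely the omitted details---the length from Lemma \ref{2lem7}(2), the weights from Lemma \ref{2lem11}(2) via $\mathrm{wt}(\mathbf{c}_\beta)=p^{m-1}-N_{f,\beta}$, and the multiplicities from Lemma \ref{2lem8}(2) (itself derived from Lemma \ref{2lem7} applied to $f^*$), together with the trivial-kernel argument for the dimension. Your check that $\varepsilon$ cancels between the weight attached to a class of $\beta$ and the size of that class, so that Table \ref{t3thm2} is $\varepsilon$-free, is exactly the implicit crux of the paper's argument and is carried out correctly.
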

\begin{proof}
From Lemma \ref{2lem7} and Lemma \ref{2lem11}, this theorem follows.
\end{proof}
Let $f(x)\in \mathcal{RF}$. For any $ a
\in \mathbb{F}_p^{\times}$ and $x\in
\mathbb{F}_q$, $f(x)=0$ if and only if
$f(ax)=a^hf(x)=0$. Then we can select
a subset $\overline{D}_f$ of
$D_f$ such that $\bigcup_{a\in \mathbb{F}_p^{\times}}
a \overline{D}_f$ is just a partition of
$D_f$. Hence, the code
$\mathcal{C}_{D_f}$ can be punctured into a shorter
linear codes $\mathcal{C}_{\overline{D}_f}$, whose parameters and the weight distributions are given in the following two corollaries.
\begin{corollary}\label{3cor1}
Let $m$ be even and $f(x)\in \mathcal{RF}$ with  the sign $\varepsilon$ of the Walsh transform. Then
$\mathcal{C}_{\overline{D}_f}$ is a two-weight linear code with
parameters $[\frac{p^{m-1}-1}{p-1}+\varepsilon
p^{(m-2)/2}, m]$, whose weight distribution is listed in Table \ref{t3cor1}.
\begin{table}[htbp]
\caption{the weight distribution of
$\mathcal{C}_{\overline{D}_f}$ for even
$m$}
\label{t3cor1}
\begin{center}
\begin{tabular}{|c|c|}
\hline
Weight & Mutilipicity\\
\hline
0& 1\\
\hline
$p^{m-2}$& $p^{m-1}-1+\varepsilon
{-1 \overwithdelims () p}^{m/2}(p-1)p^{(m-2)/2}$\\
\hline
$p^{m-2}+\varepsilon {-1 \overwithdelims () p}^{m/2}
p^{(m-2)/2}$ & $(p-1)[p^{m-1}
-\varepsilon {-1 \overwithdelims () p}^{m/2}p^{(m-2)/2}]$\\
\hline
\end{tabular}
\end{center}
\end{table}
\end{corollary}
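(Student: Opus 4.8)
The plan is to obtain $\mathcal{C}_{\overline{D}_f}$ as a scaled version of the two-weight code $\mathcal{C}_{D_f}$ already analyzed in Theorem~\ref{3thm1}, and to show that puncturing along $\mathbb{F}_p^{\times}$-orbits divides every nonzero weight by exactly $p-1$ while leaving the multiplicities and the dimension untouched. First I would record the orbit structure of the defining set: since $f(ax)=a^hf(x)$ for $a\in\mathbb{F}_p^{\times}$, the condition $f(x)=0$ is invariant under multiplication by $\mathbb{F}_p^{\times}$, so $D_f$ is a disjoint union of orbits $\mathbb{F}_p^{\times}x=\{ax:a\in\mathbb{F}_p^{\times}\}$, each of cardinality $p-1$ because $\mathbb{F}_p^{\times}$ acts freely on $\mathbb{F}_q^{\times}$. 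Consequently $\overline{D}_f$, being a full set of orbit representatives, has $|D_f|/(p-1)$ elements, which matches the claimed length $\frac{p^{m-1}-1}{p-1}+\varepsilon p^{(m-2)/2}$ upon dividing the length from Theorem~\ref{3thm1}.

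The key step is the weight comparison. Fix $\beta\in\mathbb{F}_q$. The weight of $\mathbf{c}_\beta$ in $\mathcal{C}_{D_f}$ is the number of $x\in D_f$ with $\mathrm{Tr}_1^m(\beta x)\neq 0$. For $x\in D_f$ and $a\in\mathbb{F}_p^{\times}$ we have $\mathrm{Tr}_1^m(\beta (ax))=a\,\mathrm{Tr}_1^m(\beta x)$, so $\mathrm{Tr}_1^m(\beta y)$ vanishes either for every $y$ in the orbit $\mathbb{F}_p^{\times}x$ or for none of them. Hence each orbit contributes $0$ or $p-1$ to the weight in $\mathcal{C}_{D_f}$, while the single representative of that orbit contributes the corresponding $0$ or $1$ to the weight in $\mathcal{C}_{\overline{D}_f}$. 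Summing over orbits gives $\mathrm{wt}(\mathbf{c}_\beta\text{ in }\mathcal{C}_{\overline{D}_f})=\frac{1}{p-1}\,\mathrm{wt}(\mathbf{c}_\beta\text{ in }\mathcal{C}_{D_f})$.

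Finally, the same orbit argument shows that $\mathbf{c}_\beta$ is the zero codeword of $\mathcal{C}_{\overline{D}_f}$ if and only if it is the zero codeword of $\mathcal{C}_{D_f}$, so the linear map $\beta\mapsto\mathbf{c}_\beta$ has the same kernel in both cases and $\mathcal{C}_{\overline{D}_f}$ retains dimension $m$; in particular the two nonzero weights and their multiplicities in Table~\ref{t3thm1} transfer verbatim to $\mathcal{C}_{\overline{D}_f}$ after dividing each weight value by $p-1$, which yields exactly Table~\ref{t3cor1}. The only point that genuinely needs care—and the mild obstacle of the argument—is verifying that the weight scales by the uniform factor $p-1$ across \emph{all} codewords, which is precisely the freeness of the $\mathbb{F}_p^{\times}$-action combined with the $\mathbb{F}_p$-linearity of $\mathrm{Tr}_1^m(\beta\,\cdot\,)$; one should also confirm the two arithmetic consistency checks, namely that $p-1$ divides the length of $\mathcal{C}_{D_f}$ and that the transferred multiplicities still sum to $p^m$.
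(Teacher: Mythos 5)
Your proposal is correct and is exactly the argument the paper intends: the paper's proof is the one-line deduction ``From Theorem \ref{3thm1}, this corollary follows,'' relying on the orbit/puncturing setup described just before the corollary, and your write-up supplies precisely the missing details (the free $\mathbb{F}_p^{\times}$-action on $D_f$, the identity $\mathrm{Tr}_1^m(\beta(ax))=a\,\mathrm{Tr}_1^m(\beta x)$ forcing each orbit to contribute $0$ or $p-1$ to the weight, hence uniform division of all weights and the length by $p-1$ with multiplicities and dimension unchanged). No gaps; the arithmetic checks against Theorem \ref{3thm1} and Table \ref{t3cor1} all work out.
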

\begin{proof}
From Theorem \ref{3thm1}, this corollary follows.
\end{proof}
\begin{corollary}\label{3cor2}
Let $m$ be odd and $f(x)\in \mathcal{RF}$.
 Then
$\mathcal{C}_{\overline{D}_f}$ is a three-weight linear code with
parameters $[(p^{m-1}-1)/(p-1), m]$, whose weight distribution is listed in Table \ref{t3cor2}.
\begin{table}[htbp]
\caption{The weight distribution of
$\mathcal{C}_{\overline{D}_f}$ for odd $m$}
\label{t3cor2}
\begin{center}
\begin{tabular}{|c|c|}
\hline
Weight & Mutilipicity\\
\hline
0& 1\\
\hline
$p^{m-2}$& $p^{m-1}-1$\\
\hline
$p^{m-2}-
p^{(m-3)/2}$ & $\frac{p-1}{2}(p^{m-1}
+{p}^{(m-1)/2})$\\
\hline
$p^{m-2}+
p^{(m-3)/2}$ & $\frac{p-1}{2}(p^{m-1}
-{p}^{(m-1)/2})$\\
\hline
\end{tabular}
\end{center}
\end{table}
\end{corollary}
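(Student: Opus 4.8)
The plan is to deduce the corollary directly from Theorem~\ref{3thm2} by exploiting the $\mathbb{F}_p^{\times}$-action underlying the partition $D_f=\bigsqcup_{a\in\mathbb{F}_p^{\times}}a\overline{D}_f$. The starting observation is that the trace is $\mathbb{F}_p$-linear, so for every $a\in\mathbb{F}_p^{\times}$, every $d\in\overline{D}_f$ and every $\beta\in\mathbb{F}_q$ one has $\mathrm{Tr}_1^m(\beta(ad))=a\,\mathrm{Tr}_1^m(\beta d)$. Hence the coordinate of $\mathbf{c}_\beta$ indexed by $ad$ is $a$ times the coordinate indexed by $d$, and since $a\neq 0$ the entry at $ad$ vanishes exactly when the entry at $d$ does.

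First I would turn this into a relation between the two weight functions. Grouping the coordinates of $\mathbf{c}_\beta\in\mathcal{C}_{D_f}$ according to the partition, each nonzero entry of the punctured word $\overline{\mathbf{c}}_\beta\in\mathcal{C}_{\overline{D}_f}$ produces exactly $p-1$ nonzero entries of $\mathbf{c}_\beta$ (one in each block $a\overline{D}_f$), and each zero entry produces $p-1$ zeros. Therefore $\mathrm{wt}(\mathbf{c}_\beta)=(p-1)\,\mathrm{wt}(\overline{\mathbf{c}}_\beta)$ for all $\beta$. This single identity does all the work: the three nonzero weights $(p-1)p^{m-2}$ and $(p-1)(p^{m-2}\mp p^{(m-3)/2})$ of Theorem~\ref{3thm2} are precisely $p-1$ times the three weights claimed in Table~\ref{t3cor2}, and the \emph{same} $\beta$ realises corresponding weights in both codes, so the multiplicities transfer unchanged.

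Next I would settle the parameters. The length is $\#(\overline{D}_f)=\#(D_f)/(p-1)=(p^{m-1}-1)/(p-1)$, using $n=p^{m-1}-1$ from Theorem~\ref{3thm2}. For the dimension, the weight identity shows $\overline{\mathbf{c}}_\beta=0$ if and only if $\mathbf{c}_\beta=0$, so the $\mathbb{F}_p$-linear map $\beta\mapsto\overline{\mathbf{c}}_\beta$ has the same (trivial) kernel as $\beta\mapsto\mathbf{c}_\beta$; since the latter code has dimension $m$, so does $\mathcal{C}_{\overline{D}_f}$. Reading the three multiplicities off Table~\ref{t3thm2} then yields Table~\ref{t3cor2}.

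There is no genuine obstacle here; the only point requiring care is the well-definedness of the puncturing, namely that the blocks $a\overline{D}_f$ are truly disjoint so that each codeword of $\mathcal{C}_{D_f}$ decomposes into $p-1$ scaled copies of the corresponding codeword of $\mathcal{C}_{\overline{D}_f}$. This follows from choosing $\overline{D}_f$ as a transversal of the $\mathbb{F}_p^{\times}$-orbits in $D_f$: by condition (ii) in the definition of $\mathcal{RF}$, $f(ax)=a^hf(x)$ forces $D_f$ to be a union of orbits $\{ax:a\in\mathbb{F}_p^{\times}\}$, each of size $p-1$ since the stabiliser of any $x\in D_f\subseteq\mathbb{F}_q^{\times}$ is trivial.
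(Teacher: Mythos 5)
Your proposal is correct and follows essentially the same route as the paper: the paper's proof of this corollary is just the one-line deduction from Theorem~\ref{3thm2} via the puncturing construction described before Corollary~\ref{3cor1}, and your argument simply makes explicit the details the paper leaves implicit (the orbit decomposition of $D_f$ under the $\mathbb{F}_p^{\times}$-action with trivial stabilisers, the identity $\mathrm{wt}(\mathbf{c}_\beta)=(p-1)\,\mathrm{wt}(\overline{\mathbf{c}}_\beta)$, and the resulting division of length and weights by $p-1$ with multiplicities and dimension unchanged). All of these details check out.
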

\begin{proof}
From Theorem \ref{3thm2}, this corollary follows.
\end{proof}

\section{New two-weight and three-weight linear codes from weakly regular bent functions}

In this section, by choosing  defining sets different from that in Section \Rmnum{3}, we construct new linear codes with two  or three weights and determine their weight distributions. Define
\begin{align*}
& D_{f,nsq}=\{x\in \mathbb{F}_q: f(x)\in
\mathbb{F}_p^{\times}\backslash
\mathbb{F}_p^{\times 2}\},\\
&D_{f,sq}=\{x\in \mathbb{F}_q: f(x)\in
\mathbb{F}_p^{\times 2}\},
\end{align*}
where $f(x)\in \mathcal{RF}$.
With the similar definition of
$\mathcal{C}_{D_f}$, we can define
$\mathcal{C}_{D_{f,nsq}}$ and
$\mathcal{C}_{D_{f,sq}}$ corresponding
to the defining sets
$D_{f,nsq}$ and $D_{f,sq}$ respectively.

\begin{theorem}\label{evennsq}
Let $m$ be even. Let $f(x)\in \mathcal{RF}$ with
the sign $\varepsilon $ of the Walsh transform.
Then  $\mathcal{C}_{D_{f,nsq}}$ and
$\mathcal{C}_{D_{f,sq}}$  are
 two-weight linear codes with  the same parameters $[\frac{p-1}{2}
(p^{m-1}-\varepsilon {-1 \overwithdelims () p}^{m/2}p^{(m-2)/2}),m]$
and the same weight distribution in Table
\ref{t-evennsq}.
\begin{table}[htbp]
\caption{The weight distribution of $\mathcal{C}_{D_{f,nsq}}$ and
$\mathcal{C}_{D_{f,sq}}$ for even $m$}
\label{t-evennsq}
\begin{center}
\begin{tabular}{|c|c|}
\hline
Weight& Mulitiplicity\\
\hline
0& 1\\
\hline
$\frac{(p-1)^2}{2} p^{m-2}$&
$\frac{p+1}{2}p^{m-1}+\frac{p-1}{2}
\varepsilon {-1 \overwithdelims () p}^{m/2}p^{(m-2)/2}-1$\\
\hline
$\frac{p-1}{2}[(p-1)p^{m-2}-2\varepsilon
{-1 \overwithdelims () p}^{m/2}p^{(m-2)/2}]$&
$\frac{p-1}{2}p^{m-1}-\frac{p-1}{2}
\varepsilon {-1 \overwithdelims () p}^{m/2} p^{(m-2)/2}$\\
\hline
\end{tabular}
\end{center}
\end{table}
\end{theorem}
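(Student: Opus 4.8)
The plan is to carry out $\mathcal{C}_{D_{f,sq}}$ in full and then observe that $\mathcal{C}_{D_{f,nsq}}$ is handled by a word-for-word parallel argument. First I would pin down the length. For even $m$, Lemma \ref{2lem7} gives $N_f(a)=p^{m-1}-\varepsilon{-1 \overwithdelims () p}^{m/2}p^{(m-2)/2}$ for every $a\in\mathbb{F}_p^{\times}$; summing over the $(p-1)/2$ nonzero squares yields $n:=\#(D_{f,sq})=\frac{p-1}{2}[p^{m-1}-\varepsilon{-1 \overwithdelims () p}^{m/2}p^{(m-2)/2}]$, the claimed length, and the same computation gives the identical value for $\#(D_{f,nsq})$ since $N_f(a)$ is independent of whether $a$ is a square.

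Next I would express the Hamming weight of a codeword as $\mathrm{wt}(\mathbf{c}_{\beta})=n-N_{sq,\beta}$, where $N_{sq,\beta}=\#\{x\in\mathbb{F}_q:f(x)\in\mathbb{F}_p^{\times 2},\ \mathrm{Tr}_1^m(\beta x)=0\}$ is precisely the quantity evaluated in Lemma \ref{2lem14}(1). For $\beta=0$ the codeword vanishes. For $\beta\neq 0$ I would substitute the two possible values of $N_{sq,\beta}$: when $f^*(\beta)=0$ or $f^*(\beta)$ is a nonsquare the weight reduces to $\frac{(p-1)^2}{2}p^{m-2}$, and when $f^*(\beta)$ is a nonzero square it becomes $\frac{p-1}{2}[(p-1)p^{m-2}-2\varepsilon{-1 \overwithdelims () p}^{m/2}p^{(m-2)/2}]$. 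These are exactly the two weights of Table \ref{t-evennsq}; since both are positive, the zero codeword occurs only at $\beta=0$, so $\beta\mapsto\mathbf{c}_{\beta}$ is injective and the dimension equals $m$.

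For the multiplicities I would count, for each of the two weights, how many $\beta$ produce it, by sorting $\beta$ according to the value of $f^*(\beta)$. Lemma \ref{2lem8}(1), applied directly to the dual and stated in terms of the same sign $\varepsilon$ (here ${-1 \overwithdelims () p}^m=1$ since $m$ is even), gives $N_{f^*}(0)=p^{m-1}+\varepsilon(p-1){-1 \overwithdelims () p}^{m/2}p^{(m-2)/2}$ and $N_{f^*}(a)=p^{m-1}-\varepsilon{-1 \overwithdelims () p}^{m/2}p^{(m-2)/2}$ for every nonzero $a$, so the square class and the nonsquare class of $f^*(\beta)$ each contain $\frac{p-1}{2}[p^{m-1}-\varepsilon{-1 \overwithdelims () p}^{m/2}p^{(m-2)/2}]$ values of $\beta$. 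Adding the count of $\beta\neq 0$ with $f^*(\beta)=0$ to the nonsquare count produces the first multiplicity $\frac{p+1}{2}p^{m-1}+\frac{p-1}{2}\varepsilon{-1 \overwithdelims () p}^{m/2}p^{(m-2)/2}-1$, while the square class alone produces the second; a check that these two numbers together with $1$ sum to $p^m$ validates the bookkeeping and reconfirms the dimension.

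The treatment of $\mathcal{C}_{D_{f,nsq}}$ is identical with $N_{nsq,\beta}$ replacing $N_{sq,\beta}$. The only structural change is that Lemma \ref{2lem14}(1) interchanges the square and nonsquare classes of $f^*(\beta)$ when passing from $N_{sq,\beta}$ to $N_{nsq,\beta}$; since Lemma \ref{2lem8}(1) assigns those two classes exactly equal cardinalities, the weights and their multiplicities are unaffected, which is why the two codes share a distribution. Thus all the analytic work sits inside Lemmas \ref{2lem7}, \ref{2lem8}, and \ref{2lem14}, and the point requiring the most care is simply that the sign $\varepsilon$ attached to $f^*$ coincides with that of $f$ — this holds precisely because $m$ is even — and it is exactly this coincidence, together with the square/nonsquare symmetry of Lemma \ref{2lem8}(1), that forces $\mathcal{C}_{D_{f,sq}}$ and $\mathcal{C}_{D_{f,nsq}}$ to be equivalent in distribution.
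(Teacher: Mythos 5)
Your proposal is correct and is essentially the paper's own argument: the paper's proof is the one-line citation ``From Lemma \ref{2lem7} and Lemma \ref{2lem14}, this theorem follows,'' and your write-up is exactly the intended expansion --- length from $N_f(a)$ in Lemma \ref{2lem7}, weights from $\mathrm{wt}(\mathbf{c}_\beta)=n-N_{sq,\beta}$ (resp.\ $n-N_{nsq,\beta}$) via Lemma \ref{2lem14}(1), and multiplicities by sorting $\beta$ according to $f^*(\beta)$ using Lemma \ref{2lem8}(1) (itself a consequence of Lemma \ref{2lem7} and Equation (\ref{equ3})). Your observations that ${-1 \overwithdelims () p}^m=1$ for even $m$ and that the square/nonsquare symmetry of $N_{f^*}$ forces the two codes to share one distribution are precisely the points the paper leaves implicit, and your computations check out against Table \ref{t-evennsq}.
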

\begin{proof}
From Lemma \ref{2lem7} and Lemma \ref{2lem14}, this theorem follows.
\end{proof}
\begin{example}
Let $p=3$, $m=6$, and $f(x)
=\mathrm{Tr}_1^6(w^7x^{210})$, where
$w$ is a primitive element of
$\mathbb{F}_{3^6}$.
The sign of the Walsh transform of $f(x)$ is
$\varepsilon =1$. Then  $\mathcal{C}_{D_{f,nsq}}$  and  $\mathcal{C}_{D_{f,sq}}$ in
Theorem \ref{evennsq} have the same parameters
$[252,6,162]$ and  the same weight enumerator
$1+476z^{162}+252z^{180}$, which is verified by
the Magma program.
\end{example}
\begin{example}
Let $p=3$, $m=6$, and $f(x)
=\mathrm{Tr}_1^6(x^{10})$.
The sign of the Walsh transform of $f(x)$ is
$\varepsilon =-1$. Then $\mathcal{C}_{D_{f,nsq}}$ and $\mathcal{C}_{D_{f,sq}}$ in Theorem \ref{evennsq} have  the same parameters
$[234,6,144]$ and the same weight enumerator
$1+234z^{144}+494z^{162}$, which is verified by
the Magma program.
\end{example}

\begin{example}
Let $p=5$, $m=6$, and $f(x)
=\mathrm{Tr}_1^6(x^{26})$.
The sign of the Walsh transform of $f(x)$ is
$\varepsilon =-1$. Then  $\mathcal{C}_{D_{f,nsq}}$ and
$\mathcal{C}_{D_{f,sq}}$ in Theorem \ref{evennsq} have  the same parameters
$[6300,6,5000]$ and the same weight enumerator
$1+9324z^{5000}+6300z^{5100}$, which is verified by the Magma program.
\end{example}
\begin{example}
Let $p=5$, $m=6$, and  $f(x)
=\mathrm{Tr}_1^6(wx^{26})$, where
$w$ is a primitive element of
$\mathbb{F}_{5^6}$.
The sign of the Walsh transform of $f(x)$ is
$\varepsilon =1$. Then $\mathcal{C}_{D_{f,nsq}}$  and $\mathcal{C}_{D_{f,sq}}$ in Theorem \ref{evennsq}  have the same parameters
$[6200,6,4900]$ and the same weight enumerator
$1+6200z^{4900}+9424z^{5000}$, which is verified by
the Magma program.
\end{example}

\begin{theorem}\label{oddnsq}
Let $m$ be odd and $f(x)\in \mathcal{RF}$ with
the sign $\varepsilon$ of the Walsh transform.
Then  $\mathcal{C}_{D_{f,nsq}}$  is
a three-weight linear code with parameters $[\frac{p-1}{2}
(p^{m-1}-\varepsilon \sqrt{p^*}^{m-1}),m]$
and the weight distribution in Table \ref{t-oddnsq}.
\begin{table}[htbp]
\caption{The weight distribution of
$\mathcal{C}_{D_{f,nsq}}$ for odd $m$}
\label{t-oddnsq}
\begin{center}
\begin{tabular}{|c|c|}
\hline
Weight& Mulitiplicity\\
\hline
0& 1\\
\hline
$\frac{(p-1)^2}{2} p^{m-2}$&
$p^{m-1}-1$\\
\hline
$\frac{p-1}{2}[(p-1)p^{m-2}+\varepsilon (1-p^*)
\sqrt{p^*}^{m-3}]$&
$\frac{p-1}{2}[p^{m-1}+
\varepsilon {-1 \overwithdelims () p}\sqrt{p^*}^{m-1}]$\\
\hline
$\frac{p-1}{2}[(p-1)p^{m-2}-\varepsilon
(1+{p^*})\sqrt{p^*}^{m-3}]$&
$\frac{p-1}{2}[p^{m-1}-
\varepsilon {-1 \overwithdelims () p} \sqrt{p^*}^{m-1}]$\\
\hline
\end{tabular}
\end{center}
\end{table}
\end{theorem}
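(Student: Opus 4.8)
The plan is to realize each nonzero codeword weight as the code length minus a ``trace-zero'' count, and then to sort the indices $\beta$ by weight. Write $n=\#(D_{f,nsq})$ for the length. A coordinate of $\mathbf{c}_\beta$ vanishes precisely when $\mathrm{Tr}_1^m(\beta x)=0$ for the associated $x\in D_{f,nsq}$, so for every $\beta\in\mathbb{F}_q^{\times}$ we have $\mathrm{wt}(\mathbf{c}_\beta)=n-N_{nsq,\beta}$, where $N_{nsq,\beta}$ is exactly the quantity computed in Lemma \ref{2lem14}; the value $\beta=0$ gives the zero codeword. First I would fix the length by summing $N_f(a)$ from Lemma \ref{2lem7} (odd $m$) over $a\in\mathbb{F}_p^{\times}\backslash\mathbb{F}_p^{\times 2}$, which yields
$$
n=\frac{p-1}{2}\bigl(p^{m-1}-\varepsilon\sqrt{p^*}^{m-1}\bigr),
$$
as claimed in the statement.

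Next I would substitute the three cases of Lemma \ref{2lem14}(2) into $n-N_{nsq,\beta}$. When $f^*(\beta)=0$ the $\varepsilon\sqrt{p^*}^{m-1}$ terms cancel and the weight collapses to $\frac{(p-1)^2}{2}p^{m-2}$. For $f^*(\beta)$ a nonzero square or a nonsquare, using the identity $\sqrt{p^*}^{m-1}=p^*\sqrt{p^*}^{m-3}$ produces the factors $1-p^*$ and $1+p^*$ respectively, giving the two remaining weights $\frac{p-1}{2}[(p-1)p^{m-2}+\varepsilon(1-p^*)\sqrt{p^*}^{m-3}]$ and $\frac{p-1}{2}[(p-1)p^{m-2}-\varepsilon(1+p^*)\sqrt{p^*}^{m-3}]$. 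Since $p^*=\pm p$, these three weights are pairwise distinct, so the code is genuinely three-weight.

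For the multiplicities I would count the $\beta$ according to the value class of $f^*(\beta)$. By Proposition \ref{2pro6} the dual $f^*$ again lies in $\mathcal{RF}$, and by the Remark its Walsh sign is ${-1 \overwithdelims () p}^m\varepsilon={-1 \overwithdelims () p}\varepsilon$ for odd $m$; hence Lemma \ref{2lem8} (equivalently Lemma \ref{2lem7} applied to $f^*$) gives $N_{f^*}$, so that there are $p^{m-1}-1$ nonzero $\beta$ with $f^*(\beta)=0$ (using $f^*(0)=0$), together with $\frac{p-1}{2}[p^{m-1}+\varepsilon{-1 \overwithdelims () p}\sqrt{p^*}^{m-1}]$ values with $f^*(\beta)\in\mathbb{F}_p^{\times 2}$ and the complementary count with $f^*(\beta)\in\mathbb{F}_p^{\times}\backslash\mathbb{F}_p^{\times 2}$. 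These are precisely the multiplicities in Table \ref{t-oddnsq}. A final check that they sum to $p^m$ (the $\varepsilon$-terms cancel) confirms that $\beta\mapsto\mathbf{c}_\beta$ is a bijection, so the dimension is $m$; injectivity also follows from the fact that only $\beta=0$ yields weight $0$.

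The main obstacle is purely the bookkeeping: correctly aligning the $f^*(\beta)$-trichotomy governing the three values of $N_{nsq,\beta}$ in Lemma \ref{2lem14} with the class counts coming from Lemma \ref{2lem8}, and carefully tracking the powers $\sqrt{p^*}^{m-1}$ versus $\sqrt{p^*}^{m-3}$ and the Legendre-symbol sign ${-1 \overwithdelims () p}$ so that the factors $1-p^*$ and $1+p^*$ and the stated multiplicities emerge cleanly. No new ingredient beyond Lemmas \ref{2lem7}, \ref{2lem8}, \ref{2lem14} and Proposition \ref{2pro6} is needed.
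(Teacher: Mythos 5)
Your proposal is correct and follows exactly the route the paper intends: the paper's proof is the one-line citation of Lemmas \ref{2lem7} and \ref{2lem14}, and your computation of the length from $N_f(a)$, the weights as $n-N_{nsq,\beta}$ via the $f^*(\beta)$-trichotomy of Lemma \ref{2lem14}(2), and the multiplicities from $N_{f^*}$ (Lemma \ref{2lem8}, i.e.\ Lemma \ref{2lem7} applied to $f^*$) is precisely the bookkeeping that citation compresses. Your added checks --- pairwise distinctness of the three weights and injectivity of $\beta\mapsto\mathbf{c}_\beta$ from the fact that only $\beta=0$ has weight $0$ --- are details the paper leaves implicit, not a deviation in method.
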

\begin{proof}
From Lemma \ref{2lem7} and Lemma \ref{2lem14}, this theorem follows.
\end{proof}
\begin{example}
Let $p=5$, $m=5$, and $f(x)
=\mathrm{Tr}_1^5(x^{2})$.
The sign of the Walsh transform of $f(x)$ is
$\varepsilon =1$. Then the
code $\mathcal{C}_{D_{f,nsq}}$ has parameters
$[1200,5,940]$ and weight enumerator
$1+1200z^{940}+1300z^{960}
+624x^{1000}$, which is verified by
the Magma program.
\end{example}
\begin{example}
Let $p=3$, $m=5$, and $f(x)
=\mathrm{Tr}_1^5(wx^{2})$, where
$w$ is a primitive element of
$\mathbb{F}_{3^5}$.
The sign of the Walsh transform of $f(x)$ is
$\varepsilon =-1$. Then the
code $\mathcal{C}_{D_{f,nsq}}$ in Theorem \ref{oddnsq} has parameters
$[60,5,40]$ and weight enumerator
$1+24z^{40}+40z^{48}+60z^{52}$, which is verified by
the Magma program.
\end{example}

\begin{theorem}\label{oddsq}
Let $m$ be odd and $f(x)\in \mathcal{RF}$ with
the sign $\varepsilon$ of the Walsh transform.
Then  $\mathcal{C}_{D_{f,sq}}$  is
a three-weight linear code with parameters $[\frac{p-1}{2}
(p^{m-1}+\varepsilon \sqrt{p^*}^{m-1}),m]$
and the weight distribution in Table \ref{t-oddsq}.
\begin{table}[htbp]
\caption{The weight distribution of
$\mathcal{C}_{D_{f,sq}}$ for odd $m$}
\label{t-oddsq}
\begin{center}
\begin{tabular}{|c|c|}
\hline
Weight& Mulitiplicity\\
\hline
0& 1\\
\hline
$\frac{(p-1)^2}{2} p^{m-2}$&
$p^{m-1}-1$\\
\hline
$\frac{p-1}{2}[(p-1)p^{m-2}+\varepsilon (1+p^*)
\sqrt{p^*}^{m-3}]$&
$\frac{p-1}{2}[p^{m-1}+
\varepsilon {-1 \overwithdelims () p}\sqrt{p^*}^{m-1}]$\\
\hline
$\frac{p-1}{2}[(p-1)p^{m-2}+\varepsilon
({p^*}-1)\sqrt{p^*}^{m-3}]$&
$\frac{p-1}{2}[p^{m-1}-
\varepsilon {-1 \overwithdelims () p} \sqrt{p^*}^{m-1}]$\\
\hline
\end{tabular}
\end{center}
\end{table}
\end{theorem}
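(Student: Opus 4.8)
The plan is to compute the Hamming weight of each codeword $\mathbf{c}_\beta$ directly from the defining set and then count the $\beta$ producing each weight. Since $f(0)=0$ for $f\in\mathcal{RF}$, the element $0$ is automatically excluded from $D_{f,sq}$, and the length is
$$
n=\#D_{f,sq}=\sum_{a\in\mathbb{F}_p^{\times 2}}N_f(a)=\frac{p-1}{2}\bigl(p^{m-1}+\varepsilon\sqrt{p^*}^{m-1}\bigr),
$$
where the last equality invokes Lemma \ref{2lem7}(2): each of the $(p-1)/2$ squares $a$ contributes $N_f(a)=p^{m-1}+\varepsilon\sqrt{p^*}^{m-1}$. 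This is the claimed length. For a fixed $\beta\in\mathbb{F}_q$, the codeword $\mathbf{c}_\beta$ vanishes in the coordinate indexed by $d_i\in D_{f,sq}$ exactly when $\mathrm{Tr}_1^m(\beta d_i)=0$, so counting the nonzero coordinates yields
$$
\mathrm{wt}(\mathbf{c}_\beta)=n-N_{sq,\beta},\qquad N_{sq,\beta}=\#\{x\in\mathbb{F}_q:f(x)\in\mathbb{F}_p^{\times 2},\ \mathrm{Tr}_1^m(\beta x)=0\}.
$$

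Next I would substitute the three values of $N_{sq,\beta}$ furnished by Lemma \ref{2lem14}(2) into $\mathrm{wt}(\mathbf{c}_\beta)=n-N_{sq,\beta}$, according to whether $f^*(\beta)=0$, $f^*(\beta)\in\mathbb{F}_p^{\times 2}$, or $f^*(\beta)\in\mathbb{F}_p^{\times}\backslash\mathbb{F}_p^{\times 2}$. The one non-mechanical algebraic step is the identity $\sqrt{p^*}^{m-1}=p^*\sqrt{p^*}^{m-3}$ (valid since $m$ is odd), which merges the $\sqrt{p^*}^{m-1}$ term carried by $n$ with the $\sqrt{p^*}^{m-3}$ term inside $N_{sq,\beta}$. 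For example, when $f^*(\beta)\in\mathbb{F}_p^{\times 2}$ one obtains
$$
\mathrm{wt}(\mathbf{c}_\beta)=\frac{p-1}{2}\Bigl[(p-1)p^{m-2}+\varepsilon\bigl(p^*+1\bigr)\sqrt{p^*}^{m-3}\Bigr],
$$
and the other two cases are handled identically; the case $f^*(\beta)=0$ collapses the $\sqrt{p^*}$ contributions and returns $\frac{(p-1)^2}{2}p^{m-2}$. These reproduce the three weight entries of Table \ref{t-oddsq}.

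For the multiplicities I would tally the $\beta\in\mathbb{F}_q$ by the value of $f^*(\beta)$. By Proposition \ref{2pro6}, $f^*\in\mathcal{RF}$, and by Equation (\ref{equ3}) its Walsh sign is ${-1 \overwithdelims () p}^m\varepsilon$; thus Lemma \ref{2lem8}(2) gives $N_{f^*}(0)=p^{m-1}$ and $N_{f^*}(a)=p^{m-1}\pm\varepsilon{-1 \overwithdelims () p}\sqrt{p^*}^{m-1}$ for $a$ a square, respectively a nonsquare. Because $f^*(0)=0$ (Proposition \ref{2pro5}), the $p^{m-1}$ solutions of $f^*(\beta)=0$ split into $\beta=0$, which gives the zero codeword (weight $0$, multiplicity $1$), and $p^{m-1}-1$ others of weight $\frac{(p-1)^2}{2}p^{m-2}$; summing $N_{f^*}(a)$ over the $(p-1)/2$ squares and over the $(p-1)/2$ nonsquares supplies the multiplicities of the remaining two weights. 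A final tally
$$
1+(p^{m-1}-1)+\frac{p-1}{2}\bigl[p^{m-1}+\varepsilon{-1 \overwithdelims () p}\sqrt{p^*}^{m-1}\bigr]+\frac{p-1}{2}\bigl[p^{m-1}-\varepsilon{-1 \overwithdelims () p}\sqrt{p^*}^{m-1}\bigr]=p^m
$$
confirms that $\beta\mapsto\mathbf{c}_\beta$ is injective, so $\dim\mathcal{C}_{D_{f,sq}}=m$.

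I expect no conceptual obstacle, since the difficult character-sum evaluations are already packaged in Lemmas \ref{2lem7}, \ref{2lem8}, and \ref{2lem14}. The delicate points are purely organizational: correctly isolating $\beta=0$ from the other zeros of $f^*$ when reading off the first multiplicity, and keeping the two powers $\sqrt{p^*}^{m-1}$ and $\sqrt{p^*}^{m-3}$ properly aligned during the weight simplification. The latter is the most error-prone step, as a single stray factor of $p^*$ would corrupt all three weight formulas.
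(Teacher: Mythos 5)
Your proposal is correct and follows exactly the route the paper intends: its entire proof of Theorem \ref{oddsq} is the one-line citation of Lemmas \ref{2lem7} and \ref{2lem14}, and your computation of the length from $N_f(a)$ over the squares, the weights from $\mathrm{wt}(\mathbf{c}_\beta)=n-N_{sq,\beta}$ with the identity $\sqrt{p^*}^{m-1}=p^*\sqrt{p^*}^{m-3}$, and the multiplicities from Lemma \ref{2lem8} (itself just Lemma \ref{2lem7} applied to $f^*$ via Equation (\ref{equ3})) is precisely the implicit argument, carried out correctly in all three cases. The only minor looseness is attributing injectivity of $\beta\mapsto\mathbf{c}_\beta$ to the tally summing to $p^m$; strictly it follows because the map is $\mathbb{F}_p$-linear and every $\beta\neq 0$ yields a codeword of positive weight, so the kernel is trivial.
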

\begin{proof}
From Lemma \ref{2lem7} and Lemma \ref{2lem14}, this theorem follows.
\end{proof}
\begin{example}
Let $p=5$, $m=5$, and $f(x)
=\mathrm{Tr}_1^5(x^{2})$.
The sign of the Walsh transform of $f(x)$ is
$\varepsilon =1$. Then the
code $\mathcal{C}_{D_{f,sq}}$ in Theorem \ref{oddsq} has parameters
$[1300,5,1000]$ and weight enumerator
$1+624z^{1000}+1200z^{1040}
+1300z^{1060}$, which is verified by the Magma program.
\end{example}
\begin{example}
Let $p=3$, $m=5$, and $f(x)
=\mathrm{Tr}_1^5(wx^{2})$, where
$w$ is a primitive element of
$\mathbb{F}_{3^5}$.
The sign of the Walsh transform of $f(x)$ is
$\varepsilon =-1$. Then the
code $\mathcal{C}_{D_{f,sq}}$ in Theorem \ref{oddsq} has parameters
$[40,5,28]$ and weight enumerator
$1+40z^{28}+60z^{32}+24z^{40}$, which is verified by the Magma program.
\end{example}

Let $f(x)\in \mathcal{RF}$. There exists an integer
$h$ such that $(h-1,p-1)=1$ and
$f(ax)=a^hf(x)$ for any
$a\in \mathbb{F}_p^{\times}$ and
$x\in \mathbb{F}_q$. Note that $h$ is even. Therefore, $f(ax)$ is a
quadratic residue (quadratic nonresidue) in $\mathbb{F}_p^{\times}$ if and only if
$f(x)$ is a quadratic residue (quadratic nonresidue) in $\mathbb{F}_p^{\times}$.
With the similar definition of $\overline{D}_f$,
we define
$\overline{D}_{f,nsq}$ as a subset of  $D_{f,nsq}$ such that
$\bigcup_{a\in \mathbb{F}_p^{\times}}
a\overline{D}_{f,nsq}$ is just a partition of
$D_{f,nsq}$. And we similarly define
$\overline{D}_{f,sq}$. Hence, we can
construct the corresponding linear codes
$\mathcal{C}_{\overline{D}_{f,nsq}}$ and
$\mathcal{C}_{\overline{D}_{f,sq}}$, whose
parameters and weight distributions are given in the following three corollaries.

\begin{corollary}\label{coeven}
Let $m$ be even and  $f(x)\in \mathcal{RF}$ with
the sign $\varepsilon $ of the Walsh transform.
Then  $\mathcal{C}_{\overline{D}_{f,nsq}}$ and
$\mathcal{C}_{\overline{D}_{f,sq}}$  are
 two-weight linear codes with  the same parameters $[\frac{1}{2}
(p^{m-1}-\varepsilon {-1 \overwithdelims () p}^{m/2}p^{(m-2)/2}),m]$
and the same weight distribution in Table
\ref{t-coeven}.
\begin{table}[htbp]
\caption{The weight distribution of
$\mathcal{C}_{\overline{D}_{f,nsq}}$ and
$\mathcal{C}_{\overline{D}_{f,sq}}$ for even $m$}
\label{t-coeven}
\begin{center}
\begin{tabular}{|c|c|}
\hline
Weight& Mulitiplicity\\
\hline
0& 1\\
\hline
$\frac{p-1}{2} p^{m-2}$&
$\frac{p+1}{2}p^{m-1}+\frac{p-1}{2}
\varepsilon {-1 \overwithdelims () p}^{m/2}p^{(m-2)/2}-1$\\
\hline
$\frac{1}{2}[(p-1)p^{m-2}-2\varepsilon
{-1 \overwithdelims () p}^{m/2}p^{(m-2)/2}]$&
$\frac{p-1}{2}p^{m-1}-\frac{p-1}{2}
\varepsilon {-1 \overwithdelims () p}^{m/2} p^{(m-2)/2}$\\
\hline
\end{tabular}
\end{center}
\end{table}
\end{corollary}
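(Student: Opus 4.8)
The plan is to deduce this corollary directly from Theorem \ref{evennsq} by the same puncturing argument that passes from Theorem \ref{3thm1} to Corollary \ref{3cor1}, exploiting the $\mathbb{F}_p^{\times}$-action on the defining set. First I would record the structural fact, noted just before the statement, that since $f(ax)=a^hf(x)$ with $h$ even, the element $a^h=(a^{h/2})^2$ is a square in $\mathbb{F}_p^{\times}$ for every $a\in\mathbb{F}_p^{\times}$; hence $f(x)$ and $f(ax)$ always lie in the same square class, so $D_{f,nsq}$ and $D_{f,sq}$ are invariant under multiplication by $\mathbb{F}_p^{\times}$. Because $ad=d$ with $d\neq0$ forces $a=1$, this action is free, so every orbit has size exactly $p-1$; choosing one representative per orbit yields $\overline{D}_{f,nsq}$ with $\bigcup_{a\in\mathbb{F}_p^{\times}}a\overline{D}_{f,nsq}=D_{f,nsq}$ a disjoint partition, and likewise for $\overline{D}_{f,sq}$.

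Next I would compare the weight of a fixed codeword across the two codes. Fix $\beta\in\mathbb{F}_q$ and group the coordinates of $\mathbf{c}_{\beta}\in\mathcal{C}_{D_{f,nsq}}$ according to these orbits. On the orbit of a representative $d\in\overline{D}_{f,nsq}$ the entries are $\mathrm{Tr}_1^m(\beta ad)=a\,\mathrm{Tr}_1^m(\beta d)$ as $a$ runs over $\mathbb{F}_p^{\times}$. If $\mathrm{Tr}_1^m(\beta d)=0$ all $p-1$ entries vanish, and if $\mathrm{Tr}_1^m(\beta d)\neq0$ they run over all of $\mathbb{F}_p^{\times}$ and contribute $p-1$ nonzero positions. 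Therefore $\mathrm{wt}(\mathbf{c}_{\beta})=(p-1)\,\mathrm{wt}(\bar{\mathbf{c}}_{\beta})$, where $\bar{\mathbf{c}}_{\beta}$ is the corresponding codeword of $\mathcal{C}_{\overline{D}_{f,nsq}}$; the identical argument applies verbatim to the square case.

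From this proportionality the weight distribution in Table \ref{t-coeven} is obtained from that of Theorem \ref{evennsq} by retaining every multiplicity and dividing each nonzero weight by $p-1$: the weight $\frac{(p-1)^2}{2}p^{m-2}$ becomes $\frac{p-1}{2}p^{m-2}$, and $\frac{p-1}{2}[(p-1)p^{m-2}-2\varepsilon{-1 \overwithdelims () p}^{m/2}p^{(m-2)/2}]$ becomes $\frac{1}{2}[(p-1)p^{m-2}-2\varepsilon{-1 \overwithdelims () p}^{m/2}p^{(m-2)/2}]$, matching the claimed entries. The length is exactly $\frac{1}{p-1}$ of the length in Theorem \ref{evennsq}, namely $\frac{1}{2}(p^{m-1}-\varepsilon{-1 \overwithdelims () p}^{m/2}p^{(m-2)/2})$, while the dimension remains $m$ because the map $\beta\mapsto\bar{\mathbf{c}}_{\beta}$ stays injective (a vanishing $\bar{\mathbf{c}}_{\beta}$ forces $\mathbf{c}_{\beta}=0$ by the scaling relation, hence $\beta=0$ by Theorem \ref{evennsq}). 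The only step demanding care is the freeness of the $\mathbb{F}_p^{\times}$-action together with the square-class invariance, since these are precisely what guarantee that every orbit has the uniform size $p-1$ and that the puncturing is well defined; once they are in place the remainder is routine bookkeeping of the table entries.
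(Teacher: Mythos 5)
Your proposal is correct and takes essentially the same route as the paper: the paper's proof is just the citation of Theorem \ref{evennsq}, with the square-class invariance (from $h$ even) and the choice of orbit representatives $\overline{D}_{f,nsq}$, $\overline{D}_{f,sq}$ under the free $\mathbb{F}_p^{\times}$-action already set up in the paragraph preceding the corollary. Your explicit verification that each nonzero weight scales by the factor $p-1$ while the length divides by $p-1$ and the multiplicities and dimension are preserved is exactly the bookkeeping the paper leaves implicit.
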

\begin{proof}
From Theorem \ref{evennsq}, this corollary follows.
\end{proof}
\begin{example}
Let $p=3$, $m=4$, and $f(x)
=\mathrm{Tr}_1^4(x^{2})$.
The sign of the Walsh transform of $f(x)$ is
$\varepsilon =-1$. Then  $\mathcal{C}_{\overline{D}_{f,nsq}}$ and
$\mathcal{C}_{\overline{D}_{f,sq}}$ in Corollary \ref{coeven} have the same parameters
$[15,4,9]$ and the same weight enumerator
$1+50z^{9}+30z^{12}$, which is verified by the Magma program.  This code is optimal due to the Griesmer bound.
\end{example}
\begin{example}
Let $p=3$, $m=4$, and $f(x)
=\mathrm{Tr}_1^4(wx^{2})$, where
$w$ is a primitive element of $\mathbb{F}_{3^4}$.
The sign of the Walsh transform of $f(x)$ is
$\varepsilon =1$. Then  $\mathcal{C}_{\overline{D}_{f,nsq}}$ and
$\mathcal{C}_{\overline{D}_{f,sq}}$ in Corollary \ref{coeven} have the same parameters
$[12,4,6]$ and the same weight enumerator
$1+24z^{6}+56z^{9}$, which is verified by the Magma program.  This code is optimal due to the Griesmer bound.
\end{example}

\begin{example}
Let $p=3$, $m=6$, and $f(x)
=\mathrm{Tr}_1^6(wx^{2})$, where
$w$ is a primitive element of $\mathbb{F}_{3^6}$.
The sign of the Walsh transform of $f(x)$ is
$\varepsilon =1$. Then  $\mathcal{C}_{\overline{D}_{f,nsq}}$ and
$\mathcal{C}_{\overline{D}_{f,sq}}$ in Corollary \ref{coeven} have the same parameters
$[126,6,81]$ and the same weight enumerator
$1+476z^{81}+252z^{90}$, which is verified by the Magma program.  This code is optimal due to the Griesmer bound.
\end{example}

\begin{example}
Let $p=5$, $m=4$, and $f(x)
=\mathrm{Tr}_1^4(x^{2})$, where
$w$ is a primitive element of $\mathbb{F}_{5^4}$.
The sign of the Walsh transform of $f(x)$ is
$\varepsilon =-1$. Then  $\mathcal{C}_{\overline{D}_{f,nsq}}$ and
$\mathcal{C}_{\overline{D}_{f,sq}}$ in Corollary \ref{coeven} have the same parameters
$[65,4,50]$ and the same weight enumerator
$1+364z^{50}+260z^{55}$, which is verified by the Magma program.  This code is optimal due to the Griesmer bound.
\end{example}

\begin{example}
Let $p=5$, $m=4$, and $f(x)
=\mathrm{Tr}_1^4(wx^{2})$, where $w$ is a primitive
element of $\mathbb{F}_{5^4}$.
The sign of the Walsh transform of $f(x)$ is
$\varepsilon =1$. Then  $\mathcal{C}_{\overline{D}_{f,nsq}}$ and
$\mathcal{C}_{\overline{D}_{f,sq}}$ in Corollary \ref{coeven} have the same parameters
$[60,4,45]$ and the same weight enumerator
$1+240z^{45}+384z^{50}$, which is verified by the Magma program.
\end{example}
\begin{corollary}\label{cooddnsq}
Let $m$ be odd and  $f(x)\in \mathcal{RF}$ with
the sign $\varepsilon$ of the Walsh transform.
Then  $\mathcal{C}_{\overline{D}_{f,nsq}}$  is
a three-weight linear code with parameters $[\frac{1}{2}
(p^{m-1}-\varepsilon \sqrt{p^*}^{m-1}),m]$
and the weight distribution in Table  \ref{t-cooddnsq}.
\begin{table}[htbp]
\caption{The weight distribution of
$\mathcal{C}_{\overline{D}_{f,nsq}}$ for odd $m$}
\label{t-cooddnsq}
\begin{center}
\begin{tabular}{|c|c|}
\hline
Weight& Mulitiplicity\\
\hline
0& 1\\
\hline
$\frac{p-1}{2} p^{m-2}$&
$p^{m-1}-1$\\
\hline
$\frac{1}{2}[(p-1)p^{m-2}+\varepsilon (1-p^*)
\sqrt{p^*}^{m-3}]$&
$\frac{p-1}{2}[p^{m-1}+
\varepsilon {-1 \overwithdelims () p}\sqrt{p^*}^{m-1}]$\\
\hline
$\frac{1}{2}[(p-1)p^{m-2}-\varepsilon
(1+{p^*})\sqrt{p^*}^{m-3}]$&
$\frac{p-1}{2}[p^{m-1}-
\varepsilon {-1 \overwithdelims () p} \sqrt{p^*}^{m-1}]$\\
\hline
\end{tabular}
\end{center}
\end{table}
\end{corollary}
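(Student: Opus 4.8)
The plan is to deduce this corollary directly from Theorem \ref{oddnsq}, exactly as Corollary \ref{coeven} was obtained from Theorem \ref{evennsq}, by exploiting the $\mathbb{F}_p^{\times}$-orbit structure underlying the passage from $D_{f,nsq}$ to $\overline{D}_{f,nsq}$. First I would record that $D_{f,nsq}$ is stable under multiplication by $\mathbb{F}_p^{\times}$: since $f(ax)=a^hf(x)$ with $h$ even, the scalar $a^h$ is a nonzero square in $\mathbb{F}_p^{\times}$, so $f(ax)$ and $f(x)$ lie in the same square class; in particular $f(x)\in\mathbb{F}_p^{\times}\backslash\mathbb{F}_p^{\times 2}$ if and only if $f(ax)\in\mathbb{F}_p^{\times}\backslash\mathbb{F}_p^{\times 2}$. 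Because the multiplicative action of $\mathbb{F}_p^{\times}$ on $\mathbb{F}_q^{\times}$ is free, each orbit has size $p-1$, and by construction $\overline{D}_{f,nsq}$ contains exactly one representative per orbit, so $\#\overline{D}_{f,nsq}=\#D_{f,nsq}/(p-1)$. Combined with the length $\frac{p-1}{2}(p^{m-1}-\varepsilon\sqrt{p^*}^{m-1})$ from Theorem \ref{oddnsq}, this yields the claimed length $\frac{1}{2}(p^{m-1}-\varepsilon\sqrt{p^*}^{m-1})$.

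The crux is to compare the Hamming weight of a fixed codeword $\mathbf{c}_{\beta}$ in the two codes. For $a\in\mathbb{F}_p^{\times}$ and $d\in\overline{D}_{f,nsq}$ one has $\mathrm{Tr}_1^m(\beta(ad))=a\,\mathrm{Tr}_1^m(\beta d)$, so $\mathrm{Tr}_1^m(\beta(ad))\neq 0$ if and only if $\mathrm{Tr}_1^m(\beta d)\neq 0$. Summing over the $p-1$ cosets $a\overline{D}_{f,nsq}$ that make up $D_{f,nsq}$, the weight of $\mathbf{c}_{\beta}$ in $\mathcal{C}_{D_{f,nsq}}$ equals exactly $(p-1)$ times its weight in $\mathcal{C}_{\overline{D}_{f,nsq}}$. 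Consequently each nonzero weight in Table \ref{t-oddnsq} is divided by $p-1$ while its multiplicity is preserved, and dividing the three weights there produces precisely the three weights listed in Table \ref{t-cooddnsq}.

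Finally I would confirm that the dimension is still $m$. Since the weight of $\mathbf{c}_{\beta}$ in the shortened code is a positive multiple of its weight in $\mathcal{C}_{D_{f,nsq}}$, the map $\beta\mapsto\mathbf{c}_{\beta}$ has the same kernel for both codes; as $\mathcal{C}_{D_{f,nsq}}$ has dimension $m$ by Theorem \ref{oddnsq}, this map remains injective and the shortened code again has $p^m$ codewords. I do not expect a genuine obstacle here: the argument reduces entirely to Theorem \ref{oddnsq}, and the only point requiring care is the term-by-term match between Table \ref{t-oddnsq} (weights divided by $p-1$) and Table \ref{t-cooddnsq}, a direct algebraic check using $\sqrt{p^*}^{2}=p^*$ that needs no new exponential-sum evaluation.
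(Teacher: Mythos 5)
Your proposal is correct and follows essentially the same route as the paper: the paper's proof is the one-line deduction ``From Theorem \ref{oddnsq}, this corollary follows,'' relying on the orbit/puncturing discussion given just before Corollary \ref{coeven} (namely that $h$ even forces $f(ax)$ and $f(x)$ into the same square class, so $D_{f,nsq}$ splits into free $\mathbb{F}_p^{\times}$-orbits and every nonzero weight and the length are divided by $p-1$ while multiplicities and the dimension are unchanged). Your write-up simply makes those implicit steps explicit, including the injectivity argument for the dimension, and all your computations check out against Tables \ref{t-oddnsq} and \ref{t-cooddnsq}.
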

\begin{proof}
From Theorem \ref{oddnsq}, this corollary follows.
\end{proof}
\begin{example}
Let $p=3$, $m=5$, and $f(x)
=\mathrm{Tr}_1^5(x^{2})$, where $w$ is a primitive
element of $\mathbb{F}_{3^5}$.
The sign of the Walsh transform of $f(x)$ is
$\varepsilon =1$. Then  $\mathcal{C}_{\overline{D}_{f,nsq}}$  in Corollary \ref{cooddnsq} has  parameters
$[36,5,21]$ and the   weight enumerator
$1+72z^{21}+90z^{24}+80z^{27}$, which is verified by the Magma program.  This code is almost optimal since the optimal code with length $36$ and dimension $5$ has minimal weight $22$.
\end{example}

\begin{example}
Let $p=3$, $m=5$, and $f(x)
=\mathrm{Tr}_1^5(wx^{2})$, where $w$ is a primitive
element of $\mathbb{F}_{3^5}$.
The sign of the Walsh transform of $f(x)$ is
$\varepsilon =-1$. Then  $\mathcal{C}_{\overline{D}_{f,nsq}}$ in  Corollary \ref{cooddnsq} has  parameters
$[45,5,27]$ and the   weight enumerator
$1+80z^{27}+72z^{30}+90z^{33}$, which is verified by the Magma program.  This code is almost optimal since the optimal code with length $45$ and dimension $5$ has minimal weight $28$.
\end{example}
\begin{corollary}\label{cooddsq}
Let $m$ be odd and $f(x)\in \mathcal{RF}$ with
the sign $\varepsilon$ of the Walsh transform.
Then  $\mathcal{C}_{\overline{D}_{f,sq}}$  is
a three-weight linear code with parameters $[\frac{1}{2}
(p^{m-1}+\varepsilon \sqrt{p^*}^{m-1}),m]$
and the weight distribution in Table \ref{t-cooddsq}.
\begin{table}[htbp]
\caption{The weight distribution of  $\mathcal{C}_{\overline{D}_{f,sq}}$ for odd $m$}
\label{t-cooddsq}
\begin{center}
\begin{tabular}{|c|c|}
\hline
Weight& Mulitiplicity\\
\hline
0& 1\\
\hline
$\frac{p-1}{2} p^{m-2}$&
$p^{m-1}-1$\\
\hline
$\frac{1}{2}[(p-1)p^{m-2}+\varepsilon (1+p^*)
\sqrt{p^*}^{m-3}]$&
$\frac{p-1}{2}[p^{m-1}+
\varepsilon {-1 \overwithdelims () p}\sqrt{p^*}^{m-1}]$\\
\hline
$\frac{1}{2}[(p-1)p^{m-2}+\varepsilon
({p^*}-1)\sqrt{p^*}^{m-3}]$&
$\frac{p-1}{2}[p^{m-1}-
\varepsilon {-1 \overwithdelims () p} \sqrt{p^*}^{m-1}]$\\
\hline
\end{tabular}
\end{center}
\end{table}
\end{corollary}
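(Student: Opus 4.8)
The plan is to deduce this corollary from Theorem \ref{oddsq} by the same puncturing argument already used for Corollaries \ref{coeven} and \ref{cooddnsq}, exploiting the multiplicative structure of $\mathcal{RF}$. First I would record the key structural fact: since $f(x)\in\mathcal{RF}$ satisfies $f(ax)=a^hf(x)$ with $h$ even, the scalar $a^h$ is a nonzero square for every $a\in\mathbb{F}_p^{\times}$, so $f(x)\in\mathbb{F}_p^{\times 2}$ if and only if $f(ax)\in\mathbb{F}_p^{\times 2}$. Hence $\mathbb{F}_p^{\times}$ acts on $D_{f,sq}$ by scalar multiplication, each orbit $\{ad:a\in\mathbb{F}_p^{\times}\}$ has exactly $p-1$ elements, and by construction $\overline{D}_{f,sq}$ is a complete set of orbit representatives. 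Since Theorem \ref{oddsq} gives $|D_{f,sq}|=\frac{p-1}{2}(p^{m-1}+\varepsilon\sqrt{p^*}^{m-1})$, this yields $|\overline{D}_{f,sq}|=|D_{f,sq}|/(p-1)=\tfrac{1}{2}(p^{m-1}+\varepsilon\sqrt{p^*}^{m-1})$, which is the claimed length.

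Next I would compare the two codes coordinate by coordinate. For a fixed $\beta\in\mathbb{F}_q$ and a representative $d\in\overline{D}_{f,sq}$, the coordinates of the codeword indexed by $\beta$ along the orbit $\mathbb{F}_p^{\times}d$ are $\mathrm{Tr}_1^m(\beta(ad))=a\,\mathrm{Tr}_1^m(\beta d)$ for $a\in\mathbb{F}_p^{\times}$. If $\mathrm{Tr}_1^m(\beta d)=0$ all these coordinates vanish; if $\mathrm{Tr}_1^m(\beta d)\neq0$ then $a\,\mathrm{Tr}_1^m(\beta d)$ runs over all of $\mathbb{F}_p^{\times}$, so the orbit contributes weight $p-1$. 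Thus each orbit contributes $p-1$ times what its representative contributes in $\mathcal{C}_{\overline{D}_{f,sq}}$, and summing over orbits shows that the weight of any codeword in $\mathcal{C}_{D_{f,sq}}$ is exactly $p-1$ times the weight of the corresponding codeword in $\mathcal{C}_{\overline{D}_{f,sq}}$, for every $\beta$. Consequently the weight distribution of $\mathcal{C}_{\overline{D}_{f,sq}}$ is obtained from Table \ref{t-oddsq} by dividing every nonzero weight by $p-1$ while keeping the multiplicities unchanged; carrying out this division on the three weights of Theorem \ref{oddsq} reproduces exactly the entries of Table \ref{t-cooddsq}.

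Finally I would confirm the dimension is still $m$. The weight relation above shows that a codeword is zero in $\mathcal{C}_{\overline{D}_{f,sq}}$ if and only if it is zero in $\mathcal{C}_{D_{f,sq}}$; since the latter has a unique such $\beta$ by Theorem \ref{oddsq}, the zero weight still occurs with multiplicity one, no two distinct codewords collapse under puncturing, and therefore $\dim\mathcal{C}_{\overline{D}_{f,sq}}=m$. The only delicate point, and the step I would verify most carefully, is precisely this preservation of dimension: I must ensure that restricting to orbit representatives neither merges distinct codewords nor creates spurious zero codewords, which is exactly what the coordinate-by-coordinate weight relation guarantees.
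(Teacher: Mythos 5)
Your proposal is correct and follows essentially the same route as the paper: the paper derives Corollary \ref{cooddsq} directly from Theorem \ref{oddsq} via the puncturing construction of $\overline{D}_{f,sq}$ (using that $h$ is even, so $f(ax)\in\mathbb{F}_p^{\times 2}$ iff $f(x)\in\mathbb{F}_p^{\times 2}$), which divides the length and each nonzero weight by $p-1$ while preserving multiplicities and the dimension. Your explicit verification of the orbit structure, the coordinate-by-coordinate weight relation, and the preservation of dimension merely spells out the details the paper leaves implicit.
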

\begin{proof}
From Theorem \ref{oddsq}, this corollary follows.
\end{proof}
\begin{example}
Let $p=3$, $m=3$, and $f(x)
=\mathrm{Tr}_1^3(wx^{2})$, where
$w$ is a primitive element of $\mathbb{F}_{3^3}$.
The sign of the Walsh transform of $f(x)$ is
$\varepsilon =-1$. Then
$\mathcal{C}_{\overline{D}_{f,sq}}$ in Corollary \ref{cooddsq} has  parameters
$[6,3,3]$ and the   weight enumerator
$1+8z^{3}+6z^{4}+12z^5$, which is verified by the Magma program.  This code is optimal due to the Griesmer bound.
\end{example}
\begin{example}
Let $p=5$, $m=3$, and $f(x)
=\mathrm{Tr}_1^3(wx^{2})$, where
$w$ is a primitive element of $\mathbb{F}_{5^3}$.
The sign of the Walsh transform of $f(x)$ is
$\varepsilon =-1$. Then
$\mathcal{C}_{\overline{D}_{f,sq}}$ in Corollary \ref{cooddsq} has  parameters
$[10,3,7]$ and the   weight enumerator
$1+40z^{7}+60z^{8}+24z^{10}$, which is verified by the Magma program.  This code is optimal due to the Griesmer bound.
\end{example}
\begin{example}
Let $p=7$, $m=3$, and $f(x)
=\mathrm{Tr}_1^3(x^{2})$, where
$w$ is a primitive element of $\mathbb{F}_{7^3}$.
The sign of the Walsh transform of $f(x)$ is
$\varepsilon =1$. Then
$\mathcal{C}_{\overline{D}_{f,sq}}$ in Corollary \ref{cooddsq} has  parameters
$[21,3,17]$ and the  weight enumerator
$1+126z^{17}+168z^{18}+48z^{21}$, which is verified by the Magma program.  This code is optimal due to the Griesmer bound.
\end{example}

\section{The sign of Walsh transform of
some weakly regular bent functions }
In this section, we
summarizes all known weakly regular bent functions over
$\mathbb{F}_{p^m}$ with odd characteristic $p$ in Table \ref{Tab-knownbent},
and aim at determining the sign of Walsh transform for some known weakly regular bent functions, which give parameters of linear codes from these functions.
Quadratic bent functions in Table \ref{Tab-knownbent}, as a class of weakly regular bent functions,
have been used in
\cite{ZLFH2015} to construct linear codes.
The sign of the Walsh transform of these functions is given in
\cite[Proposition 1]{HK2006}.

\begin{table}[ht]
\caption{Known weakly regular bent functions over $\mathbb{F}_{p^m}$, $p$ odd}\label{Tab-knownbent}
\begin{center}{
\begin{tabular}{|c|c|c|c|}
  \hline
  Bent Function & $m$  & $p$ & Reference\\
  \hline
$\sum\limits_{i=0}^{\lfloor m/2\rfloor}\mathrm{Tr}_1^m(c_ix^{p^i+1})$ & arbitrary  & arbitrary & \cite{HK2006,Khoo-GS,Li-TH}, etc \\
  \hline
$\sum\limits_{i=0}^{p^k-1}
\mathrm{Tr}_1^m(c_ix^{i(p^k-1)})+
\mathrm{Tr}_1^{\ell}(\epsilon x^{\frac{p^m-1}{e}})$ & $m=2k$  & arbitrary & \cite{HK2006,Jia-ZHL,Li-HTK}, etc \\
  \hline
  $\mathrm{Tr}_1^m(cx^{\frac{3^m-1}{4}+3^k+1})$ & $m=2k$  & $p=3$ & \cite{Helleseth-K2005} \\
  \hline
  $\mathrm{Tr}_1^m(x^{p^{3k}+p^{2k}-p^k+1}+x^2)$ & $m=4k$  & arbitrary & \cite{HK2010} \\
  \hline
  $\mathrm{Tr}_1^m(cx^{\frac{3^i+1}{2}})$, $i$ odd, $\gcd(i,n)=1$&  arbitrary & $p=3$ & \cite{CM1997} \\
  \hline
\end{tabular}}
\end{center}
\end{table}

\subsection{Linear codes from Dillon type bent functions}
In this subsection, for even $m=2k$ and odd prime $p$, we consider the Dillon type of bent functions, namely the functions of the form
\begin{eqnarray}\label{eq-Dillon}
f(x)=\sum\limits_{i=1}^{p^k-1}
\mathrm{Tr}_1^m(c_ix^{i(p^k-1)})+
\mathrm{Tr}_1^{\ell}( \delta  x^{\frac{p^m-1}{e}}),
\end{eqnarray}
where $e|p^k+1$, $c_i\in\mathbb{F}_{p^m}$ for $i=0,1,\cdots,p^k-1$, $\delta\in\mathbb{F}_p^{l}$ and $l$ is the smallest positive integer such that $l|m$ and $e|(p^{l}-1)$.

Helleseth and Kholosha first investigated the non-binary Dillon type monomial bent function of the form $f(x)=\mathrm{Tr}_1^m(cx^{i(p^k-1)})$ in \cite{HK2006} and proved that such bent functions only exist for $p=3$. Later, this result was generalized by Jia et al. by adding a short trace term on $\mathrm{Tr}_1^m(cx^{i(p^k-1)})$ and showed that new binomial bent functions can be found for $p\ge 3$ \cite{Jia-ZHL}. In 2013, Li et al. considered a general form of Dillon type of functions defined by Equation (\ref{eq-Dillon}) and derived more Dillon type of bent functions from carrying out some suitable manipulation on certain exponential sums for any prime $p$ \cite{Li-HTK}.

For a function $f(x)$ defined by Equation (\ref{eq-Dillon}), it is obviously
that $f(0)=0$. For any $a\in
\mathbb{F}_p^{\times}$, $a^{p-1}
=1$ and $f(ax)=f(x)=a^{p-1}f(x)$
satisfying that $((p-1)-1,p-1)=1$.

It has been proved in \cite[Theorem 1]{Li-HTK} that each bent function of the form  (\ref{eq-Dillon}) is regular bent.  Moreover, its Walsh transform value satisfies $$\mathcal{W}_{f}(0)=p^k\zeta_p^{f(0)}
=(-1)^{\frac{(p-1)m}{4}}\sqrt{p^*}^m.$$
Hence,  $f(x)\in \mathcal{RF}$ and
the sign $\varepsilon$ of the Walsh transform is $(-1)^{\frac{(p-1)m}{4}}$.
From results in Section \Rmnum{3} and \Rmnum{4},
the weight distributions of these codes $\mathcal{C}_{D_f}$,
$\mathcal{C}_{\overline{D}_f}$,
$\mathcal{C}_{{D}_{f,nsq}}$,
$\mathcal{C}_{\overline{D}_{f,nsq}}$,
$\mathcal{C}_{{D}_{f,sq}}$
and $\mathcal{C}_{\overline{D}_{f,sq}}$
from the function $f(x)$ defined by
Equation (\ref{eq-Dillon}) can be obtained.

\subsection{Linear codes from Helleseth-Kholosha ternary monomial bent functions}

Let $m=2k$ with $k$ odd, $p=3$ and $\alpha$ be a primitive element of $\mathbb{F}_{p^m}$. Helleseth and Kholosha conjectured in \cite{Helleseth-K2005,HK2006} in 2006 that the function
\begin{eqnarray}\label{eq-HK-f1}
f(x)=\mathrm{Tr}_1^m(cx^{\frac{3^m-1}{4}+3^k+1})
\end{eqnarray}
is a weakly regular bent function if $c=\alpha^{\frac{3^k+1}{4}}$, and for any $\beta\in\mathbb{F}_{p^m}$, its Walsh transform is equal to
\begin{eqnarray}\label{hkwf}
\mathcal{W}_f(\beta)=-3^k
\zeta_3^{\pm\mathrm{Tr}_1^k
(\frac{\beta^{3^k+1}}{\alpha(\delta+1)})},~~
~\delta=\alpha^{\frac{3^m-1}{4}}.
\end{eqnarray}

A partial proof of this conjecture can be found in \cite{Helleseth-K2005}. The  weakly regular bentness of $f(x)$ defined by Equation (\ref{eq-HK-f1}) was first proved by Helleseth et al. in 2009 based on the Stickelberger's theorem \cite{HHKWX2009}. In 2012, Gong et al. proved this conjecture from a different approach and solved the sign problem in the dual function of $f(x)$  by finding the trace representation of the dual function \cite{GHHK2012}.

Since $m=2k$, $2|\frac{3^m-1}{4}$ and
$\frac{3^m-1}{4}+3^k+1$ is even. Then
$f(x)=f(-x)$. For any $a\in \mathbb{F}_3^{
\times}$, $f(ax)=f(x)=
a^2f(x)$ and $(2-1,3-1)=1$. It is obviously
that $f(0)=0$. Hence, $f(x)\in
\mathcal{RF}$. From Equation
(\ref{hkwf}), $\mathcal{W}_f(0)
=-3^k=(-1)^{\frac{m}{2}+1}
\sqrt{3^*}^m$. The sign of
Walsh transform of $f(x)$ is
$(-1)^{\frac{m}{2}+1}$. From results in Section \Rmnum{3} and \Rmnum{4},
the weight distributions of these codes $\mathcal{C}_{D_f}$,
$\mathcal{C}_{\overline{D}_f}$,
$\mathcal{C}_{{D}_{f,nsq}}$,
$\mathcal{C}_{\overline{D}_{f,nsq}}$,
$\mathcal{C}_{{D}_{f,sq}}$
and $\mathcal{C}_{\overline{D}_{f,sq}}$
from the function $f(x)$ defined by
Equation (\ref{eq-HK-f1}) can be obtained.

\subsection{Linear codes from Helleseth-Kholosha $p$-ary binomial bent functions}

An infinite class of weakly regular binomial bent functions was found by Helleseth and  Kholosha in \cite{HK2010}  in 2010 for an arbitrary odd characteristic as follows. Let $m=4k$ and $p$ be an odd prime. Then the function
\begin{eqnarray}\label{eq-HK-f2}
 % \nonumber to remove numbering (before each equation)
f(x)=
\mathrm{Tr}_1^m(x^{p^{3k}+p^{2k}-p^k+1}+x^2)
\end{eqnarray}
is a weakly regular bent function. Moreover, for any $\beta\in\mathbb{F}_{p^m}$ its Walsh transform is equal to
\begin{eqnarray}\label{eq-HK-f2-wf}
 % \nonumber to remove numbering (before each equation)
\mathcal{W}_f(b)=-p^{2k}\zeta_p^{
\mathrm{Tr}_1^k(x_0)/4},
\end{eqnarray}
where $x_0$ is the unique solution in $\mathbb{F}_{p^k}$ of the polynomial
\begin{eqnarray*}
 % \nonumber to remove numbering (before each equation)
\phi_b(x):=b^{p^{2k}+1}+
(b^2+x)^{\frac{p^{2k}+1}{2}}+
b^{p^k(p^{2k}+1)}+
(b^2+x)^{\frac{p^k(p^{2k}+1)}{2}}.
\end{eqnarray*}

For a function $f(x)$ defined in
Equation (\ref{eq-HK-f2}), $f(0)=0$.
Note that $(p^{3k}+p^{2k}-p^k+1,p-1)
=(2,p-1)=2$.  For any
$a\in \mathbb{F}_p^{\times}$,
$f(ax)=a^2f(x)$ and $(2-1,p-1)=1$. Hence,
$f(x)\in \mathcal{RF}$. From
Equation (\ref{eq-HK-f2-wf}), we have
$$
\mathcal{W}_f(0)=-p^{2k}
=-\sqrt{p^*}^m.
$$
The sign of the Walsh transform of $f(x)$
is $-1$.
From results in Section \Rmnum{3} and \Rmnum{4},
the weight distributions of these codes $\mathcal{C}_{D_f}$,
$\mathcal{C}_{\overline{D}_f}$,
$\mathcal{C}_{{D}_{f,nsq}}$,
$\mathcal{C}_{\overline{D}_{f,nsq}}$,
$\mathcal{C}_{{D}_{f,sq}}$
and $\mathcal{C}_{\overline{D}_{f,sq}}$
from the function $f(x)$ defined by
Equation (\ref{eq-HK-f2})  can be obtained.

\subsection{Linear Codes From Coulter-Matthews Ternary Monomial Bent Functions}

It is well known that every component function $\mathrm{Tr}_1^m(c\pi(x)), c\in\mathbb{F}_{p^m}^{\times}$ of a planar function $\pi(x)$ over $\mathbb{F}_{p^m}$ is a bent function. Thus, for any planar function $\pi(x)$ one can obtain that $f(x)=\mathrm{Tr}_1^m(c\pi(x))$ is a bent function for any nonzero $c\in\mathbb{F}_{p^m}$.  This is another approach of constructions of bent functions. However, the construction of planar functions is a hard problem and until now there are only few known such functions (see \cite{Zha-KW} for example for a summary of known constructions). Note that all the known planar functions have the form of $\sum_{i,j=0}^{m-1}c_{ij}x^{p^i+p^j}$ where $c_{ij}\in\mathbb{F}_{p^m}$ with only one exception, namely the Coulter-Matthews class of functions of the form $x^{\frac{3^i+1}{2}}$ where $i$ is odd and $(i,m)=1$. Since the quadratic bent functions have been discussed in the previous subsetion, thus we only need consider the Coulter-Matthews class of bent functions of the form
\begin{eqnarray}\label{eq-CM}
 % \nonumber to remove numbering (before each equation)
  f(x)=\mathrm{Tr}_1^m(cx^{\frac{3^i+1}{2}})
\end{eqnarray}
in this subsection, where $c\in\mathbb{F}_{p^m}^{\times}$, $i$ is odd and $(i,m)=1$.

In 2009, Helleseth et al. in \cite{HHKWX2009}  proved that the bent function $f(x)$ defined by Equation (\ref{eq-CM}) is weakly regular bent according to Stickelberger's theorem and they did not discussed its dual.
For $f(x)$ defined by Equation (\ref{eq-CM}), $f(0)=0$.
Since $i$ is odd, $2|\frac{3^i+1}{2}$
and $f(x)=f(-x)$. For any
$\alpha \in \mathbb{F}_3^{\times}$,
$f(ax)=f(x)=a^2f(x)$ and $(2-1,3-1)=1$.
Hence, $f(x)\in \mathcal{RF}$.
Let $u=(\frac{3^i+1}{2},3^m-1)$, then
$u|(3^{2i}-1,3^m-1)$ and
$u|3^{(2i,m)}-1$. From
$(i,m)=1$, $u|3^{(2,m)}-1$.
Then $u|8$. From $2|\frac{3^i+1}{2}$,
$u\in \{2,4,8\}$. Since
$i$ is odd, $3^i+1=
3^{2\frac{i-1}{2}}3+1
\equiv 4 \mod 8$ and
$\frac{3^i+1}{2}\equiv 2\mod 4$.
Then $(\frac{3^i+1}{2}, 3^m-1)
=2$ and
$$
\mathcal{W}_f(0)
=\sum_{x\in \mathbb{F}_{q}}
\zeta_p^{\mathrm{Tr}_1^m(cx^{(3^i+1)/2})}
=\sum_{x\in \mathbb{F}_{q}}
\zeta_p^{\mathrm{Tr}_1^m(cx^{2})}.
$$
From Lemma \ref{2lem2},
$$
\mathcal{W}_f(0)
=(-1)^{m-1}\eta(c)\sqrt{p^*}^m.
$$
The sign $\varepsilon$ of Walsh transform of $f(x)$ is
$(-1)^{m-1}\eta(c)$.
From results in Section \Rmnum{3} and
\Rmnum{4},
the weight distributions of these codes $\mathcal{C}_{D_f}$,
$\mathcal{C}_{\overline{D}_f}$,
$\mathcal{C}_{{D}_{f,nsq}}$,
$\mathcal{C}_{\overline{D}_{f,nsq}}$,
$\mathcal{C}_{{D}_{f,sq}}$
and $\mathcal{C}_{\overline{D}_{f,sq}}$
from the function $f(x)$ defined by
Equation (\ref{eq-CM}) can be obtained.

\section{Conclusion}

In this paper, we construct  linear codes with two  or three weights from weakly regular bent functions. We first generalize the
constructing method of Ding et al.\cite{DD2015} and
Zhou et al. \cite{ZLFH2015} and determine
the weight distributions of these linear codes. We solve the open problem proposed by Ding et al.
\cite{DD2015}.

Further,  by choosing defining sets different from
Ding et al.\cite{DD2015} and
Zhou et al. \cite{ZLFH2015},
we  construct   new  linear codes
 with two weights or three weights from weakly regular bent functions, which contain some optimal linear codes with parameters meeting certain bound on linear codes. The weight distributions of these codes are determined by
the sign of the Walsh transfrom of weakly regular bent functions.
The two-weight codes in this paper can be used in
strongly regular graphs with the method in \cite{CK1986}, and
the three-weight codes in this paper can give association schemes introduced in \cite{CG1984}.
The following work will  study how to construct  the linear codes with few weights from more general function
$f(x)$.

\section*{Acknowledgment}
This work was supported by
the Natural Science Foundation of China
(Grant No. 11401480, No.10990011 \& No. 61272499).
Y. Qi also acknowledges support from
KSY075614050 of Hangzhou Dianzi University.
The work of N. Li and T. Helleseth was
supported by the Norwegian Research Council.

% use section* for acknowledgement

% Can use something like this to put references on a page
% by themselves when using endfloat and the captionsoff option.
\ifCLASSOPTIONcaptionsoff
  \newpage
\fi

\end{document}